\title{Refinements for Multiparty Message-Passing Protocols}
\titlerunning{Refinements for Multiparty Message-Passing Protocols}
\author{Martin Vassor}{University of Oxford, UK}
	{martin.vassor@cs.ox.ac.uk}{https://orcid.org/0000-0002-2057-0495}{}
\author{Nobuko Yoshida}{University of Oxford, UK}
	{nobuko.yoshida@cs.ox.ac.uk}{https://orcid.org/0000-0002-3925-8557}{}
\authorrunning{M. Vassor and N. Yoshida}
\keywords{Message-Passing Concurrency, Session Types, Specification} 
  \newcommand{\inApp}[1]{\cref{#1}}
  \newcommand{\inAppOutside}[1]{\cref{#1}}
	\newcommand{\inApp}[1]{\cite{Vassor_refinements_2024_full}}
  \newcommand{\inAppOutside}[1]{\cite{Vassor_refinements_2024_full}}
\DeclareSymbolFont{euleroperators}{U}{eur}{m}{n}
\renewcommand{\operator@font}{\mathgroup\symeuleroperators}
\newcommand{\proj}[2]{\gSessionType{#2}{\color{black}\upharpoonright}_{\participant{#1}}}
\newcommand{\merge}[2]{\lSessionType{\sqcap}_{\color{black}#2}{\lSessionType{#1}}}
\newcommand{\setof}[1]{\{{#1}\}}
\newcommand{\setcomp}[2]{\{#1\mathrel{|}#2\}}
\newcommand{\fv}[1]{\refinement{\operatorname{fv}(#1)}}
\renewcommand{\emptyset}{\varnothing}
\newcommand{\rustinline}[2][]{\lstinline[language=Rust, style=colouredRust, #1]{#2}}
\newcommand{\bind}[2]{\nicefrac{#2}{#1}}%
\newcommand{\substenum}[1]{\mathord{\left\{{#1}\right\}}}%
\newcommand{\subst}[2]{\substenum{\bind{#1}{#2}}}%
\newcommand{\ifempty}[3]{\ifthenelse{\isempty{#1}}{#2}{#3}}
\NewDocumentCommand{\red}{m m m m m m o}{
	#1 \xrightarrow{\ltsLabel{#5}{#6}{#4}{#3}[#7]} #2
}
\newcommand{\xRightarrow}[2][]{\ext@arrow 0359\Rightarrowfill@{#1}{#2}}
\newcommand{\csred}[1][]{\xRightarrow{{#1}}} 
\newcommand{\redact}[3]{#1\xrightarrow{#3}{#2}}
\newcommand{\grrec}{\textsc{GRRec}}
\newcommand{\grsnd}{\textsc{GRSnd}}
\newcommand{\drec}{\textsc{DRec}}
\newcommand{\dsnd}{\textsc{DSnd}}
  \newcommand{\withcolor}[2]{#2} 
  \newcommand{\withcolor}[2]{{\color{#1} #2}}
\definecolor{amethyst}{rgb}{0.6, 0.4, 0.8}
\newcommand{\colorpart}{Teal} 
\newcommand{\colorexp}{Blue} 
\newcommand{\colorgp}{VioletRed} 
\newcommand{\dpt}[1]{\withcolor{\colorpart}{\sf #1}} 
\newcommand{\dexp}[1]{\withcolor{\colorexp}{#1}}
\makeVars{\lblL}{dlbl}{l}
\NewDocumentCommand{\ltsLabelGeneric}{m m m o}{
	\ltsLabel{#1}{#2}{#3}{\dagger}[#4]
}
\NewDocumentCommand{\ltsLabelSend}{m m m o}{\ltsLabel{#1}{#2}{#3}{!}[#4]}
\NewDocumentCommand{\ltsLabelRecv}{m m m o}{\ltsLabel{#1}{#2}{#3}{?}[#4]}
\NewDocumentCommand{\ltsLabel}{m m m m o}{
	\dpt{#1}
	\withcolor{black}{\mathrel{#4}}
	\dpt{#2}
	\withcolor{black}{\msgM[\langle #3 \rangle]}
	\IfValueT{#5}{
		\withcolor{black}{: \refinement{#5}}
	}
}
\newcommand{\actionGeneric}[4]{\withcolor{black}{\ltsLabelGeneric{#1}{#2}{#3}:\refinement{#4}}}
\newcommand{\actionSend}[4]{\withcolor{black}{\ltsLabelSend{#1}{#2}{#3}:\refinement{#4}}}
\newcommand{\actionReceive}[4]{\withcolor{black}{\ltsLabelRecv{#1}{#2}{#3}:\refinement{#4}}}
\newcommand{\act}[1][\alpha]{{\withcolor{Red!50!Black}{#1}}}
\newcommand{\traceRaw}[1]{{\withcolor{yellow!50!black}{#1}}}
\newcommand{\trace}[1][\tau]{{\traceRaw{#1}}}
\newcommand{\tempty}{\trace[\epsilon]}
\newcommand{\tconcat}[2]{\trace[{#1} \cdot {#2}]}
\newcommand{\tconcats}[2]{\trace[{#1} \cdot\withcolor{black}{\ldots}\cdot {#2}]}
\newcommand{\traceP}{\trace}
\newcommand{\tracePi}{\trace^\prime}
\newcommand{\fifo}[1][f]{#1}
\newcommand{\fconcat}[2]{\fifo[{#1} \withcolor{Red}{::} {#2}]}
\newcommand{\fconcats}[2]{\fconcat{#1}{\fconcat{\dots}{#2}}}
\newcommand{\push}[2]{\operatorname{enq}(\fifo[#1], \fifo[#2])}
\newcommand{\pop}[1]{\operatorname{deq}(\fifo[#1])}
\newcommand{\nextElem}[1]{\operatorname{next}(\fifo[#1])}
\newcommand{\emptyfifo}{\fifo[\varepsilon]}
\newcommand{\queue}[1]{\withcolor{Purple}{#1}} 
\newcommand{\lookupQueue}[3]{\fifo[\queue{#1}_{(\participant{#2}, \participant{#3})}]}
\newcommand{\pushQueue}[4]{\operatorname{enq}_{(\participant{#3},
\participant{#4})}(\queue{#1}, \fifo[#2])}
\newcommand{\popQueue}[3]{\operatorname{deq}_{(\participant{#2},
\participant{#3})}(\queue{#1})}
\newcommand{\nextElemQueue}[3]{\operatorname{next}_{(\participant{#2}, \participant{#3})}(\queue{#1})}
\newcommand{\emptyqueue}{\queue{w_\emptyset}}
\newcommand{\map}[1]{\withcolor{Green}{#1}} 
\newcommand{\lookupMap}[2]{\map{#1}\withcolor{Black}{(}\processVariable{#2}\withcolor{Black}{)}}
\newcommand{\updateMap}[3]{\map{#1}\withcolor{Black}{[}\processVariable{#2}\withcolor{Black}{\mapsto} \val{#3}\withcolor{Black}{]}} 
\newcommand{\domMap}[1]{\operatorname{dom}(\map{#1})}
\newcommand{\emptyMap}{\map{M_\emptyset}}
\newcommand{\removeMap}[2]{\map{#1}\withcolor{Black}{\backslash}\processVariable{#2}}
\newcommand{\varsV}[1][]{\map{V_{#1}}}
\newcommand{\setMsgs}[1][]{\withcolor{purple}{\mathbb{M}_{#1}}}
\newcommand{\msgM}[1][m]{\withcolor{purple}{#1}}
\newcommand{\setVals}{\val{\set{C}}}
\newcommand{\setActions}[1][]{{\act[\setActs_{#1}]}}
\newcommand{\setRoles}{\dpt{\mathbb{P}}}
\newcommand{\rolesOfGT}[1]{\operatorname{parts}(\gtFmt{#1})}
\newcommand{\refinement}[1]{\withcolor{RoyalBlue}{#1}}
\newcommand{\refeval}[1]{\refinement{\operatorname{eval}}(\refinement{#1})} 
\newcommand{\taut}{\refinement{\top}}
\NewDocumentCommand{\refinedTypedVariable}{o o m o}{ 
	\IfValueTF{#4}{
		\IfValueTF{#1}{
			\typedVariable[#1][#2]{#3}\models \refinement{{#4}_{#1}}
		}{
			\typedVariable[#1][#2]{#3}\models \refinement{{#4}}
		}
	}{
		\typedVariable[#1][#2]{#3}
	}
}
\NewDocumentCommand{\typedVariable}{o o m}{ 
	\IfValueTF{#2}{
		\processVariable[#1]{#2}: \processType[#1]{#3}
	}{
		\processType[#1]{#3}
	}
}
\NewDocumentCommand{\processVariable}{o m}{ 
	\IfValueTF{#1}{
		\notblank{#1}{
			\withcolor{blue}{{#2}_{#1}}
		}{
			\withcolor{blue}{{#2}}
		}
	}{
		\withcolor{blue}{{#2}}
	}
}
\NewDocumentCommand{\processType}{o m}{ 
	\IfValueTF{#1}{
		\withcolor{DarkOrchid}{{\tt #2}_{#1}}
	}{
		\withcolor{DarkOrchid}{{\tt #2}}
	}
}
\newcommand{\typeVariable}[1]{\withcolor{red}{\mathbf{#1}}} 
\newcommand{\gSessionType}[1]{\withcolor{VioletRed}{#1}}
\newcommand{\participant}[1]{\withcolor{Teal}{{\rm \sf #1}}}
\NewDocumentCommand{\commChoiceNoCont}{o m o m o}{
	\IfValueTF{#1}{
		\gtLbl{{#2}_{#1}} (\refinedTypedVariable[#1][#3]{#4}[#5])
	}{
		\gtLbl{{#2}} (\refinedTypedVariable[#1][#3]{#4}[#5])
	}
}
\NewDocumentCommand{\commChoice}{o m o m o m}{
	\IfValueTF{#1}{
		\commChoiceNoCont[#1]{#2}[#3]{#4}[#5] . {#6}_{#1}
	}{
		\commChoiceNoCont[#1]{#2}[#3]{#4}[#5] . {#6}
	}
}
\newcommand{\gtFmt}[1]{\withcolor{\colorgp}{#1}}
\newcommand{\gtLbl}[1]{\withcolor{orange}{\mathit{#1}}}
\NewDocumentCommand{\gtCommRaw}{o o m m m} {
	\IfNoValueTF{#1}{%
		{{\participant{#3}} \to {\participant{#4}} \left\{ {#5} \right\} }
	}{%
		\IfNoValueTF{#2}{
				{\participant{#3} \to \participant{#4} {\left\{ {#5} \right\}}_{\indexstyle{#1}}}
			}{
				{\participant{#3} \to \participant{#4} {\left\{ {#5} \right\}}_{\indexstyle{#1 \in #2}}}
			}
	}
}
\NewDocumentCommand{\gtComm}{o o m m m o m o m}{
	\IfNoValueTF{#1}{%
		\gtFmt{\gtCommRaw{#3}{#4}{\commChoice{#5}[#6]{#7}[#8]{#9}}}
	}{%
		\IfNoValueTF{#2}{
				\gtFmt{\gtCommRaw[#1]{#3}{#4}{\commChoice{#5}[#6]{#7}[#8]{#9}}}
			}{
				\gtFmt{\gtCommRaw[#1][#2]{#3}{#4}{\commChoice[#1]{#5}[#6]{#7}[#8]{#9}}}
			}
	}
}
\newcommand{\gtRec}[2]{\gtFmt{%
  \mu\typeVariable{#1}.{#2}
}}
\newcommand{\gtRecVar}[1]{\gtFmt{%
    \typeVariable{#1}
  }}
\newcommand{\gtEnd}{\gtFmt{\texttt{end}}}
\makeVars{\gtG}{gtFmt}{G}
\newcommand{\lSessionType}[1]{\withcolor{amethyst}{#1}}
\newcommand{\ltFmt}[1]{\withcolor{\colortp}{#1}} 
\NewDocumentCommand{\ltIntC}{o o m m m m o m}{
	\IfNoValueTF{#1}{
		\ltFmt{%
			{\participant{#3}} {\oplus}\mspace{-1mu} \{ \commChoiceNoCont{#4}[#5]{#6}[#7] . {#8} \}
		}
	}{
		\IfNoValueTF{#2}{
			\ltFmt{%
				{\participant{#3}} {\oplus}\mspace{-1mu} \{ \commChoiceNoCont{#4}[#5]{#6}[#7] . {#8} \}_{\withcolor{black}{#1}}
			}
		}{
			\ltFmt{%
				{\participant{#3}} {\oplus}\mspace{-1mu} \{ \commChoiceNoCont[#1]{#4}[#5]{#6}[#7] . {#8_#1} \}_{\withcolor{black}{{#1}\in {#2}}}
			}
		}
	}
}
\NewDocumentCommand{\ltIntCRaw}{o o m m} {
	\IfNoValueTF{#1}{%
		{{\participant{#3}} {\oplus}\mspace{-1mu} \left\{ {#4} \right\} }
	}{%
		\IfNoValueTF{#2}{
				{{\participant{#3}} {\oplus}\mspace{-1mu} {\left\{ {#4} \right\}}_{\indexstyle{#1}}}
			}{
				{{\participant{#3}} {\oplus}\mspace{-1mu} {\left\{ {#4} \right\}}_{\indexstyle{#1 \in #2}}}
			}
	}
}
\NewDocumentCommand{\ltExtC}{o o m m m m m m}{
	\IfNoValueTF{#1}{
		\ltFmt{%
			{\participant{#3}} {\&}\mspace{-1mu} \left\{ \commChoiceNoCont{#4}[#5]{#6}[#7] . {#8} \right\}
		}
	}{
		\IfNoValueTF{#2}{
			\ltFmt{%
				{\participant{#3}} {\&}\mspace{-1mu} \left\{ \commChoiceNoCont{#4}[#5]{#6}[#7] . {#8} \right\}_{\withcolor{black}{#1}}
			}
		}{
			\ltFmt{%
				{\participant{#3}} {\&}\mspace{-1mu} \left\{ \commChoiceNoCont[#1]{#4}[#5]{#6}[#7] . {#8_#1} \right\}_{\withcolor{black}{{#1}\in {#2}}}
			}
		}
	}
}
\NewDocumentCommand{\ltExtCRaw}{o o m m} {
	\IfNoValueTF{#1}{%
		{{\participant{#3}} {\&}\mspace{-1mu} \left\{ {#4} \right\} }
	}{%
		\IfNoValueTF{#2}{
				{{\participant{#3}} {\&}\mspace{-1mu} {\left\{ {#4} \right\}}_{\indexstyle{#1}}}
			}{
				{{\participant{#3}} {\&}\mspace{-1mu} {\left\{ {#4} \right\}}_{\indexstyle{#1 \in #2}}}
			}
	}
}
\newcommand{\ltRec}[2]{\ltFmt{%
\mu\typeVariable{#1}.{#2}
}}
\newcommand{\ltRecVar}[1]{\ltFmt{%
\typeVariable{#1}
}}
\newcommand{\ltEnd}{\ltFmt{\texttt{end}}}
\makeVars{\ltL}{ltFmt}{L}
\newcommand{\frv}[1]{\operatorname{frv}(#1)}
\newcommand{\indexstyle}[1]{\withcolor{Black}{#1}}
\newcommand{\occursLoc}[2]{\ltFmt{#1} \in \ltFmt{#2}}
\newcommand{\occursGlob}[2]{\gtFmt{#1} \in \gtFmt{#2}}
\newcommand{\val}[1]{{\color{Blue}#1}}
\newcommand{\tuple}[1]{\langle #1 \rangle}
\newcommand{\suchthat}{\cdot} 
\newcommand{\set}[1]{\mathbb{#1}}
\newcommand{\define}{\overset{{\rm def}}{=}} 
\newcommand{\powerset}[1]{\mathcal{P}(#1)}
\newcommand{\ltToAut}[1]{\mathcal{A}(\lSessionType{#1})}
\newcommand{\RcsOfType}[1]{\mathcal{S}(\gSessionType{#1})}
\newcommand{\ConfOfType}[1]{\mathcal{C}(\gSessionType{#1})}
\newcommand{\DConfOfType}[1]{\mathcal{D}(\gSessionType{#1})}
\newenvironment{inlineenum}[1][and]{
	\begin{enumerate*}[label={{\rm (\roman*)}}, before={},
	itemjoin={{; }}, itemjoin*={{; #1 }},]
}{
	\end{enumerate*}
}
  \hrule\vspace{1mm}}%
\hrule\vspace{1mm}\centerline{\bf --- End Discussion (and/or informal writing) ---}}%
\newcommand{\GG}[1]{\operatorname{GG}(\gtFmt{#1})}
\newcommand{\myparagraph}[1]{\smallskip\noindent\emph{\bf #1\ }}
\newcommand{\citet}{\cite}
\newcommand{\Citet}{\cite}
\begin{document}

	\maketitle

\begin{abstract}
	Multiparty message-passing protocols are notoriously difficult to design, due
to interaction mismatches that lead to errors such as deadlocks. Existing
protocol specification formats have been developed to prevent such errors
(e.g.\ multiparty session types (MPST)). In order to further constrain
protocols,
specifications can be extended with \emph{refinements}, i.e.\
logical predicates to control the behaviour of the protocol based on
previous values exchanged. Unfortunately, existing refinement theories and
implementations are tightly coupled with specification formats.

This paper proposes a framework for multiparty
message-passing protocols with refinements and its implementation in Rust.
Our work \emph{decouples} correctness of refinements from the underlying
model of computation, which results in a \emph{specification-agnostic}
framework.

Our contributions are threefold. First, we introduce a trace system
which characterises \emph{valid refined traces}, i.e.\ a sequence of sending
and receiving actions correct with respect to refinements. Second, we
give a correct model of computation named \emph{refined communicating system}
(RCS),
which is an extension of communicating automata systems with refinements. We
prove that RCS only produce valid refined traces. We show how to generate RCS
from mainstream protocol specification formats, such as \emph{refined
multiparty session types} (RMPST) or \emph{refined choreography automata}.
Third, we illustrate the flexibility of the framework by developing
both a static analysis technique and an improved model of computation for
dynamic refinement evaluation. Finally, we provide a Rust toolchain for
decentralised RMPST, evaluate our implementation with a set of benchmarks from
the literature, and observe that refinement overhead is negligible.

\end{abstract}

\AtEndOfEnv{theorem}{$\triangleleft$}
\AtEndOfEnv{definition}{$\triangleleft$}
\AtEndOfEnv{lemma}{$\triangleleft$}
\AtEndOfEnv{corollary}{$\triangleleft$}
\AtEndOfEnv{remark}{$\triangleleft$}

\section{Introduction}
Message passing programming is a notoriously difficult task with new bugs
arising with respect to sequential programming, for instance deadlocks.
To address this increased complexity, various specifications have been
introduced (e.g., message sequence charts \cite{Z120_MSC}, multiparty session
types \cite{yoshida_very_2020,honda_multiparty_2016,honda_multiparty_2008},
choreography automata \cite{COORDINATION20ChoreographyAutomata}). In general, specifications are used to constrain messages, in order to
prevent errors such as deadlocks (via message ordering) or payload mismatch (by
enforcing the sender and the receiver of a message to agree on the datatype
exchanged).

In this paper, we tackle an important and advanced aspect of protocol
specification, \emph{logical constraints} (or \emph{contracts})
on asynchronous message-passing
communications.
Contracts for heterogeneous systems are
predominant for correctly
designing, implementing, and composing
software services, and have
a long history in distributed software development
as found in Design-by-Contracts \cite{DbC91},
Service Level Agreements, and
Component-Based Software Engineering.
With contracts,
software designers can define more precise (refined)
and verifiable specifications for distributed software components.
Contracts have been investigated from a variety of
perspectives,
using many different analysis techniques and formalisms.
Our goal is to distill an essence of
those models for protocol refinements by answering
the following questions affirmatively:
\begin{inlineenum}
\item what does it mean for an execution of contracts for
message-passing systems
to be correct
\item how do we integrate a theory to a variety of models
\item how do we analyse their correctness?
\item how do we implement correct systems in a programming language?
\end{inlineenum}

To explain our framework,
consider a \emph{guessing game}
(from~\cite{OOPSLA20SessionStar})
with three participants where the
first one (participant \participant{A}) chooses a $\gtLbl{secret}$ integer and
sends it to the second participant (\participant{B}). Then, the third
participant (\participant{C}) tries to $\gtLbl{guess}$ this number. Depending
on the guess, \participant{B} replies with hints ($\gtLbl{more}$ and
$\gtLbl{less}$) until \participant{C} succeeds in guessing the
$\gtLbl{correct}$ value.

The developer writing the specification for such protocol would
like to ensure, \emph{in the specification}, that hints from \participant{B}
are consistent with the previous values exchanged. For instance, if the
$\gtLbl{secret}$ is $\val{5}$ and the guess is $\val{10}$, the
specification should constrain \participant{B} to send $\gtLbl{less}$.
\cref{fig:intro:guessing_game} shows a communication diagram of the protocol
with constraints (which we call \emph{refinements}) shown in light blue.

\begin{wrapfigure}{o}{0.45\textwidth}
	\centering
	\begin{messagepassing}[yscale=2.3, ]
		\newprocesswithlength{C}[\participant{C}]{4.3}
		\newprocesswithlength{B}[\participant{B}]{4.3}
		\newprocesswithlength{A}[\participant{A}]{4.3}

		\sendwithname{A}{0.5}{B}{0.5}{
			$\gtLbl{secret}(\processVariable{n}: \processType{int})$}[pos=.5]
		\sendwithname{C}{1}{B}{1}{
			$\gtLbl{guess}(\processVariable{x}: \processType{int})$}[pos=.5]
		\sendwithname{B}{2}{C}{2}{$\gtLbl{more}$}[pos=.5]
		\sendwithname{B}{3}{C}{3}{$\gtLbl{less}$}[pos=.5]
		\sendwithname{B}{4}{C}{4}{$\gtLbl{correct}$}[pos=.5]

		\draw[draw=black!50] (1.3, -0.8) rectangle (4.3, -3.2);

		\draw[dashed] (2.3, -0.8) -- (2.3, -3.2);
		\draw[dashed] (3.3, -0.8) -- (3.3, -3.2);
		\node[anchor=north west,
		fill opacity=0.8,
		text opacity=1,
		draw,
		fill=white,
		outer sep=0pt] at (2.3, -3.2) {or $\refinement{n<x}$};
		\node[anchor=north west,
		fill opacity=0.8,
		text opacity=1,
		draw,
		fill=white,
		outer sep=0pt] at (3.3, -3.2) {or $\refinement{n=x}$};
		\node[anchor=north west,
		fill=white,
		fill opacity=0.8,
		draw, text opacity=1,
		outer sep=0pt] at (1.3, -3.2) {choice $\refinement{n>x}$};

		\draw[->] (2, -1) to[bend right] (0.8, -1);
		\draw[->] (3, -1) to[bend right] (0.8, -1);
	\end{messagepassing}
	\caption{Communication diagram for the guessing game protocol with
	refinements.}
	\label{fig:intro:guessing_game}
	\vspace{-\baselineskip}
\end{wrapfigure}

In this paper, we develop a formal framework for
refinements, agnostic to any particular specification formalism. Its
core part is composed of a characterisation of
refinement correctness: \emph{Valid Refined Traces}, and a model
of computation: \emph{Refined Communicating Systems} (RCS), where communication
is asynchronous and refinements are
centrally and dynamically evaluated. For illustration, we use
\emph{Refined Multiparty Session Types} as the main specification format for
multiparty protocols.

In addition, we demonstrate the versatility of our framework with multiple
extensions. First, our framework can accommodate other protocol
specification formats (e.g.\ choreography
automata \cite{COORDINATION20ChoreographyAutomata}). Second, it is used as a
baseline for improved refinement evaluation: we present an optimised
model of computation (decentralised refinement evaluation).
Finally, it is also used as a baseline for implementing static analysis
techniques: we present a simple strategy for statically removing redundant
refinements.

\myparagraph{Valid Refined Traces.}
\label{sec:intro:valid_refined_traces}
The first building block is a common notion of \emph{correct} executions with
respect to added refinements.
We introduce \emph{valid refined traces} which are
consistent traces with respect to refinements. This approach allows us to
establish
a general notion of refinements, which is applicable to different logics
for constraints, type theories, models of computations, and programming
languages. We consider asynchronous communications (FIFOs), distinguishing
\emph{sending} and \emph{receiving} actions in traces.

To illustrate our approach, consider the guessing game example shown above.
Each execution of that protocol is recorded in a trace, i.e.\ a sequence of
the individual events that take place during the execution (c.f.\
\cref{sub:traces}).
For instance, a
possible trace of the first four events of the protocol is the following:\\[1mm]
\centerline{
	$\tconcat{
		\actionSend{A}{B}{\gtLbl{secret},
		\tuple{\processVariable{n},\val{5}}}{\taut}
	}{
	\tconcat{
		\actionReceive{A}{B}{\gtLbl{secret},
		\tuple{\processVariable{n},\val{5}}}{\taut}
	}{
	\tconcat{
		\actionSend{C}{B}{\gtLbl{guess},\tuple{\processVariable{x},\val{5}}}{\taut}
	}{
		\actionReceive{C}{B}{\gtLbl{guess},\tuple{\processVariable{x},\val{5}}}{\taut}
	}}}
$}\\[1mm]
This trace contains four actions, and each action records an event, i.e.\ a
message emission (denoted with a \(!\)) or reception (denoted with a \(?\)). For
instance, \(\actionSend{A}{B}{\gtLbl{secret},
\tuple{\processVariable{n},\val{5}}}{\top}\) records
\(\participant{A}\) sending a message
to
$\participant{B}$, the payload of this message is a variable
$\processVariable{n}$, which has value $\val{5}$. In the first four actions, we
do not need any constraint,
therefore actions are guarded by $\taut$ which denotes a tautology predicate.
The next action following this trace would be for \(\participant{B}\) to send
either \(\gtLbl{more}\), \(\gtLbl{correct}\), or \(\gtLbl{less}\).
Choosing \(\gtLbl{more}\) or \(\gtLbl{less}\) would be inconsistent with our
protocol, since \(\participant{C}\) guessed the correct number.
For instance, choosing $\gtLbl{more}$ would add the action
	$\actionSend{B}{C}{\gtLbl{more}}{n > x}$ at the end of the queue: the
	refinement $\refinement{n > x}$ would be violated, since
	$\processVariable{x} = \processVariable{n} = \val{5}$.

\emph{Valid Refined Traces} characterise consistency based on
the produced trace; and we aim to provide a model of computation
constrained in a way that prevents such inconsistent choices.

\myparagraph{Refined Communicating Systems.}
\label{sec:intro:RCS}
The second building block of our framework is a model of computation that
only produces correct traces.
\emph{Communicating Systems} (CS)
\cite{brand_communicating_1983} are a model of concurrent computation, where
\emph{Communicating Finite State Machines} communicate asynchronously using
unbounded FIFO queues.
CS are often used to model and implement
MPST \cite{Denielou_Multiparty_2012,ICALP13CFSM,cutner_deadlock-free_2022}.
We adapt CS to accommodate refinements, which we call \emph{Refined
Communicating
Systems} (RCS). The semantics is modified in order to check
refinements at every step. For this, we introduce a shared map in order
to keep track of variables and their values that are exchanged in
messages (e.g. the values of $\processVariable{x}$ and $\processVariable{n}$ in
the guessing game example).
This record of values is used to evaluate refinements, preventing
undesired transitions. In this paper,
we show that RCS only produce valid refined traces and we explain how to
generate an RCS from a RMPST.

\myparagraph{Refined MPST.}
\label{sec:intro:RMPST}
Working with CS is cumbersome, and, in practice, we would
prefer to adapt existing specification formats. We present in depth how to
integrate refinements in \emph{Multiparty Session Types} (MPST)
\cite{yoshida_very_2020,honda_multiparty_2016,honda_multiparty_2008}, which
are a family of type systems that aims to prevent communication bugs.

The following refined global type (\(\gtFmt{G_\pm}\)) is a specification
of the guessing game protocol (\cref{fig:intro:guessing_game}), with
refinements: a participant
\(\participant{A}\) begins by sending a $\gtLbl{secret}$ to
\(\participant{B}\); the value of the $\gtLbl{secret}$ is stored in the
variable $\processVariable{n}$. Then,
\(\participant{C}\) tries to guess the value (stored in variable
$\processVariable{x}$), and \(\participant{B}\) replies
with \(\gtLbl{more}\), \(\gtLbl{less}\) (in which case the protocol loops and
$\participant{C}$ can make another guess: $\gtRec{T}{G}$
denotes the recursion) or \(\gtLbl{correct}\), at which point
the protocol terminates ($\gtEnd$ denotes the termination).
The refinements specify conditions upon
which the \(\gtLbl{more}\), \(\gtLbl{less}\), and \(\gtLbl{correct}\) branches
are possible. For instance, the protocol can take the \(\gtLbl{correct}\)
branch only if the values in the \(\gtLbl{secret}\)
and the \(\gtLbl{guess}\) messages are the
same, i.e. if $\refinement{x = n}$.\\[1mm]
\centerline{$
\small
\begin{array}{l}
\gtFmt{G_\pm} =\\
\quad	\gtComm{\participant{A}}{\participant{B}}{secret}[n]{int}[\taut]{
		\gtRec{T}{
			\gtComm{C}{B}{guess}[x]{int}[\taut]{
				\gtCommRaw{B}{C}{
					\begin{aligned}
						&\commChoice{more}{}[x < n]{\gtRecVar{T}},\\
						&\commChoice{less}{}[x > n]{\gtRecVar{T}},\\
						&\commChoice{correct}{}[x = n]{\gtEnd}
					\end{aligned}
				}
			}
		}
	}
\end{array}$}\\[1mm]
Compared to standard MPST, \emph{Refined MPST} (RMPST) contain variable names
($\processVariable{n}$ and $\processVariable{x}$) and refinements (denoted with
$\gtFmt{\models\refinement{r}}$ in the payloads, meaning that to send the
message, $\refinement{r}$ must hold). We present those extensions
as well as the relation between RMPST and RCS.

\myparagraph{Applications and Extensions.} To show the versatility of our
framework, we extend it:
\vspace{-2mm}
\begin{description}
	\item[Decentralised Refinement Evaluation:]
	The canonical semantics for RCS we present uses a single shared map of
	variables to provide a simple way to reason about refinements. Having
	this global map would not be suited for a distributed implementation. We
	extend our framework with an alternative semantics where each participant
	of a protocol has a local map of variables. We show that if variables are
	not duplicated, then this alternative model also produces valid refined
	traces.
	\item[Static Elision of Redundant Refinements:]
	At places where refinements are redundant (e.g.\ where it is entailed by
	previous refinements), we could benefit from removing those refinements. In
	order to show the versatility of our framework, we show how to develop a
	simple static analysis technique to remove such redundant refinements.
	\item[Refined Choreography Automata:]
	While we mostly use RMPST as an example of protocol specification language,
	we sketch another specification by (informally) presenting how
	to integrate refinements in choreography automata
	(in \inApp{sec:refined_choreo_automata}).
\end{description}

\myparagraph{Rust Implementation.}
The last objective of our work is to implement RMPST into Rust. We choose Rust
for several reasons: its affine type system makes it easy to avoid unwanted
reuse of values, which helps to prevent a participant from duplicating
actions; and thanks to its growing popularity, there are already a few existing
toolchains for session types in Rust
\cite{lagaillardie_stay_2022,Chen2021,Kokke2019,JespersenSession2015}.
Among them, we choose Rumpsteak \cite{cutner_deadlock-free_2022}
since it already uses CS to implement MPST participants
inside its toolchain.
We extend Rumpsteak with refinements using the decentralised refinement
evaluation approach.
We finally measure the refinement overhead in Rumpsteak.

\begin{figure}[h]
	\begin{tikzpicture}
	\node (RGA) [draw=black, align=center] {Refined Communicating\\System (RCS)};
	\node (SC) [draw=black, align=center, below=1cm of RGA]
	{Refined\\Configuration};
	\node (DC) [draw=black, right = 6cm of SC, align=center] {Decentralised\\Configuration};
	\node (RDLAT) [draw=black, below = of DC, align = center] {Refined Decentralised\\Traces};
	\node (RGAT) [draw=black, align=center, below = of SC] {Traces\\of RCS};
	\node (RGT) [draw=black, align=center, below =1.3cm of RGAT] {Valid Refined
	Traces};

	\node (LRT) [draw=black, above=1.3cm of RGA, align=center] {Refined Local
	Types};
	\node (GRT) [draw=black, above = of LRT, align=center] {Refined Global Type};

	\node (RCA) [draw=black, align=center, right=4cm of GRT] {Refined
	Choreography Automata};

	\node (SERCS) [draw=black, align=center, right=6cm of RGA]
	{Static\\Elision};

	\draw[->] (RGA) -> node [pos=.5, anchor=south, sloped, align=center] {} (SC);
	\draw[->] (RGA) to [bend left=5] (DC);
	\draw[align=center,->] (SC) -> node [pos=.5, anchor=south] (Sim)
	{Simulates\\\cref{thm:st_sim_dyn}} (DC);
	\draw[align=center, ->] (SC) -> node [pos=.5, anchor=east] (GS) {Global\\Semantics} (RGAT);
	\draw[align=center,->] (DC) -> node [pos=.5, anchor=west] (DS) {Decentralised\\Semantics} (RDLAT);
	\draw[align=center,->] (RDLAT) -> node [pos=.4, anchor=south, sloped] (Cor Sim) {Subset of\\Cor. of \cref{thm:st_sim_dyn}} (RGAT);
	\draw[align=center,->] (RGAT) -> node [align=center, pos=.5, anchor=west]
	(Th Subset)
	{Subset of (\cref{thm:traces_GRA_valid_refined})} (RGT);

	\draw[align=center, ->] (LRT) edge node[align=center, pos=.5, anchor=west]
	(Inst) {Instantiate (\cref{def:LT_to_RCFSM})} (RGA);
	\draw[align=center, ->] (GRT) edge node[pos=.5, anchor=east] (Proj)
	{Projection} (LRT);

	\draw[align=center, ->] (RCA.south) to [bend left=20] node[anchor=north,
	sloped]	(proj_ca) {Projection} (RGA);

	\draw[align=center, <->] (SERCS) to node[anchor=north,
	sloped, pos=.4]	(se_correct) {Correctness
	(\cref{thm:static_elision:correct})} (RGA);

	\begin{pgfonlayer}{background}
		\node[fit=(GRT) (LRT), fill=blue!10] (BG RMPST) {};
		\node[above=0cm of BG RMPST, anchor=south] {\ding{174}
		\cref{sec:refined_mpst}};
		\node[fit=(RGT), fill=red!10] (BG trace) {};
		\node[below=0cm of BG trace, anchor=north] {\ding{172}
		\cref{sec:traces}};
		\node[fit=(RGA) (SC) (RGAT) (GS), fill=yellow!10] (BG automata){};
		\node[left=0cm of BG automata, anchor=south, rotate=90] {\ding{173}
		\cref{sec:ref_auto}};
		\node[fit=(DC) (RDLAT) (DS) (Sim) (Cor Sim), fill=green!10] (BG Dyn) {};
		\node[right=0cm of BG Dyn, anchor=north, rotate=90] {\ding{178}
		\cref{sec:dynamic}};
		\node[fit=(Th Subset), fill=purple!10] (BG Th Subset) {};
		\node[right=0cm of BG Th Subset, anchor=west] {\ding{175}};
		\node[fit=(Inst), fill=black!10] (BG Inst) {};
		\node[right=0cm of BG Inst, anchor=west] {\ding{176}};
		\node[fit=(RCA), fill=brown!10] (BG_RCA) {};
		\node[above=0cm of BG_RCA, anchor=south]{\ding{177}
		\inAppOutside{sec:refined_choreo_automata} (Example)};
		\node[fit=(SERCS), fill=orange!10] (BG_SE) {};
		\node[above=0cm of BG_SE, anchor=south]{\ding{179}
		\cref{sec:static_elision}};
	\end{pgfonlayer}
\end{tikzpicture}
	\caption{Overview of the framework for RMPST developed in this work. The
	coloured backgrounds show the main steps of this paper.}
	\label{fig:overview_paper}
\end{figure}
\myparagraph{Contributions and Outline.}
Our main contribution is to unify
the different points presented above in a \emph{single} framework
as presented in \cref{fig:overview_paper}.
We introduce a uniform framework which is
agnostic to any particular specification formalism,
model, semantics and language,
defining the correctness of refinements as validity of traces.
We then prove the safety of the framework
(\cref{thm:traces_GRA_valid_refined}).
We demonstrate the \emph{versatility of our framework} by
accommodating
\emph{multiple protocol specifications} such as
(refined) multiparty session types
\cite{yoshida_very_2020,honda_multiparty_2016,honda_multiparty_2008,zhou_statically_2020} and
(refined) choreography
automata \cite{COORDINATION20ChoreographyAutomata,Gheri_Design_2022}, \emph{multiple
semantics} such as (refined) communicating
automata \cite{brand_communicating_1983} with centralised and
decentralised semantics, and
\emph{multiple analysis techniques} such as dynamic and static analyses.
We provide an implementation of an instance of the framework in Rust.
Our framework is the first, to
the best of our knowledge, to achieve such versatility.

The framework is composed of the following parts (circled numbers refer to
\cref{fig:overview_paper}):
\begin{description}
	\item [\ding{172} Valid Refined Traces:] We
		introduce \emph{valid refined traces} which characterise
		valid executions with respect to refinements.
	\item [\ding{173} Refined Communicating Systems (RCS):] We extend
		Communicating Systems to accommodate refinements. From
		a configuration of RCS, we induce a set of possible traces.
		One of our main results is
		\cref{thm:traces_GRA_valid_refined} (\ding{175}), which states
		that all traces produced by RCS are valid refined
		traces, which in turn proves the correctness of the RCS.
	\item [\ding{174} Refined Multiparty Session Types (RMPST):]
		In \Cref{sec:refined_mpst},
		we adapt MPST (which consists of \emph{global types} (which describe a
		multiparty protocol), \emph{local types} (which describe the behaviour
		of a single participant), and a \emph{projection} from global to local
		types which extracts the behaviour of a single participant) to
		accommodate for refinements.
		We show how	to generate a RCS from a set of local types with
		refinements	(\ding{176}).
		In addition, in \inApp{sec:refined_choreo_automata}, we sketch how
		to accommodate refinements in choreography automata, to illustrate the
		versatility of the framework (\ding{177}).
	\item [\ding{178} and \ding{179} Optimisations:]
		In \Cref{sec:dynamic} (\ding{178}), we propose a \emph{decentralised}
		model as an alternative for RCS. We show trace inclusion w.r.t.\ RCS,
		which ensures refinements are correctly checked. In \Cref{sec:implem},
		we implement this improved model in Rust.
		In addition, in \cref{sec:static_elision}, we demonstrate how to
		develop analysis techniques using the framework. We show how redundant
		refinements can, under some conditions, be statically
		removed~(\ding{179}).
\end{description}



\label{sec:intro}
\section{Refined Traces and their Validity}\label{sec:traces}
This section introduces \emph{refined traces} which are sequences of messages
\emph{actions}. We then define their
\emph{validity}, introducing two definitions on traces, \emph{well-queued} and
\emph{well-predicated} traces.
We precede this (in \cref{sec:pred_lang_sem}) with preliminary definitions
used throughout this paper.

\subsection{Preliminaries: Predicates Language and
Semantics\label{sec:pred_lang_sem}}

This first subsection introduces the basic definitions we use in this paper.

Let \(\setVars\) be a set of variables, ranged over by \(\processVariable{x}\),
\(\processVariable{y}\), \(\dots\); and a finite set \(\setVals\) of values (in
this work, we take $32$-bit integers:
\(\dexp{\setInts/2^{32}\setInts}\)).

We use associative maps from variable names to values, noted $\map{M}$.
\(\domMap{M}\) denotes
the domain of a map,
that is the set of variables that appear in the map.
Maps are equipped with
lookup (\(\lookupMap{M}{x}\)), update (\(\updateMap{M}{x}{c}\)) and removal
(\(\removeMap{M}{x}\)) operations. \(\map{M_1}\biguplus\map{M_2}\) denotes the
union of \(\map{M_1}\) and \(\map{M_2}\) if their domains are disjoint (see
\inApp{app:map} for the definition of all those operators). Finally,
$\emptyMap$ denotes an empty map.

In order to keep our work general, we do not strictly specify the language of
predicates, nor their semantics rules.
Instead, we suppose we are
given a language to express refinements, whose terms are produced by a rule
\(\refinement{\nonterm{R}}\).
In this paper, we intentionally leave the logic underspecified so that it
can be fine tuned by the end user. In practice, in our implementation
({\cref{sec:implem}}), custom predicates can easily be added.
In the following, we use a simple grammar
with arithmetic and relational operators as predicates.
Let \(\setPreds\) be the set of refinement expressions. We assume
refinements can have free variables, and that there exist a function
\(\refinement{\operatorname{fv}}:
\setPreds\rightarrow\powerset{\processVariable{\set{V}}}\) that
gives the free variables of each refinement expression.  We note
\(\setPreds_{\processVariable{\set{W}}}\) be the set of
refinements of
\(\setPreds\) whose set of free variables is
\(\processVariable{\set{W}}\subseteq\setVars\).
We assume a variable substitution function,
\(\refinement{\nonterm{R}}\subst{\processVariable[i]{x}}{\val{v_i}}\) that
substitutes every free occurrence of each variable \(\processVariable[i]{x}\)
for
the value \(\val{v_i}\).
For any refinement expression
\(\refinement{r}\),
\(\refinement{r}\subst{\refinement{\operatorname{fv}(r)}}{\val{\dots}}\) is a
closed refinement.
Since our predicates are abstract, we do not explicitly specify their
semantics, nor their well-formedness.
Instead, we assume each closed refinement formula evaluates to
\(\top\) or \(\bot\).
We assume there exists a function \(\refeval{r}\) that
evaluates the refinement \(\refinement{r}\), provided that
	$\refinement{r}$ is closed\footnote{We do not
	discuss the decidability of the actual chosen logic of refinements here.
	For undecidable logics, providing such function is, of course, not possible;
	however this is not in the scope of this work.}. Finally, we assume the
existence of a closed formula \(\taut\) that is a tautology, i.e.
\(\refeval{\taut} = \top\).

Given a map \(\map{M}\) and a refinement \(\refinement{r}\), we note
\(\map{M}\models \refinement{r}\)
if and only if
the refinement \(\refinement{r}\) is closed under the map \(\map{M}\):
\(\refinement{\fv{r}} \subseteq \domMap{M}\),
and
evaluates to \(\top\) after substitution:
\(\refeval{r\subst{\refinement{\fv{r}}}{\lookupMap{M}{\refinement{\fv{r}}}}}=\top\).

In a protocol with multiple participants, let $\setRoles$ be a set of
participants ranged over by \(\participant{A},
\participant{B}, \dots\) and \(\participant{p}, \participant{q}, \dots\) being
meta-variables over participant names. In this work, messages contain a label,
a variable, and a value.
Let $\setLabels$ be a set of labels; \(\gtLbl{\ell}\) and its decorated
variants range over labels in \(\setLabels\).
We define \(\setMsgs = \setLabels \times (\setVars \times \setVals)\) for the
set of messages (as a reminder: $\setLabels$ is the set of labels, $\setVars$
the set of variables, and $\setVals$ the set of values).

\subsection{Traces}
\label{sub:traces}
Let us denote
$\vec{e}=\fconcats{e_1}{e_n}$ ($n\geq 0$) as a \emph{FIFO}, i.e.,
a finite sequence of elements $e_i$ (messages exchanged
in this paper).
We use $\emptyfifo$ for an empty FIFO ($n=0$).
We define: $\push{\vec{e}}{e} \define\fconcat{e}{\vec{e}}$;
$\pop{\fconcat{\vec{e}}{e}}\define \vec{e}$ ($\pop{\emptyfifo}$ is undefined);
and
$\nextElem{\fconcat{\vec{e}}{e}} \define e$ ($\nextElem{\emptyfifo}$ is
undefined). Notice that $\pop{\vec{e}}$ is defined if and only if
$\nextElem{\vec{e}}$ is defined. In this paper, we consider one FIFO channel
per pair of participant. We call \emph{queues} a map of all pairs of
distinct participants to their communication FIFO of a
system. We note $\pushQueue{w}{e}{p}{q}$, $\popQueue{w}{p}{q}$,
$\nextElemQueue{w}{p}{q}$, where the indices indicates which FIFO of the set is
affected (see \inApp{app:map} for the formal definition). We write
\(\emptyqueue\) for the empty queue, which is the queue where
\(\lookupQueue{w}{p}{q} = \emptyfifo\) for all \(\participant{p}\) and
\(\participant{q}\).

Actions are tuples consisting of a sending participant $\roleP$,
a direction of communication $\dagger \in \setof{\mathrel{!}, \mathrel{?}}$
($\mathrel{!}$ stands for sending, and $\mathrel{?}$ stands for
receiving), a receiving participant $\roleQ$, a message $\msgM$
and a predicate
\(\refinement{r}\) associated to the action (as a reminder: $\setPreds$ is the
set of refinements). We require
participants to be distinct
(i.e. $\roleP \neq \roleQ$).

\begin{definition}[Action and Trace\label{def:trace}]
	An action is an element of \(\setActions\) defined as
	follows:
\(\setActions = \setRoles \times \setof{\mathrel{!}, \mathrel{?}} \times
\setRoles \times \setMsgs \times \setPreds\).
	We write
	$\act = \actionGeneric{\roleP}{\roleQ}{\msgM}{r}$ (\(\roleP\neq\roleQ\))
	when
	\(\tuple{\roleP, \dagger, \roleQ, \msgM, \refinement{r}}\in\setActions\).

Traces (denoted by $\trace$ and its decorated variants) are
finite sequences of actions, defined inductively from the rule
\(
	\begin{grammar}
		\firstcase{\trace[T]}{\tconcat{\act}{\nonterm{T}} \gralt \tempty}{}
	\end{grammar}
	\),
	where \(\act\) is an action.
	We write \(\trace[{\setActions^\star}]\) for the set of traces.
\end{definition}

\begin{example}[Trace]
We presented a trace in \cref{sec:intro}.
\end{example}

\noindent We denote
\(\tconcat{\tau_1}{\tau_2}\) for the concatenation of two
traces.
We assume an intuitive notion of the size of trace,
as well as lemmas that allow us to infer that, if the size is \(0\),
then the trace is \(\tempty\).

\subsection{Properties of Refined Traces}

In this subsection, we characterise the \emph{correctness} of traces
w.r.t.\ refinements.

There are two conditions valid traces should verify. First, the
sending/reception of messages should be consistent (as with normal MPST).
Second, for every action of the trace, predicates that guard the action should
hold.
We call traces that satisfy message consistency \emph{well-queued traces}, and
the traces that satisfy the predicates \emph{well-predicated traces}.
In the end, we consider traces that satisfy both conditions: we call
those traces \emph{valid refined traces}.

To start with well-queued traces, we first evaluate the impact of a trace on a
queue, by looking at the effect of each action on that queue
(\cref{def:ends_up_with_queue}).

\begin{definition}[Trace Ending Up with Queues\label{def:ends_up_with_queue},
	well-queued traces\label{def:well_queued_trace}]
	A trace $\traceP$ ends up with the queue $\queueQ[f]$ w.r.t.\ a
	queue \(\queueQ[i]\) if:
  \begin{enumerate}
	  \item If $\traceP = \tempty$, \(\queueQ[i] = \queueQ[f]\); and
		  \label{item:def:ends_up_with_queue:i}
    \item If $\traceP =
	    \tconcat{\actionSend{\roleP}{\roleQ}{\msgM}{r}}{\tracePi}$, then
		  \(\trace[\tau^\prime]\) ends up with \(\queueQ[f]\)
		  w.r.t.\ \(\pushQueue{\queueQ[i]}{\msgM}{p}{q}\); and
		  \label{item:def:ends_up_with_queue:ii}
	  \item \label{item:def:ends_up_with_queue:iii} If $\traceP =
	  \tconcat{\actionReceive{\roleP}{\roleQ}{\msgM}{r}}{\tracePi}$,
	    then
		  \(\trace[\tau^\prime]\) ends up with \(\queueQ[f]\)
		  w.r.t.\ \(\popQueue{\queueQ[i]}{p}{q}\) and
		  \(\nextElemQueue{\queueQ[i]}{p}{q} = \msgM\).

  \end{enumerate}
	A trace $\traceP$ is \emph{well-queued} with regards to the queue
	$\queueQ$ if \(\traceP\) ends up with the empty queue
  \(\queueQ[\emptyset]\) with respect to an initial queue \(\queueQ\).

  A trace \(\traceP\) is valid if \(\traceP\) is well-queued
  with respect
  to the empty queue \(\queueQ[\emptyset]\).
\end{definition}
\begin{remark}
	In \cref{def:ends_up_with_queue}, we say $\queueQ[i]$ is the \emph{initial}
	queue.
\end{remark}

Regarding well-predicated traces, the idea is to record the latest value of
each variable in a map; and to use that map to evaluate refinements
(\cref{def:WPT}).

\begin{definition}[Well-Predicated Traces\label{def:WPT}]
	A trace $\trace$ is \emph{well-predicated} under a map $\map{M}$, if either
	\begin{inlineenum}[or]
	  \item $\traceP = \tempty$\label{def:WPT:i}
	  \item $\trace = \tconcat{\actionGeneric{\roleP}{\roleQ}{\lblL,(\varX,
	  \val{c})}{r}}{\trace[\tau^\prime]}$
		  and \(\updateMap{M}{x}{c}\models \refinement{r}\) and
		  $\trace[\tau^\prime]$ is well-predicated under
	  $\updateMap{M}{x}{c}$\label{def:WPT:ii}.
  \end{inlineenum}
\end{definition}

\begin{example}[Well-Predicated Traces\label{ex:WPT}]
	In \cref{sec:intro}, we presented the trace \(\trace\):\\[1mm]
\centerline{
$	\tconcat{
		\actionSend{A}{B}{\gtLbl{secret}, \tuple{\processVariable{n},\val{5}}}{\taut}
	}{
	\tconcat{
		\actionReceive{A}{B}{\gtLbl{secret}, \tuple{\processVariable{n},\val{5}}}{\taut}
	}{
	\tconcat{
		\actionSend{C}{B}{\gtLbl{guess},\tuple{\processVariable{x},\val{5}}}{\taut}
	}{
		\actionReceive{C}{B}{\gtLbl{guess},\tuple{\processVariable{x},\val{5}}}{\taut}
	}}}
$
}\\[1mm]
To illustrate \cref{def:WPT}, we propose two actions after \(\trace\):
	\begin{inlineenum}
	\item \(\trace[\tau_1] = \actionSend{B}{C}{\gtLbl{more},\tuple{\processVariable{\_},\val{\_}}}{x>n}\)
	\item \(\trace[\tau_2] = \actionSend{B}{C}{\gtLbl{correct},\tuple{\processVariable{\_},\val{\_}}}{x=n}\).
	\end{inlineenum}
	We can investigate whether \(\tconcat{\trace}{\trace[\tau_1]}\)
	(resp.\ \(\tconcat{\trace}{\trace[\tau_2]}\)) is a well-predicated trace
	under
	\(\emptyMap\). According to \cref{def:WPT}, we have to investigate
	whether
	\(\trace[\tau_1]\) (resp. \(\trace[\tau_2]\)) is well predicated under
	\(\map{M} = \map{\{\tuple{\processVariable{n}, \val{5}},
	\tuple{\processVariable{x}, \val{5}}\}}\).

	For \(\trace[\tau_1]\), according to \cref{def:WPT:ii} in \cref{def:WPT},
	then \(\refinement{x>n}\) must hold under \(\map{M}\), which is not the
	case, therefore \(\tconcat{\trace}{\trace[\tau_1]}\) is not
	well-predicated.

	Regarding \(\trace[\tau_2]\), according to \cref{def:WPT:ii} in
	\cref{def:WPT},
	then \(\refinement{x=n}\) must hold under \(\map{M}\), which is the
	case.
\end{example}

Finally, we consider traces that are both
valid with respect to predicates and to messages. We call those \emph{Valid
Refined Traces}. Our overall goal is to show that our framework
only produces such valid refined traces.

\begin{definition}[Valid Refined Traces\label{def:valid_refined_trace}]
	A refined trace $\traceP$ is \emph{valid} if
  \begin{inlineenum}
	  \item $\traceP$ is well-queued with respect to the empty queue
	  \(\emptyqueue\)
    \item $\traceP$ is well-predicated under the empty map \(\emptyMap\).
  \end{inlineenum}
\end{definition}

\section{Refined Communicating Automata}
\label{sec:ref_auto}
In this section, we model message-passing concurrent systems with refinements.
We ensure that this model only generates valid refined traces (c.f.
\cref{def:valid_refined_trace}). Our model of computation is an extension of
\emph{communicating systems} (CS)
\cite{brand_communicating_1983,cece_verification_2005}, which
are sets of Finite State Machines
communicating using queues. We introduce
\emph{refined communicating systems} (RCS), a variant of CS which accounts for
refinements and we show that all traces
produced by RCS are valid refined traces (\cref{thm:traces_GRA_valid_refined}).

\myparagraph{Refined Communicating Finite State Machines.}
\emph{Communicating systems}
\cite{brand_communicating_1983} are a concurrent
model of computation composed of a set of \emph{communicating finite state
machines} (CFSM) that interact with exchanges of messages. CFSM are standard
finite state machines, where labels represent actions (i.e.\
sending or
receiving messages). Individual FSM are then given a concurrent semantics,
which performs messages exchanges. The state of the system is
called a \emph{configuration}, which records the state of the individual CFSMs
as well as the content of the message queues. In this section, we adapt
communicating systems for refinements.

First, we add refinements to the transitions of CFSM, which we call
\emph{refined} CFSM.
This appears in the additional \(\setPreds\) 
in \cref{def:RCFSM} (we recall $\setPreds$ is the set of
refinements).

\begin{definition}[Refined Communicating Finite State Machine
(RCFSM)\label{def:RCFSM}]
An RCFSM is a finite transition system given
	by \(M = \tuple{Q, C, q_0, \setMsgs, \delta}\), where
\(Q\) is a set of states;
\(C =
		\setcomp{\participant{p}\participant{q}\in \setRoles^2}{\participant{p}\neq \participant{q}}\)
			is a set of channels\footnote{The original definition uses
			\emph{channels}, which we do not use. We keep them for the sake of
			consistency.};
\(q_0\in Q\) is an initial state;
\(\setMsgs\) is a finite alphabet of messages;
and
	\(\delta\subseteq Q\times (C\times \{!, ?\}\times
			\setActions\times \setPreds)\times Q\)
			is a finite set of transitions.
	\qedhere
\end{definition}

We write \(\redact{s}{s^\prime}{\actionGeneric{i}{j}{m}{r}}\) for
\(\tuple{s, \tuple{\participant{i}\participant{j}, \dagger, \msgM,
\refinement{r}}, s^\prime} \in \delta\).
\emph{Refined communicating systems} (RCS) are analogous to their non-refined
counterparts and simply consist of a tuple of RCFSM, with one RCFSM per
participant. For \emph{refined
configurations}, as with (non-refined) configurations, we store the states of
the individual CFSM and the content of queues. In addition, contrary to
non-refined configurations, refined configurations also contain a map in order
to keep track of the values of the variables in order to be able to
evaluate refinements.

\begin{definition}[Refined Communicating System (RCS)\label{def:RCS}]
	A refined communicating system is a tuple
	\(R = \tuple{M_\participant{p}}_{\participant{p}\in\setRoles}\)
	of RCFSMs such that \(M_\participant{p} = \tuple{Q_\participant{p}, C,
	q_{0\participant{p}}, \setMsgs, \delta_{\participant{p}}}\).
\end{definition}

An RCS uses one RCFSM per participant $\participant{i}\in\setRoles$. A
\emph{configuration} represents the state of such RCS, where each participant
$\participant{i}$ is in a local state $s_\participant{i}$.

\begin{definition}[Refined Configuration\label{def:RConf}]
	A \emph{refined configuration} of an RCS $R$ is a tuple
	\(S\) as follows:
	\(S \define \tuple{\tuple{s_\participant{1}, \dots, s_\participant{n}},
	\queueQ, \map{M}}_{R}\)
	where each \(s_\participant{i}\in Q_\participant{i}\), \(\queueQ\) is
	a queue of messages, and \(\map{M}\) is a map from
	variables names to values.
	Let \(\set{S}\) be the set of refined configurations.
\end{definition}

\begin{remark}\label{rmk:indexed_configuration}
	Refined configurations are indexed by their RCS. This allows the
	configuration to store the automaton of the participant. The
	semantics developed below uses those (local) transitions to infer the
	global semantics. When the context is clear, we omit this index.
\end{remark}

From that, we characterise \emph{initial} and \emph{final} configurations. We
call a configuration \emph{initial} when it is a possible configuration where
no actions have been taken yet. This means that there is no pending
messages (which would imply a previous \emph{send} action), nor known variables
(which would imply a previous action initialised the variable).
We say a configuration is \emph{final} when there are no pending messages
(otherwise, we would expect a \emph{receive} action to take place). Notice that
it does not mean the system cannot take action at all.

\begin{definition}[Initial and Final Refined Configuration\label{def:init_RGS}]
	A refined configuration\\
	\(\tuple{\tuple{s_\participant{1}, \dots,
	s_\participant{n}}, \queueQ, \map{M}}\in\set{S}\) is \emph{initial} if and
	only if
	\begin{inlineenum}
	\item \(\queueQ = \emptyqueue\)
	\item \(\map{M} = \emptyMap\)
	\item each \(s_\participant{i}\) is initial in the RCFSM.
	\end{inlineenum}

	A refined configuration \(S = \tuple{\tuple{s_\participant{1}, \dots,
	s_\participant{n}}, \queueQ, \map{M}}\in\set{S}\)
	is \emph{final} if and only if \(\queueQ = \emptyqueue\).
\end{definition}

\begin{example}[RCS\label{ex:RCS}]
	The RCFSM of participant \participant{B} in the guessing game is shown in
	\cref{fig:RCS:part_B}. See \inApp{fig:ex:RCFSM_GPM} for the RCFSM of
	\participant{A} and \participant{C}. Together, they form a RCS,
	which initial configuration is
	\(\tuple{\tuple{\participant{A_1}, \participant{B_1}, \participant{C_1}}, \emptyqueue, \emptyMap}\).
\end{example}
\begin{figure}
		\begin{tikzpicture}[node distance=4.2cm]
			\draw node[draw, rounded rectangle] (G1) {\(\participant{B_1}\)};
			\draw node[draw, rounded rectangle, right=3.5cm of G1] (G2)
			{\(\participant{B_2}\)};
			\draw node[draw, rounded rectangle, right=3.5cm of G2] (G3)
			{\(\participant{B_3}\)};
			\draw node[draw, rounded rectangle, right=3.9cm of G3] (G4)
			{\(\participant{B_4}\)};
			\coordinate [above=.5cm of G1] (start);
			\draw[-{Stealth[scale=1.5]}] (start) -- (G1);

			\draw[-{Stealth[scale=1.5]}] (G1) to node[anchor=south, pos=.5,
			sloped] {\(\actionReceive{A}{B}{\gtLbl{secret},
			\tuple{\processVariable{n}, \val{c_n}}}{\taut}\)} (G2);
			\draw[-{Stealth[scale=1.5]}] (G2) to node[anchor=south, pos=.5,
			sloped] {\(\actionReceive{C}{B}{\gtLbl{guess},
			\tuple{\processVariable{x}, \val{c_x}}}{\taut}\)} (G3);

			\coordinate (midG23) at ($ (G3)!.5!(G2) $);
			\coordinate [above=.7cm of midG23] (midG23up);
			\coordinate [below=.7cm of midG23] (midG23down);

			\draw[-{Stealth[scale=1.5]}]
			(G3) |- (midG23up) node[anchor=south]
			{\(\actionSend{B}{C}{\gtLbl{more}, \tuple{\processVariable{\_},
			\val{\_}}}{x < n}\)} -|  (G2);
			\draw[-{Stealth[scale=1.5]}]
			(G3) |- (midG23down) node[anchor=north]
			{\(\actionSend{B}{C}{\gtLbl{less}, \tuple{\processVariable{\_},
			\val{\_}}}{x > n}\)}  -|  (G2);
			\draw[-{Stealth[scale=1.5]}] (G3) to node[anchor=south, pos=.5,
			sloped] {\(\actionSend{B}{C}{\gtLbl{correct},
			\tuple{\processVariable{\_}, \val{\_}}}{x = n}\)} (G4);
		\end{tikzpicture}
		\caption{RCFSM of \(\participant{B}\) in the \(\gtFmt{G_\pm}\)
			protocol.}
		\label{fig:RCS:part_B}
\end{figure}

\myparagraph{Refined Semantics.}
We now define the semantics of RCS in~\cref{def:semantics_RGA} with two
reduction rules \grrec{} and \grsnd{} (the initial \textsc{GR} stands for
\emph{global refined}, to distinguish the rules from variants in the following
parts of this work), which are respectively used for receiving and sending
messages. To avoid confusion with RCFSM reductions, we use a double arrow
($\csred$) to represent reductions at the refined communicating system level.

Rule \grsnd{} specifies that, if a participant
\(\participant{i}\) reduces from state \(s_\participant{i}\) to state
\(s^\prime_\participant{i}\) while sending a message \(\msgM\) and if the
refinement predicate \(\refinement{r}\) attached to the action holds,
then the transition is taken at the global level.
In the resulting refined configuration, the message is
enqueued in the relevant queue and
the map of known variables \(\map{M}\) is updated
to take into account the new value of the carried variable \(\val{c}\).

Rule \grrec{} is similar, with the additional requirement that the message
received must be the next in the participant's queue (the third premise).

Notice that the verification of refinements is \emph{dynamic}, as it is
performed by the corresponding premise in each of the rules, i.e. at execution
time.

\begin{definition}[Refined Global Semantics\label{def:semantics_RGA}]
	Given a RCS $R = \tuple{M_\participant{p}}_{\participant{p}\in\setRoles}$,
	we define:
	\begin{mathpar}
	\inferrule*[left=\grrec]{
		t=\redact{s_\participant{i}}{s_{\participant{i}}^\prime}{\actionReceive{j}{i}{\gtLbl{\ell},
		\tuple{\processVariable{x}, \val{c}}}{r}}\in\delta_\participant{i}
		\and
		\updateMap{M}{x}{c}\models \refinement{r}
		\and
		\nextElemQueue{\queueQ}{j}{i} = \msgM[\tuple{\gtLbl{\ell}, \tuple{\processVariable{x}, \val{c}}}]
	}{
		\tuple{\tuple{\dots, s_\participant{i}, \dots}, \queueQ,
		\map{M}}_{R}
		\csred[t]
		\tuple{\tuple{\dots, s_\participant{i}^\prime, \dots},
		\popQueue{\queueQ}{j}{i},
		\updateMap{M}{x}{c}}_{R}
	}
	\and
	\inferrule*[left=\grsnd]{
		t=\redact{s_\participant{i}}{s_{\participant{i}}^\prime}{\actionSend{i}{j}{\gtLbl{\ell},
		\tuple{\processVariable{x}, \val{c}}}{r}}\in\delta_\participant{i}
		\and
		\updateMap{M}{x}{c}\models \refinement{r}
	}{
		\tuple{\tuple{\dots, s_\participant{i}, \dots}, \queueQ,
		\map{M}}_{R}
		\csred[t]
		\tuple{\tuple{\dots, s_{\participant{i}}^\prime, \dots},
		\pushQueue{\queueQ}{\msgM[\tuple{\gtLbl{\ell},
		\tuple{\processVariable{x}, \val{c}}}]}{i}{j},
		\updateMap{M}{x}{c}}_{R}
	}
	\qedhere
	\end{mathpar}
\end{definition}

\begin{remark}
	Global transitions are labelled with the underlying local transition. When
	the local transition is not relevant, we do not show it.
\end{remark}

\begin{example}[Transitions of a RCS]
	\label{ex:RCS_transitions}
	Considering the RCS
	\iftoggle{full}{in \cref{fig:ex:RCFSM_GPM}}{of $\gtFmt{G_\pm}$ (\cref{fig:RCS:part_B})} in its
	initial configuration
	\(C_i = \tuple{\tuple{\participant{A_1}, \participant{B_1}, \participant{C_1}}, \emptyqueue, \emptyMap}\),
	we have that the automaton of \(\participant{A}\) can fire a transition
	\(\redact{\participant{A_1}}{\participant{A_2}}{\actionSend{A}{B}{\gtLbl{secret},\tuple{\processVariable{n},\val{5}}}{\taut}}\),
	and \(\updateMap{\emptyMap}{n}{5}\models\taut\), by definition of $\taut$.
	Therefore, \(C_i\) can take a \grsnd{} transition and reduce to
	\(\tuple{\tuple{\participant{A_2}, \participant{B_1}, \participant{C_1}}, \queueQ, \map{\{\tuple{\processVariable{n}, \val{5}}\}}}\),
	where \(\queueQ\) contains a single message
	\(\msgM[\tuple{\gtLbl{secret}, \tuple{\processVariable{n}, \val{5}}}]\)
	in \(\lookupQueue{\queueQ}{A}{B}\).

	If the RCS is in the configuration
	\(C = \tuple{\tuple{\participant{A_2}, \participant{B_3}, \participant{C_2}}, \emptyqueue, \map{M}}\)
	with \(\map{M} = \map{\{\tuple{\processVariable{x}, \val{5}},
	\tuple{\processVariable{n}, \val{5}}\}}\),
	the RCFSM of participant \(\participant{B}\) offers three possible
	transitions:
	\begin{inlineenum}
\item
			\(\redact{\participant{B_3}}{\participant{B_2}}{\actionSend{B}{C}{\gtLbl{more},\tuple{\processVariable{\_},\val{\_}}}{x
			 < n}}\)
\item
			\(\redact{\participant{B_3}}{\participant{B_2}}{\actionSend{B}{C}{\gtLbl{less},\tuple{\processVariable{\_},\val{\_}}}{x
			 > n}}\)
\item
			\(\redact{\participant{B_3}}{\participant{B_4}}{\actionSend{B}{C}{\gtLbl{correct},\tuple{\processVariable{\_},\val{\_}}}{x
			 = n}}\).
	\end{inlineenum}
	The predicates carried in first two do not hold under \(\map{M}\):
	\(\map{M}\not\models\refinement{x < n}\) (resp.\ for \(\refinement{x >n}\)).
	Therefore, only \(\redact{\participant{B_3}}{\participant{B_4}}{\actionSend{B}{C}{\gtLbl{correct},\tuple{\processVariable{\_},\val{\_}}}{x = n}}\)
	is feasible as a \grsnd{} transition in the RCS.
	As we will see below (\cref{thm:traces_GRA_valid_refined}), this semantics
	prevents invalid traces.
\end{example}

\myparagraph{Trace of Refined Communicating Systems.}
In order to show that the semantics of RCS captures the intuition
of refinements, we study the traces formed by sequences of
reductions (see \inApp{def:trace_RGA} for the formal definition of traces of
RCS).

\begin{example}[Trace of an RCS\label{ex:trace_RCS}]
The trace \(\tconcat{\trace}{\trace[\tau_2]}\) (\cref{ex:WPT}) is a trace of
the RCS of \(\gtFmt{G_\pm}\).
\end{example}

We conclude this section with our main result, which is that all traces
produced by \(\RcsOfType{G}\) are valid refined traces. A trace is
\emph{initial} (resp. \emph{final}) if it is obtained from a
run whose first (resp. last) state is initial (resp. final).

\begin{restatable}[Traces of Refined Communicating Systems are Valid Refined Traces]{theorem}{tracesGRAvalidrefined}
\label{thm:traces_GRA_valid_refined}
	For all RCS $R$, for all initial and final traces $\trace$ of $R$, $\trace$
	is a valid refined trace.
\end{restatable}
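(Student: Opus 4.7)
The plan is to prove both clauses of \cref{def:valid_refined_trace} by a single induction on the length of the run of $R$ that produces $\tau$. A single induction is preferable to two separate ones, because the queue and map components of a refined configuration evolve in lockstep with the trace, and the induction hypothesis must track both in order to match the premises of \grsnd{} and \grrec{} simultaneously.

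More precisely, I would prove the following invariant by induction on $n$: for any run $S_0 \csred S_1 \csred \cdots \csred S_n$ of $R$ starting from an initial configuration $S_0$, with associated trace $\tau_n$, the following three clauses hold: (a) the map component of $S_n$ equals the map obtained by folding the WPT update along $\tau_n$ starting from $\emptyMap$; (b) $\tau_n$ is well-predicated under $\emptyMap$; and (c) $\tau_n$ ends up with the queue component of $S_n$ with respect to $\emptyqueue$. The base case $n=0$ is immediate: $\tau_0 = \tempty$, the map of $S_0$ is $\emptyMap$ by \cref{def:init_RGS}, and its queue is $\emptyqueue$, so all three clauses hold vacuously.

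For the inductive step, I would case split on whether the last transition $S_{n-1} \csred S_n$ applies \grsnd{} or \grrec{}. In both cases the premise $\updateMap{M}{x}{c} \models \refinement{r}$ of the rule is precisely what clause \cref{def:WPT:ii} of \cref{def:WPT} demands, once we use the inductive equality (a) between the map of $S_{n-1}$ and the map built from $\tau_{n-1}$; this extends (a) and (b) by one action. The queue updates in the conclusions of \grsnd{} and \grrec{} mirror clauses \cref{item:def:ends_up_with_queue:ii} and \cref{item:def:ends_up_with_queue:iii} of \cref{def:ends_up_with_queue}, and the \grrec{} premise $\nextElemQueue{\queueQ}{j}{i} = \msgM$ supplies the side condition of the latter, which extends (c). The theorem then follows by instantiating the invariant at the final configuration: since $\tau$ is a final trace, the queue of that configuration is $\emptyqueue$ by \cref{def:init_RGS}, so clause (c) yields well-queuedness with respect to $\emptyqueue$, and clause (b) yields well-predicatedness under $\emptyMap$ — together exactly \cref{def:valid_refined_trace}.

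I expect the main subtlety to be stylistic rather than mathematical: \cref{def:ends_up_with_queue} is phrased as a head-recursion on the trace, whereas the natural induction on runs proceeds by extending the trace on the right. The cleanest resolution is to first prove a small concatenation lemma stating that if $\tau_1$ ends up with $\queueQ_1$ w.r.t.\ $\emptyqueue$ and $\tau_2$ ends up with $\queueQ_2$ w.r.t.\ $\queueQ_1$, then $\tconcat{\tau_1}{\tau_2}$ ends up with $\queueQ_2$ w.r.t.\ $\emptyqueue$ (and similarly for well-predicatedness), so that the inductive step can append a single action on the right without having to unfold the head-recursive definition.
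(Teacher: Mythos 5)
Your proposal is correct, and the machinery it rests on is the same as the paper's: both arguments ultimately reduce to appending a single action on the right of a head-recursively defined predicate via concatenation lemmas (the paper's \cref{lem:concat_WQ_traces,lem:concat_WP_traces}), and both discharge the final well-queuedness claim by observing that an initial-and-final run starts and ends with $\emptyqueue$. The organisation differs, though, in a way worth noting. The paper proves the two clauses of \cref{def:valid_refined_trace} \emph{separately}: well-predicatedness by a right-to-left induction on the trace essentially identical to your clause~(b), but well-queuedness by contradiction, repeatedly applying the \emph{contrapositive} of \cref{lem:concat_WQ_traces} to peel the trace down to a single offending action and then refuting it by inspecting \grsnd{}/\grrec{}. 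You instead run one forward induction on the run with a three-part invariant, and your clause~(a) --- that the configuration's map component coincides with the map the trace ends up with --- is stated explicitly, whereas the paper leaves this correspondence implicit when it invokes the rule premise $\updateMap{M}{x}{c}\models\refinement{r}$ against the map $\map{M_{\trace[\tau^\prime]}}$ the prefix ends up with. Your version is the cleaner of the two: the unified invariant makes the lockstep evolution of queue, map and trace visible, avoids the contrapositive detour, and would also be the form most amenable to mechanisation; the paper's split buys nothing except locality of the two sub-arguments. No gap.
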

The proof is in \inApp{app:refined_automata}.

\section{Refined Multiparty Session Types (RMPST)}
\label{sec:refined_mpst}
In the two previous sections, we introduced refinement validity and
a variant of CS which is correct with respect to our validity criterion.
However, working with
RCS is cumbersome, in particular if we intend to prove additional properties
(e.g.\ deadlock freedom). Fortunately, various models for message-passing
concurrent computation have been developed in the literature, many of which can
be encoded into CS. Multiparty session types (MPST)
\cite{yoshida_very_2020,honda_multiparty_2016,honda_multiparty_2008} is an
example of such
model. We focus on MPST as they have proved successful for many applications
and the theory enjoy many useful properties (e.g. session fidelity,
deadlock freedom, liveness etc). However, MPST is not
the only possible choice, and we sketch different input models in
\inApp{sec:refined_choreo_automata}. In this section, we introduce
\emph{refined multiparty session types} (RMPST), which are an extension of MPST
annotated with refinement predicates and we show how one can extend
existing models to easily obtain refinements.

In \cref{sec:RMPST_Syntax}, we first present the syntax of \emph{global} and
\emph{local} refined multiparty session types, adapted for refinements. In
\cref{sec:MPST_to_CFSM}, we present how to obtain RCS from local RMPST,
extending a standard approach to implement MPST in CS
\cite{Denielou_Multiparty_2012} with refinements.

\subsection{Syntax of RMPST\label{sec:RMPST_Syntax}}
We define the syntax of RMPST.
First we assume that messages carry different sorts of payload. As a reminder,
for simplicity, in our examples, we only consider \(\processType{int}\)
payloads. Also, we recall the conventions from \cref{sec:pred_lang_sem}:
$\setRoles$ is the set of participants and $\setLabels$ is the set of labels.
For recursion, we introduce type variables that range over \(\{\typeVariable{T},
\typeVariable{U}, \dots\}\); \(\typeVariable{t}\) is a meta-variable taken over
the set of type variables. We assume
all type variables appearing in a type are distinct and we do not (syntactically) distinguish global and local type variables.
Finally,
\(\processVariable[i]{x}\) are meta-variables over payload variables taken from
the set $\setVars$.

We first define \emph{global refined multiparty session types}, which are
inductive data types generated by the production \(\nonterm{\gSessionType{G}}\)
in \cref{fig:syntax_glo_loc_types}. The type
\(\gtComm[i][I]{A}{B}{l}[x]{S}[r]{G}\) describes a protocol where
\(\participant{A}\) chooses a label \(\gtLbl{l_i}\) amongst possible \(I\) and
sends a message to \(\participant{B}\). The message contains a payload of type
\(\processType{S_i}\), which is bound to \(\processVariable{x_i}\) when sent.
\emph{Refinement predicates} we
introduce guard the communication they are attached to, meaning the system can
select a choice with predicate \(\refinement{r_i}\) only if
\(\refinement{r_i}\) holds. In that case, the message is sent and the protocol
continues with its continuation of type \(\gtFmt{G_i}\).
\(\gtRec{T}{G}\) binds \(\typeVariable{T}\) in
\(\gSessionType{G}\), and a bound type variable \(\gtRecVar{T}\) in a type denotes
a protocol recursion. Let $\frv{\gtFmt{G}}$ denotes the free recursion
variables occuring in $\gtFmt{G}$. Finally \(\gtEnd\) describes a terminated
protocol.
Let \(\rolesOfGT{G}\) be the set of participants that appear
in \(\gSessionType{G}\)
(c.f.\ \inApp{def:role_in_gtype} for the
definition of \(\rolesOfGT{G}\)). We write \(\participant{p}\in\gtFmt{G}\) for
\(\participant{p}\in\rolesOfGT{G}\).
\begin{figure}
\begin{mathpar}
\small
\begin{grammar}
\firstcase{\gSessionType{G}}{\gtComm[i\in I]{\participant{p}}{\participant{q}}{l_i}[x_i]{\nonterm{S}}[\nonterm{R}]{\nonterm{G}}\gralt
\gtRec{t}{\nonterm{G}}}{communication, recursive type}
\otherform{\gtRecVar{t} \gralt \gtEnd}{type variable, termination}
\firstcase{\lSessionType{L}}{
			\ltIntC[i\in I]%
			{\participant{p}}{l_{i}}{x_{i}}{\nonterm{S}}[\nonterm{R}]{\lSessionType{\nonterm{L}}}
						\gralt
			\ltRecVar{t} \gralt
			\ltEnd
		}{internal choice, type variable, termination}
		\otherform{
			\ltExtC[i\in I]%
			{\participant{p}}{l_{i}}{x_i}{\nonterm{S}}{\nonterm{R}}{\lSessionType{\nonterm{L}}}
			\gralt
			\ltRec{t}{\nonterm{L}}
			}{external choice, recursive type}
		\firstcase{\processType{\nonterm{S}}}{\processType{int}\gralt\dots}{sort
		 (payload
		types)}
	\end{grammar}
	\end{mathpar}
	\caption{Syntax of Global (\(\nonterm{\gSessionType{G}}\)) and Local (\(\nonterm{\lSessionType{L}}\)) Types and Sorts (\(\processType{\nonterm{S}}\)).}
	\label{fig:syntax_glo_loc_types}
\end{figure}
\begin{example}[Refined Global Multiparty Session Type]
	The type \(\gtFmt{G_\pm}\) presented in \cref{sec:intro} is a refined
	global type; we have \(\rolesOfGT{G_\pm} = \{\participant{A},
	\participant{B}, \participant{C}\}\).
\end{example}
To characterise the
behaviour of individual participants, we define \emph{refined local multiparty
session types}, which are inductive datatypes generated by
\(\nonterm{\lSessionType{L}}\) in \cref{fig:syntax_glo_loc_types}.
Recursion, type variables and termination are similar in local and global types.
Only the communication specifications differs: in a local type
\(\ltIntC[i][I]{\participant{p}}{l}{x}{S}[r]{L}\)
describes an \emph{internal choice}, i.e. the participant chooses a label
\(\gtLbl{l_i}\) and sends it to \(\participant{p}\). Conversely,
\(\ltExtC[i][I]{\participant{p}}{l}{x}{S}{r}{L}\)
describes an \emph{external choice}: \(\participant{p}\) makes a choice amongst
the possible \(\gtLbl{l_i}\) and the local participant \emph{receives} this
choice.

Global and local MPST are related: we can \emph{project} a global type onto the
local types of its participants. Below, we define a \emph{projection}
(partial) operator \(\proj{p}{G}\),
which returns the local type of \(\participant{p}\) with respect
to the global type \(\gtFmt{G}\).

We define a projection with a \emph{merge} (partial) operator, which merges multiple local types of a
participant into a single local type. This is used to merge the (possibly
different) types of the continuations present in the communication branches. The
study of different variants of merge operators is an active field (e.g.\
\cite[Section~3]{StutzAsynchronous2023}).
For the sake of simplicity, in this paper we use a naïve merge operator, which
simply ensures that all types are the same.


\begin{definition}[Projection\label{def:proj}]
	Given \(\participant{p}\), \(\participant{q}\) and \(\participant{r}\)
	three distinct participants:\\
	$
	\begin{aligned}
	\proj{p}{\gtComm[i][I]{\participant{p}}{\participant{q}}{l}[x]{S}[R]{G}}
	&= \ltIntC[i\in I]{\participant{q}}{l_i}{x_i}{{\it
			\processType[i]{S}}}[R_i]{\proj{p}{G_i}} \\
	\proj{p}{\gtComm[i][I]{\participant{q}}{\participant{p}}{l}[x]{S}[R]{G}}
	&= \ltExtC[i\in I]{\participant{q}}{l_i}{x_i}{{\it
			\processType[i]{S}}}{\taut}{\proj{p}{G_i}} \\
	\proj{p}{\gtComm[i][I]{\participant{q}}{\participant{r}}{l}[x]{S}[R]{G}}
	&= \merge{(\proj{p}{G_i})}{i\in I}\\
	\end{aligned}$
\begin{mathpar}
		\proj{p}{\gtRec{t}{G^\prime}} = \begin{cases}
			\ltRec{t}{(\proj{p}{G^\prime})} & \text{if
			}\participant{p}\in\gtFmt{G^\prime} \text{ or }
			\frv{\gtFmt{G^\prime}}\neq\emptyset\\
			\ltEnd & \text{otherwise}
		\end{cases}
\and
		\proj{p}{\gtRecVar{t}} = \ltRecVar{t}
\and
		\proj{p}{\gtEnd} = \ltEnd
\end{mathpar}
where a \emph{merge} operator is defined as:
	\(\merge{L_i}{i\in I}\define
\lSessionType{L}\) if \(\forall i\in I\suchthat \lSessionType{L} =
	\lSessionType{L_i}\),
undefined otherwise.
\end{definition}
\todo{This definition has been rewritten to fix presentation issues and
problems with recursion.}

\noindent Notice that our local RMPST accept refinements on both
receiving and sending, and the semantics developed in \cref{sec:ref_auto}
accept any position for verification. When projecting a global type
\(\gtFmt{G} = \gtComm{A}{B}{\ell}[x]{int}[r]{\gtEnd}\) onto local types, we
therefore have a choice to project the refinement:
\begin{itemize}
	\item on the send side:
		\(\proj{A}{G} = \ltIntC{B}{\ell}{x}{int}[r]{\ltEnd}\) and \(\proj{B}{G}
		= \ltExtC{A}{\ell}{x}{int}{\taut}{\ltEnd}\)
	\item on the receive side:
		\(\proj{A}{G} = \ltIntC{B}{\ell}{x}{int}[\taut]{\ltEnd}\) and
		\(\proj{B}{G} = \ltExtC{A}{\ell}{x}{int}{r}{\ltEnd}\)
	\item or a combination of both\footnote{For instance, if we want to
	implement a centralised server that communicates with (isolated) clients,
	we may want all refinements to be asserted by the server, independently of
	the direction.}.
\end{itemize}
Our projection takes the first option, i.e.\ refinements are checked when the
message is emitted, but with any of these choices, our developments would not
substantially change.

\begin{example}[Projection\label{ex:plus_minus_proj}]
	We project \(\gtFmt{G_\pm}\) (\cref{sec:intro}) onto
	participants \(\participant{A}\) and \(\participant{B}\)\footnote{The
	projection onto \(\participant{C}\) is similar to the recursive part
	of the projection onto
	\(\participant{B}\), with ! and ? swapped.}:
$
\begin{array}{l}
\small\proj{A}{G_\pm} = \ltIntC{\participant{B}}{secret}{n}{int}[\taut]{\ltEnd}\\
	\proj{B}{G_\pm} =\\ \ltExtC{\participant{A}}{secret}{n}{int}{\top}{
	\ltRec{T}{
		\ltExtC{C}{guess}{x}{int}{\top}{
			\ltIntCRaw{C}{
			\begin{aligned}
				&\commChoice{more}{}[x < n]{\ltRecVar{T}},\\
				&\commChoice{less}{}[x > n]{\ltRecVar{T}},\\
				&\commChoice{correct}{}[x = n]{\ltEnd}
			\end{aligned}
			}
		}
	}
	}
\end{array}$
\newline\qedhere
\end{example}

\subsection{From Refined MPST to Refined Communicating
System\label{sec:MPST_to_CFSM}}

In this subsection, we show how to generate an RCS from local RMPSTs.
As shown in \cref{def:proj}, local types are projected from global multiparty
session types. Therefore, this step allows us to complete the conversion from a
global RMPST into an RCS.
We adapt the translation from local type to CFSMs presented
in~\cite{ICALP13CFSM} to accommodate refinements in types.

The intuition behind the translation is to decompose a local type into the
individual steps it specifies. For this, we first need to retrieve all
those steps. We define the set of types that occur nested in
another type: a type $\ltFmt{T^\prime}$ \emph{occurs} in a type $\ltFmt{T}$
(noted \(\occursLoc{T^\prime}{T}\)) if
it appears in the continuations of $\ltFmt{T}$ after one or multiple
steps (see \inApp{def:type_occurring_type}).

Given this, we can proceed to the translation itself, in \cref{def:LT_to_RCFSM}.
This definition says that the states of the RCFSM of a local type
\(\ltFmt{T_0}\) are composed of
the (sub)types that appear in \(\ltFmt{T_0}\), stripped of the leading
\(\ltRec{t}{}\) (the function \(\operatorname{strip}\) removes the
leading recursions variables; this formalises \citet[Item~(2) in
Definition~3.4]{ICALP13CFSM}) and of recursive variables \(\ltRecVar{t}\). We
define the set of transitions of this RCFSM by taking the action each type
(i.e. each state) can take, and adding a transition with this action from the
initial state to the continuation (stripped of leading \(\ltRec{t}{}\)). In the
case that the continuation is a recursion variable \(\ltRecVar{t}\), we have to
search in the original type the continuation. Compared to \citet[Item~(2) in
Definition~3.4]{ICALP13CFSM}, we simply add the support for the refinements
predicates, which appear both in the types (i.e. in the states) and in the
actions (i.e. in the transitions).

\begin{definition}[RCFSM of Refined Local Types (extends
Definition~3.5\ in \citet{ICALP13CFSM})\label{def:LT_to_RCFSM}]
	Given a global type \(\gSessionType{G}\), the RCFSM of participant
	\(\participant{p}\) (with local type \(\lSessionType{T_0} =
	\proj{p}{G}\)) is the automaton \(\ltToAut{T_0} = \tuple{Q, C,
	\operatorname{strip}(\ltFmt{T_0}),
		\setMsgs, \delta}\) where:%
\begin{itemize}[leftmargin=*]
	\item
\(Q = \setcomp{\lSessionType{T^\prime}}{\occursLoc{T^\prime}{T_0}
	\wedge \lSessionType{T^\prime}\neq \ltRecVar{t} \wedge
	\lSessionType{T^\prime}\neq \ltRec{t}{T_\mu}}\);
\item
	\(C = \setcomp{\participant{p}\participant{q}}{\participant{p},
		\participant{q} \in \gSessionType{G}, \participant{p} \neq
			\participant{q}}\); and
		\item
\(\delta\) is the smallest set of transitions such that:
for all \(\occursLoc{T}{T_0}\) in \(Q\), for all \(\val{c}\in\setVals\):
\begin{itemize}[leftmargin=*]
	\item if \(\ltFmt{T}\) is \(\ltIntC[i][I]{q}{\ell}{x}{S}[r]{T}\),
			for all \(\lSessionType{T_i}\):
			\begin{itemize}[leftmargin=*]
				\item if \(\lSessionType{T_i}\neq \ltRecVar{t}\),
				then \(\tuple{\lSessionType{T},
					\actionSend{p}{q}{\gtLbl{\ell_i},
						\tuple{\processVariable{x}, \val{c}}}{r},
					\operatorname{strip}(\lSessionType{T_i})}\in\delta\)
				\item if \(\lSessionType{T_i} = \ltRecVar{t}\) with
				\(\ltRec{t}{T^\prime}\in \lSessionType{T_0}\),
				then \(\tuple{\lSessionType{T},
					\actionSend{p}{q}{\gtLbl{\ell_i},
						\tuple{\processVariable{x}, \val{c}}}{r},
					\operatorname{strip}(\lSessionType{T^\prime})}\in\delta\)
			\end{itemize}
			\item if \(\ltFmt{T}\) is \(\ltExtC[i][I]{q}{\ell}{x}{S}{r}{T}\),
			for all \(\lSessionType{T_i}\):
			\begin{itemize}[leftmargin=*]
				\item if \(\lSessionType{T_i}\neq \ltRecVar{t}\),
				then \(\tuple{\lSessionType{T},
					\actionReceive{q}{p}{\gtLbl{\ell_i},
						\tuple{\processVariable{x}, \val{c}}}{r},
					\operatorname{strip}(\lSessionType{T_i})}\in\delta\)
				\item if \(\lSessionType{T_i} = \ltRecVar{t}\) with
				\(\ltRec{t}{T^\prime}\in \lSessionType{T_0}\),
				then \(\tuple{\lSessionType{T},
					\actionReceive{q}{p}{\gtLbl{\ell_i},
						\tuple{\processVariable{x}, \val{c}}}{r},
					\operatorname{strip}(\lSessionType{T^\prime})}\in\delta\)
			\end{itemize}
	\end{itemize}
	\end{itemize}
	where
$\operatorname{strip}(\ltFmt{T})\define
\operatorname{strip}(\ltFmt{T^\prime})$ if  $\ltFmt{T} =
		\ltRec{t}{T^\prime}$; and
$\operatorname{strip}(\ltFmt{T}) \define \ltFmt{T}$ otherwise.
\end{definition}

Finally, we define the \emph{RCS of a type}.

\begin{definition}[Refined Communicating System of a Type]
	\label{def:RCS_of_type}
	The RCS of a type \(\gSessionType{G}\),
	noted \(\RcsOfType{G}\), is a tuple composed of the RCFSM of all
	participants
	\(\RcsOfType{G}\define
	\tuple{\ltToAut{\proj{p}{G}}}_{\participant{p}\in\rolesOfGT{G}}\).
	We note \(\ConfOfType{G}\) the initial configuration of \(\RcsOfType{G}\).
\end{definition}

\begin{example}[Refined Communicating System of \(\gtFmt{G_\pm}\)]
	The communicating system of \(\gtFmt{G_\pm}\) is
	\(\RcsOfType{G_\pm} = \tuple{\ltToAut{\proj{A}{G_\pm}},
	\ltToAut{\proj{B}{G_\pm}}, \ltToAut{\proj{C}{G_\pm}}}\)%
	\iftoggle{full}{%
		, where the three automata are shown in \cref{fig:ex:RCFSM_GPM}
		}{}.

	The initial configuration \(\ConfOfType{G_\pm}\) of this RCS
	\(\RcsOfType{G_\pm}\) is
	\(\tuple{\tuple{\participant{A_1}, \participant{B_1}, \participant{C_1}},
	\emptyqueue, \emptyMap}\).
\end{example}

\cref{thm:traces_GRA_valid_refined} applies to RCS obtained from RMPST:
RCS generated from \cref{def:RCS_of_type} only produce valid refined traces,
with the refined global semantics presented in \cref{def:semantics_RGA}. Notice
also that, if refinements always hold, RMPST and their semantics coincide with
the semantics presented in \cite{Denielou_Multiparty_2012}.

\section{Decentralised Refined Multiparty Session Types}
\label{sec:dynamic}
In the previous section, we presented RCS and we showed that every trace of an
RCS is a valid refined trace. However, RCS are theoretical constructions and
are not intended to be implemented directly, as they use a global shared map of
variables. In practice, a user may want to develop more precise analysis
techniques on specific classes of RCS to remove this global map, which allows a
decentralised verification of refinements, while keeping the validity of refined
traces.

The goal of this section is twofold: on the one hand, the decentralised
semantics we develop serves as a theoretical background for our implementation
(\cref{sec:implem}). On the other hand, it illustrates the modularity of our
framework. We show that the decentralised approach produces valid refined traces
by showing refined configurations we developed in \cref{sec:ref_auto} simulate
decentralised systems. This approach is not specific to our variant: we expect
other optimisations presented in the literature could be integrated similarly.

This section is divided in the following steps: first, we define what we mean by
\emph{decentralised verification of the refinements}, by adapting the semantics
of RCS (\cref{def:DConf,def:semantics_DGA}). We split the global map of
variables' values into local maps (one per participant). Then, we show that
despite being
modified, the new
variant still produces valid refined traces (\cref{def:valid_refined_trace}).
We justify this claim by proving that under some
conditions, the original RCS \emph{simulates}
(c.f.\ \cite[Exercise~1.4.17, p.\ 26]{SangiorgiIntroduction2012}) the
decentralised variant
(\cref{thm:st_sim_dyn}).
Since trace equivalence is coarser than simulation, this is sufficient to
prove that decentralised configurations that meet the said conditions produce
valid refined traces.

The conditions we mentioned above are:
\begin{inlineenum}
	\item variables should not be duplicated
	\item when evaluating a predicate, the free variables of the predicate must
	be
	in the local map.
\end{inlineenum}
Without the first condition, we can possibly have two distinct values assigned
to the
same variable without being able to distinguish which is the most recent. The
second condition is required to verify the refinements locally
(e.g. predicates that constraint an action of \(\participant{A}\) should be
checked by \(\participant{A}\) itself). To illustrate the importance of the
first condition, consider the type
$\gtComm{A}{B}{\ell_1}[x]{int}{\gtComm{C}{D}{\ell_2}[x]{int}{\gtEnd}}$. In the
centralised approach, $\processVariable{x}$ is aliased, while in the
decentralised approach, the $\processVariable{x}$ exchanged between
$\participant{A}$ and $\participant{B}$ is stored in a local map, and the
$\processVariable{x}$ exchanged between $\participant{C}$ and $\participant{D}$
is stored in another local map; both are not aliased. To prevent different
semantics, we need to prevent such difference, which is the goal of the first
condition.

\myparagraph{Decentralised Configurations and Decentralised
Semantics.\label{sec:dynamic_CS}}
First, we define \emph{decentralised} configurations in \cref{def:DConf}.
Compared to \cref{def:RConf}, instead of a global map in the tuple, we
associate a local map to each automata state. Those maps store the variables
each participant has access to.

\begin{definition}[Decentralised Configuration\label{def:DConf}]
	A \emph{Decentralised Configuration} of an RCS
	\(\RcsOfType{G} =
	\tuple{\tuple{Q_\participant{i}, C_\participant{i}, q_{0, \participant{i}},
	\set{A}, \delta_\participant{i}}}_{\participant{i}\in\rolesOfGT{G}}\) is a
	tuple
	\(\tuple{\tuple{\tuple{s_\participant{1}, \map{M}_\participant{1}}, \dots,
	\tuple{s_\participant{n}, \map{M}_\participant{n}}},
	\queue{w}}_{\RcsOfType{G}}\)
	where each \(s_\participant{i}\in Q_\participant{i}\), each
	\(\map{M}_\participant{i}\) is a local map of variables to values, and
	\(\queue{w}\) is a queue of messages.

	Let \(\set{S}_D\) be the set of decentralised configurations.
	We note \(\DConfOfType{G}\) the initial decentralised configuration of \(\RcsOfType{G}\).
\end{definition}

\Cref{rmk:indexed_configuration} also applies for decentralised configurations.

\begin{example}[Initial decentralised configuration of \(\gtFmt{G_\pm}\)]
	In \cref{ex:RCS}, we presented the refined communicating system of
	\(\gtFmt{G_\pm}\) and its associated refined configuration.
	The \emph{initial decentralised configuration} of this system is
	\(\tuple{\tuple{\participant{A_1}, \emptyMap}, \tuple{\participant{B_1}, \emptyMap}, \tuple{\participant{C_1}, \emptyMap}, \emptyqueue}\).
	In particular, notice that it uses the same set of refined CFSM than the
	refined configuration.
\end{example}

The global reduction rules are adapted
accordingly: in the rules \drec{} and \dsnd{} ("D" stands for
"decentralised"), when a message is sent or received, the corresponding local
map is updated, instead of a global map as in \grrec{} and \grsnd{}.

\begin{remark}
	Contrary to \cref{def:semantics_RGA}, when a variable is
	sent, it is removed from the local map of variables. Intuitively,
	when a participant sends a variable, it erases its knowledge of it,
	to prevent aliasing issues. A direct consequence of this is that, in
	the centralised implementation, the global map of variables is a
	\emph{superset} of the local maps in the corresponding decentralised
	implementation. Indeed, while a variable is in transit, it appears
	neither in the sender's map, nor in the receiver's one. This
	observation will be proved together with the simulation proof
	(\cref{thm:st_sim_dyn}).
\end{remark}

\begin{definition}[Decentralised Global Semantics\label{def:semantics_DGA}]
	Given an RCS $R=\tuple{M_\participant{p}}_{\participant{p}\in\setRoles}$
	\begin{mathpar}
	\inferrule*[left=\drec{}]{
		t=\redact{s_\participant{i}}{s_{\participant{i}}^\prime}{\actionReceive{j}{i}{\gtLbl{\ell},
		\tuple{\processVariable{x}, \val{c}}}{r}}\in\delta_{\participant{i}}
		\and
		\nextElemQueue{w}{j}{i} = \msgM[\tuple{\gtLbl{\ell}, \tuple{\processVariable{x}, \val{c}}}]
		\and
		\updateMap{M_\participant{i}}{x}{c}\models\refinement{r}
	}{
		\tuple{\tuple{ \dots, \tuple{s_\participant{i},
		\map{M}_\participant{i}}, \dots},
		\queue{w}}_{R}
		\csred[t]
		\tuple{\tuple{
		\dots, \tuple{s_\participant{i}, \updateMap{M_\participant{i}}{x}{c}},
		\dots},
		\popQueue{w}{j}{i}}_{R}
	}
	\and
	\inferrule*[left=\dsnd{}]{
		t=\redact{s_\participant{i}}{s_{\participant{i}}^\prime}{\actionSend{i}{j}{\gtLbl{\ell},
		\tuple{\processVariable{x}, \val{c}}}{r}}\in\delta_{\participant{i}}
		\and
		\updateMap{M_\participant{i}}{x}{c}\models \refinement{r}
	}{
		\tuple{\tuple{
		\dots, \tuple{s_\participant{i}, \map{M}_\participant{i}}, \dots},
		\queueQ}_{R}
		\csred[t]
		\tuple{\tuple{
		\dots, \tuple{s_\participant{i},
		\removeMap{\map{M}_\participant{i}}{x}}, \dots
		},
		\pushQueue{\queueQ}{\msgM[\tuple{\gtLbl{\ell},
		\tuple{\processVariable{x},	\val{c}}}]}{i}{j}}_{R}
	}
	\qedhere
	\end{mathpar}
\end{definition}

\myparagraph{Conditions for Decentralised Verification and Correctness Proofs.}
We now focus on proving that this decentralised semantics produces valid refined
traces. As we mentioned above, this holds under two conditions, which we define
first:

\begin{definition}[Conditions for Decentralised Verification Simulation\label{def:cond_dyn_verif}]
	Given a decentralised configuration
	\(\tuple{\tuple{\tuple{s_\participant{i},\map{M_\participant{i}}}, \dots},
	\queueQ}\),
	the \emph{conditions for simulation} are:
	\begin{enumerate}
		\item No duplication: \begin{enumerate}
				\item if
					\(\exists\map{M_\participant{i}}\suchthat\processVariable{x}\in\domMap{M_\participant{i}}\),
					then
					\(\forall\participant{i},\participant{j}\suchthat\processVariable{x}\not\in\lookupQueue{\queueQ}{i}{j}\)
					and
					\(\forall\participant{j}\neq\participant{i}\suchthat\processVariable{x}\not\in\domMap{M_\participant{j}}\).
				\item if
					\(\exists\tuple{\participant{i},\participant{j}}\suchthat\processVariable{x}\in\lookupQueue{\queueQ}{i}{j}\),
					then
					\(\forall\participant{i}\suchthat\processVariable{x}\not\in\domMap{M_\participant{i}}\)
					and
					\(\forall\tuple{\participant{i^\prime},\participant{j^\prime}}\neq\tuple{\participant{i},\participant{j}}\suchthat\processVariable{x}\not\in\lookupQueue{\queueQ}{i^\prime}{j^\prime}\).
		\end{enumerate}
		\item Free variables are in the map:
		\(\forall\participant{i}\suchthat\forall
					s_\participant{i}^\prime\suchthat
					\redact{s_\participant{i}}
						{s_{\participant{i}}^\prime}
						{\actionGeneric{i}{j}
							{\gtLbl{\ell},\tuple{\processVariable{x},\val{c}}}{r}}\suchthat
					\fv{r}\subseteq\domMap{\updateMap{M_\participant{i}}{x}{c}}\)\qedhere
\end{enumerate}
\end{definition}
\begin{definition}[Decentralisable Type\label{def:decentralisable_type}]
	A type $\gtFmt{G}$ is \emph{decentralisable} if the two conditions hold for
	all reachable decentralised configurations
	from \(\DConfOfType{G}\).
\end{definition}

Notice that the second condition is redundant, as the condition
\(\updateMap{M_\participant{i}}{x}{c}\models \refinement{r}\) (in the premises
of the reduction rules) already requires
that \(\fv{r}\) is a subset of the variables in
\(\updateMap{M_\participant{i}}{x}{c}\). Even without making this
condition explicit, the system would stall if a predicate cannot be verified.
For the
sake of clarity, we keep it explicit in the two required conditions.

We now observe a correspondence between the (centralised) refined
configuration and the decentralised configuration of a global type
\(\gtFmt{G}\).
To characterise the correspondence between centralised and
decentralised configuration, we establish a \emph{simulation} relation
between the two (see \inApp{sec:app:simulation} and
\citet{SangiorgiIntroduction2012}). Intuitively, a simulation
captures the fact that one
system (the centralised configuration in our case) can mimic all
transitions of another system (the decentralised one here).

We can now prove the main result of this section, which is that the
decentralised semantics does not induce new (unwanted) behaviours, i.e. all
decentralised transitions can be mimicked by centralised transitions, i.e.\ the
centralised approach simulates the decentralised one.

\begin{restatable}[Centralised simulates
Decentralised\label{thm:st_sim_dyn}]{theorem}{centralisedSimulatesDecentralised}
	For all decentralisable RMPST \(\gtFmt{G}\)
	(\cref{def:decentralisable_type}),
	\(\ConfOfType{G}\) simulates \(\DConfOfType{G}\).
\end{restatable}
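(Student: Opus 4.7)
The plan is to exhibit an explicit simulation relation $\mathcal{R} \subseteq \set{S} \times \set{S}_D$ between centralised and decentralised configurations of $\RcsOfType{G}$, show that $\tuple{\ConfOfType{G}, \DConfOfType{G}} \in \mathcal{R}$, and then verify that $\mathcal{R}$ is closed under the transitions of the decentralised semantics. Concretely, I take the relation
\[
\tuple{\tuple{s_1,\dots,s_n}, \queueQ, \map{M}} \mathrel{\mathcal{R}} \tuple{\tuple{\tuple{s_1,\map{M_1}},\dots,\tuple{s_n,\map{M_n}}}, \queueQ}
\]
whenever the local states and queue coincide on both sides, and $\map{M}$ equals the (disjoint) union $\biguplus_{i} \map{M_i} \uplus \map{M_{\queueQ}}$, where $\map{M_{\queueQ}}$ is the map collecting the variable/value pairs attached to every message currently in transit in $\queueQ$. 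The first decentralisability condition is exactly what makes this union disjoint and hence makes $\map{M_{\queueQ}}$ well defined, so that $\mathcal{R}$ is a function of the decentralised configuration alone.

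The base case is immediate: $\ConfOfType{G}$ and $\DConfOfType{G}$ both have the initial local states, empty queue and empty map(s). For the inductive step, I consider the two rules \drec{} and \dsnd{} in turn. In the \dsnd{} case, suppose the decentralised configuration fires the local transition $t$ using $\updateMap{M_i}{x}{c} \models \refinement{r}$. The second decentralisability condition gives $\fv{\refinement{r}} \subseteq \domMap{\updateMap{M_i}{x}{c}}$, and the $\mathcal{R}$-invariant ensures that on this set of variables $\map{M_i}$ and $\map{M}$ agree; hence $\updateMap{M}{x}{c} \models \refinement{r}$, so \grsnd{} fires the same $t$ on the centralised side. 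It remains to check that the resulting configurations are again in $\mathcal{R}$: after the step, $x$ has left $\map{M_i}$ and entered $\map{M_{\queueQ}}$ in the decentralised configuration, while $\map{M}$ has been updated with $x \mapsto c$ on the centralised side; a short computation shows the new union on the right again equals the new global map. The \drec{} case is symmetric: the element carrying $x \mapsto c$ moves from $\map{M_{\queueQ}}$ into $\map{M_i}$, which matches the centralised update of $\map{M}$, and the refinement check transfers in the same way.

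I expect the main obstacle to be not the algebraic bookkeeping, but rather the careful use of decentralisability in the closure argument. In particular, the simulation must be invoked only on reachable decentralised configurations, because the no-duplication condition is phrased over reachable states. I therefore plan to prove by simultaneous induction that (i) $\mathcal{R}$ is preserved along any decentralised transition from a pair of reachable states, and (ii) the targets of such transitions are again reachable (so the two conditions continue to hold for the next step). The subtlety is that conditions of \cref{def:cond_dyn_verif} guarantee disjointness of $\biguplus_i \map{M_i} \uplus \map{M_{\queueQ}}$ pre-step, and a short case analysis on \drec{}/\dsnd{} shows the union stays disjoint after the step: \dsnd{} moves $x$ from $\map{M_i}$ into the queue and the no-duplication hypothesis forbids $x$ from being anywhere else beforehand, while \drec{} reverses this. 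Once reachability and disjointness are propagated, the refinement-consistency argument above completes the simulation, and the theorem follows by coinduction on decentralised runs from $\DConfOfType{G}$.
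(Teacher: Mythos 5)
Your proposal follows essentially the same route as the paper's proof: both exhibit the relation that identifies local states and queues and constrains the centralised map by the local maps, and both discharge \dsnd{}/\drec{} by showing that the decentralised premises entail the centralised ones (so \grsnd{}/\grrec{} can fire the same local transition $t$). The one substantive difference is the map invariant. The paper only maintains that the centralised map is a \emph{superset} of $\biguplus_{\participant{i}}\map{M_\participant{i}}$ (agreeing on values), which already suffices to transfer $\updateMap{M_\participant{i}}{x}{c}\models\refinement{r}$ to $\updateMap{M}{x}{c}\models\refinement{r}$ via the closedness requirement; it never accounts for in-transit values at all. Your invariant is the exact decomposition $\map{M}=\biguplus_{\participant{i}}\map{M_\participant{i}}\uplus\map{M_{\queueQ}}$, which is tighter but is also the one place you could get into trouble: condition 1(b) of \cref{def:cond_dyn_verif}, as literally stated, forbids a variable from occurring simultaneously in a local map and a queue, or in two \emph{distinct} FIFOs, but not from occurring twice in the \emph{same} FIFO --- and nothing in \dsnd{} blocks a second send of $\processVariable{x}$ before the first copy is consumed, since the premise updates the local map before checking the refinement. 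In such a configuration $\map{M_{\queueQ}}$ is not a well-defined map and your disjoint union breaks, whereas the paper's superset invariant (which ignores queue contents) survives. Either weaken your invariant to the superset form, or define $\map{M_{\queueQ}}$ to bind each variable to its most recently enqueued value. Your insistence on restricting the relation to reachable decentralised configurations is well taken: the decentralisability conditions are quantified over reachable states, and the paper's proof uses them without making this restriction explicit.
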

\begin{proof}
The proof is available in \inApp{sec:app:centralised_simulated_decentralised}.
\end{proof}

This result shows that any type that verifies the conditions stated in
\cref{def:cond_dyn_verif} can be verified in a decentralised way. The
difficulty is that the conditions are about the execution: we do not know
whether a predicate will have a missing variable during the
execution. With a knowledge flow algorithm, we can infer (from the
communication specifications in the global type) which participant has
access to which variables at any point in the execution of the protocol, i.e.\
we can \emph{localise} each variable throughout the execution of the protocol.
This algorithm (which we present in \inApp{sec:static_verif_conditions}) does
not present major challenges.

Notice that the reverse simulation does
not hold: \(\DConfOfType{G}\) does not simulate \(\ConfOfType{G}\). Indeed,
\(\ConfOfType{G}\) can verify a predicate whose variables are spread over
different participants, i.e. where variables would be spread across multiple
\(\map{M_\participant{i}}\) in the decentralised variant.

\section{Static Elision of Redundant Refinements}
\label{sec:static_elision}
In this section, we present a second optimisation, which is complimentary from
the first one. The main idea is to statically analyse a given protocol to
find and remove redundant refinements. Our approach is to consider a
\emph{target} transition, which we
	aim to remove the refinement, if possible. Our optimisation can then be
	applied
	successively to different target transitions one after each other.
For instance, consider the
following protocol $\gtFmt{G_s}$. We target the
second refinement,
$\refinement{x < 10}$, which necessarily holds if the first one does
(since
$\processVariable{x}$ does not change). Therefore it is redundant and can be
removed.\\[1mm]
\centerline{\small
$\gtFmt{G_s} = \gtComm{A}{B}{\ell_1}[x]{int}[x < 0]{
	\gtComm{A}{B}{\ell_2}[y]{int}[x < 10]{\gtEnd}
}$}

However, removing refinements is not always trivial, since the
communication semantics is asynchronous. Consider for instance the following
type~:
\\[1mm]
\centerline{\small
	$\gtComm{A}{C}{\ell_1}[x]{int}[x > 20]{
	\gtComm{A}{B}{\ell_2}[x]{int}[x < 0]{
	\gtComm{C}{B}{\ell_3}[y]{int}[x < 10]{\gtEnd}
}}$
}\\[1mm]

	A naïve approach would be to remove the refinement of the last
	communication ($\refinement{x < 10}$), since the previous
	communication has a stronger guarantee ($\refinement{x < 0}$). However, due
	to the asynchrony of communications, the second and third
	communications could be swapped at runtime, but the refinement
	($\refinement{x < 10}$)
	does not hold before the action $\gtComm{A}{B}{\ell_2}[x]{int}[x <
	0]{\dots}$ occurs. Therefore, in this case, removing the last refinement is
	incorrect. The optimisation we present takes into account those cases, by
	keeping track of causal relations between actions.

This section is independent of the previous one, although this second
optimisation can help to make
some protocols localisable: for instance, $\gtFmt{G_s}$ above is not
localisable. Since the second step
$\gtComm{A}{B}{\ell_2}[y]{int}[x < 10]{\gtEnd}$
requires $\participant{A}$ to access $\processVariable{x}$, which is at
$\participant{B}$. However, once removed, the protocol becomes localisable, and
can therefore be decentralised, helping the first optimisation introduced in
\cref{sec:dynamic}.

As with the previous section, the optimisation we present could easily be
further improved. Here, we focus on a simple case, as our goal is not to
discuss the optimisation itself, but rather to show the versatility of the
framework.

We present this section in two steps: first, in \cref{sec:static_elision:RCS},
we focus on RCS, which form the core of our framework; then, in
\cref{sec:static_elision:MPST}, we apply the above result to RMPST.

\subsection{Static Elision of Refinements in RCS}
\label{sec:static_elision:RCS}

In a first step, we develop and prove the correctness of our analysis in RCS.
The question is therefore whether, given a RCS $R$ with one CFSM containing a
transition with refinement $\refinement{r}$, this RCS $R$ is equivalent
(bisimilar) to an RCS where $\refinement{r}$ is replaced with $\taut$.

For the sake of
simplicity, in this subsection, we’ll explain the static elision of refinements
in RCS using examples. Formal definitions, lemmas and their proofs are
available in \inApp{sec:app:static_elision_RCS}. We use the RCS of
$\gtFmt{G_s}$
shown in \cref{fig:se:running_example_RCS}.

\begin{figure}
	\begin{subfigure}{\textwidth}
		\begin{center}
			\begin{tikzpicture}[node distance=4.2cm]
				\draw node[draw, rounded rectangle] (G1)
				{\(\participant{A_1}\)};
				\draw node[draw, rounded rectangle, right=3.5cm of G1] (G2)
				{\(\participant{A_2}\)};
				\draw node[draw, rounded rectangle, right=3.5cm of G2] (G3)
				{\(\participant{A_3}\)};
				\coordinate [above=.5cm of G1] (start);
				\draw[-{Stealth[scale=1.5]}] (start) -- (G1);
				\draw[-{Stealth[scale=1.5]}] (G1) to node[anchor=south, pos=.5,
				sloped] {\(\actionSend{A}{B}{\gtLbl{\ell_1},
				\tuple{\processVariable{x}, \val{\_}}}{x<0}\)} (G2);
				\draw[-{Stealth[scale=1.5]}] (G2) to node[anchor=south, pos=.5,
				sloped] {\(\actionSend{A}{B}{\gtLbl{\ell_2},
				\tuple{\processVariable{y}, \val{\_}}}{x<10}\)} (G3);
			\end{tikzpicture}
			\caption{RCFSM of \(\participant{A}\) in the protocol $\gtFmt{G_s}$
			(\(\ltToAut{\proj{A}{G_s}}\)).}
		\end{center}
	\end{subfigure}
	\begin{subfigure}{\textwidth}
	\begin{center}
		\begin{tikzpicture}[node distance=4.2cm]
			\draw node[draw, rounded rectangle] (G1)
			{\(\participant{B_1}\)};
			\draw node[draw, rounded rectangle, right=3.5cm of G1] (G2)
			{\(\participant{B_2}\)};
			\draw node[draw, rounded rectangle, right=3.5cm of G2] (G3)
			{\(\participant{B_3}\)};
			\coordinate [above=.5cm of G1] (start);
			\draw[-{Stealth[scale=1.5]}] (start) -- (G1);
			\draw[-{Stealth[scale=1.5]}] (G1) to node[anchor=south, pos=.5,
			sloped] {\(\actionReceive{A}{B}{\gtLbl{\ell_1},
					\tuple{\processVariable{x}, \val{\_}}}{\taut}\)} (G2);
			\draw[-{Stealth[scale=1.5]}] (G2) to node[anchor=south, pos=.5,
			sloped] {\(\actionReceive{A}{B}{\gtLbl{\ell_2},
					\tuple{\processVariable{y}, \val{\_}}}{\taut}\)} (G3);
		\end{tikzpicture}
		\caption{RCFSM of \(\participant{B}\) in the protocol $\gtFmt{G_s}$
		(\(\ltToAut{\proj{B}{G_s}}\)).}
	\end{center}
\end{subfigure}
\caption{RCFSM of the RCS of $\gtFmt{G_s}$, the running example of
\cref{sec:static_elision}}
\label{fig:se:running_example_RCS}
\end{figure}

If we aim to i.e.\ transitions which payload modify variables
	that do
	not appear free in the refinement of the considered transition.

\begin{example}[Independent transitions]
	In $\RcsOfType{G_s}$,
	$
	\redact{\participant{A_2}}{\participant{A_3}}
	{\ltsLabelSend{A}{B}{\gtLbl{\ell_2},\tuple{\processVariable{y},\val{\_}}}[x<10]}
	$ depends on the variable $\processVariable{x}\in\fv{x<10}$.
	This transition is self-independent.
	Since the payload of
	$
	\redact{\participant{A_1}}{\participant{A_2}}
	{\ltsLabelSend{A}{B}{\gtLbl{\ell_1},\tuple{\processVariable{x},\val{\_}}}[x<0]}
	$ is $\processVariable{x}$, the former transition depends on the later.
\end{example}

\begin{remark}
	We note $\set{T}_{\processVariable{x}}$ the set of transitions
	$\redact{\sigma}{\sigma^\prime}{\actionGeneric{\_}{\_}{\gtLbl{\_},\tuple{\processVariable{x},\val{\_}}}{\_}}$.
	Given a transition $t$ with refinement $\refinement{r}$, if
	$\processVariable{x}\in\fv{r}$, then $t$ depends on all transitions of
	$\set{T}_{\processVariable{x}}$.
\end{remark}

Essentially, when attempting to
remove a refinement from a target transition $t$, we can disregard all
transitions $t$ is independent of.

The second definition we will need is about transitions being
\emph{well-defined}. So far, nothing prevents us to use refinements
with undefined free variables, we simply consider the refinement
does not hold (c.f.\ \cref{def:semantics_RGA}). In this section, we
specifically focus on
systems where free variables of refinements are in the map when the refinement
is evaluated. When it is the case, we call transitions with such refinements
\emph{well-defined}.

\begin{example}[Well-defined transition]
	Considering the RCS in \cref{fig:se:running_example_RCS}. In the RCFSM of
	$\participant{A}$, the (local) state $\participant{A_2}$ is only accessible
	with a transition
	$
	\redact{\participant{A_1}}{\participant{A_2}}
	{\ltsLabelSend{A}{B}{\gtLbl{\ell_1},\tuple{\processVariable{x},\val{\_}}}[x<0]}
	$. Therefore, any global state $\tuple{\tuple{\participant{A_2},
	\participant{B_{\{1, 2, 3\}}}}, \queue{\_}, \map{M}}$ necessarily contains
	a
	preceding transition
	$
	\redact{\participant{A_1}}{\participant{A_2}}
	{\ltsLabelSend{A}{B}{\gtLbl{\ell_1},\tuple{\processVariable{x},\val{\_}}}[x<0]}
	$. Therefore, $\processVariable{x}$ is always in the map $\map{M}$ of that
	state.

	Therefore, the transition
	$
	\redact{\participant{A_2}}{\participant{A_3}}
	{\ltsLabelSend{A}{B}{\gtLbl{\ell_2},\tuple{\processVariable{y},\val{\_}}}[x<10]}
	$ is well-defined.
\end{example}

We can now conclude our analysis technique: consider a target transition
$t$ with refinement $\refinement{r}$ that is self-independent (it does not
modify the variables of its refinement) and well-define. If all transitions
that modify the free variables of $\refinement{r}$ can guarantee (via their
refinement) that the modification they do is correct with respect to
$\refinement{r}$, then we can safely remove $\refinement{r}$.

\begin{restatable}[Correctness of refinement
elision]{theorem}{StaticElisionCorrect}
	\label{thm:static_elision:correct}
	Given an RCS $R$ containing an RCFSM
	$M = \tuple{Q, C, q_0, \set{A}, \delta}$, and
	$t=\redact{s_\participant{i}}{s_\participant{i}^\prime}
	{\actionGeneric{p}{q}{m}{r}}\in\delta$,
	a well-defined self-independent transition.
	Let $t^\prime = \redact{s_\participant{i}}{s_\participant{i}^\prime}
	{\actionGeneric{p}{q}{m}{\taut}}$;
	$\delta^\prime = \delta\setminus\{t\}\cup t^\prime$;
	$M^\prime = \tuple{Q, C, q_0, \set{A}, \delta^\prime}$;
	and $R^\prime$ be $R$ where $M$ is replaced with $M^\prime$.
	If, for each transition
	$t_w = \redact{\_}{\_}{\actionSend{\_}{\_}{\_}{r_w}}$
	in
	$\bigcup_{\processVariable{x}\in\fv{r}}\set{T}_{\processVariable{x}}$,
	for all map $\map{M}$,
$\map{M}\models \refinement{r_w}$ entails $\map{M}\models\refinement{r}$,
then	there exists a bisimulation 
relating the states of
$R^\prime$ and $R$.
\end{restatable}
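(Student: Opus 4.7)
The plan is to exhibit a bisimulation between the reachable configurations of $R^\prime$ and those of $R$. Since the two systems share the same state spaces and differ only in that $R$ has transition $t$ (with refinement $\refinement{r}$) where $R^\prime$ has $t^\prime$ (with refinement $\taut$), the natural candidate is the identity relation on configurations reachable from the common initial state. The forward direction is immediate: all transitions other than $t$ are literally shared between the two systems, and the premise $\updateMap{M}{x}{c} \models \refinement{r}$ required to fire $t$ in $R$ is strictly stronger than $\updateMap{M}{x}{c} \models \taut$ required to fire $t^\prime$ in $R^\prime$, so $t^\prime$ can fire whenever $t$ fires, with an identical resulting configuration.

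The backward direction is where the real work lies: I must show that whenever $t^\prime$ fires in $R^\prime$, the transition $t$ is also enabled in $R$, i.e., $\map{M} \models \refinement{r}$ at that configuration. I would reduce this to a key invariant on every reachable configuration of $R^\prime$: \emph{if $\fv{r} \subseteq \domMap{M}$, then $\map{M} \models \refinement{r}$}. Combined with the well-definedness of $t$, which guarantees $\fv{r} \subseteq \domMap{M}$ whenever the source state of $t$ is the current local state of the relevant participant, this invariant yields exactly the premise needed to fire $t$ in $R$.

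The invariant would be proved by induction on the length of the run reaching the configuration, analysing the last fired transition. A transition that touches no variable in $\fv{r}$ leaves the restriction of $\map{M}$ to $\fv{r}$ unchanged and preserves the invariant trivially. A \emph{send} transition whose payload variable lies in $\fv{r}$ belongs to $\bigcup_{\processVariable{x} \in \fv{r}} \set{T}_{\processVariable{x}}$; by the theorem's hypothesis, its refinement $r_w$ entails $\refinement{r}$ on every map, and since the semantics requires $r_w$ to hold on the updated map, so does $\refinement{r}$. The self-independence of $t$ ensures that firing $t^\prime$ itself updates no variable in $\fv{r}$ and hence trivially preserves the invariant.

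The principal obstacle is the remaining sub-case: a \emph{receive} transition updating a variable $\processVariable{x} \in \fv{r}$, since the theorem's hypothesis explicitly constrains only sends. I plan to close this case by strengthening the invariant to simultaneously constrain queues: for every in-flight message $\tuple{\gtLbl{\ell}, \tuple{\processVariable{x}, \val{c}}}$ sitting in a FIFO with $\processVariable{x} \in \fv{r}$, the payload $\val{c}$ was emitted by a send whose refinement entailed $\refinement{r}$ at that moment. This queue invariant is established by the send case above and, together with the inductive hypothesis on other variables in $\fv{r} \setminus \{\processVariable{x}\}$ (which evolve only through already-certified sends and receives), ensures that updating $\processVariable{x}$ to $\val{c}$ on the current map preserves $\refinement{r}$. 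The argument then closes by mutual reinforcement between the map invariant and the queue invariant.
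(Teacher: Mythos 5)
Your proposal matches the paper's proof: the paper also takes the identity relation on reachable states, dismisses the direction where $R^\prime$ mimics $R$ as trivial, and reduces the converse direction to exactly your invariant (stated there as a standalone lemma: on every reachable state, $\fv{r}\subseteq\domMap{M}$ implies $\map{M}\models\refinement{r}$), proved by induction on the run with the same case split on whether the last transition's payload variable lies in $\fv{r}$, using well-definedness to discharge the domain condition and the entailment hypothesis to discharge the send case. Your queue invariant for the troublesome receive case is a repackaging of the paper's argument, which instead invokes a send-before-receive lemma to trace the in-flight value back to the send that certified it.
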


\begin{proof}
Proving each direction of the bisimulation is direct (see the proof in
\inApp{sec:app:static_elision_RCS}).
\end{proof}

\begin{example}[Application of \cref{thm:static_elision:correct}]
	The following RCFSM, where $\refinement{x<10}$ is removed, is a valid
	replacement for $\ltToAut{\proj{A}{G_s}}$ in $\RcsOfType{G_s}$.
	\begin{center}
	\begin{tikzpicture}[node distance=4.2cm]
		\draw node[draw, rounded rectangle] (G1)
		{\(\participant{A_1}\)};
		\draw node[draw, rounded rectangle, right=3.5cm of G1] (G2)
		{\(\participant{A_2}\)};
		\draw node[draw, rounded rectangle, right=3.5cm of G2] (G3)
		{\(\participant{A_3}\)};
		\coordinate [above=.5cm of G1] (start);
		\draw[-{Stealth[scale=1.5]}] (start) -- (G1);
		\draw[-{Stealth[scale=1.5]}] (G1) to node[anchor=south, pos=.5,
		sloped] {\(\actionSend{A}{B}{\gtLbl{\ell_1},
				\tuple{\processVariable{x}, \val{\_}}}{x<0}\)} (G2);
		\draw[-{Stealth[scale=1.5]}] (G2) to node[anchor=south, pos=.5,
		sloped] {\(\actionSend{A}{B}{\gtLbl{\ell_2},
				\tuple{\processVariable{y}, \val{\_}}}{\taut}\)} (G3);
	\end{tikzpicture}
	\end{center}
\end{example}

\subsection{Application to RMPST Protocols}
\label{sec:static_elision:MPST}

The above subsection explains how to remove some redundant refinements in RCS.
In this subsection, we intend to do the same, focusing on RMPST instead of RCS.

Our goal is the following: we are given an RMPST $\gtFmt{G}$, and we would like
to
remove one of its refinement (which we call the \emph{target}
refinement
$\refinement{r}$).
For the sake of simplicity, in this section, we assume all labels are uniquely
used\iftoggle{full}{ (which we use to prove \cref{thm:happens_before_correct})}{}. For the general
case, we can simply uniquely rename redundant labels. Overall, the roadmap for
this subsection is to show that given the type $\gtFmt{G^\prime}$, which is
$\gtFmt{G}$ where $\refinement{r}$ is \emph{replaced} by $\taut$, $\gtFmt{G}$
and $\gtFmt{G^\prime}$ behave similarly, i.e.\ the RCS the generate are
bisimilar.
To achieve this, we show that \cref{thm:static_elision:correct} applies to
$\RcsOfType{G}$ and $\RcsOfType{G^\prime}$. Therefore, the main point is
finding conditions on RMPST that ensures hypothesis of
\cref{thm:static_elision:correct} holds; we have to verify the following
items:
\begin{enumerate}
	\item all transitions our refinement depends on should entail the refinement
	itself;
	\label{lst:static_elision:mpst:i}
	\item the transition that carries the refinement must be well-defined%
		\iftoggle{full}{ (\cref{def:static:well_defined})}{}. Since variables cannot be
	removed from the map, the first occurrence of the
	target transition must respect the domain condition. Therefore, for this step,
	we can ignore recursion.
	\label{lst:static_elision:mpst:ii}
\end{enumerate}

The main difference with automata is that, in types, we have
\emph{communications}, which possibly contains choices with multiple branches;
and we our goal is to remove the refinement of one of those branches.
Therefore, we first introduce \emph{steps} of a
communication, i.e.\
given a
choice, what are the possible choices it can take. We then extend this to
types. We show that steps in a type correspond to
transitions in the automata of that type.

\begin{example}[Step]
	\label{ex:step_in_type}
%
	The type
	$\gtFmt{G_y} = \gtComm{A}{B}{\ell_2}[y]{int}[x < 10]{\gtEnd}$ has the step
	$\participant{A}\rightarrow\participant{B}\msgM[\tuple{\gtLbl{\ell_2},
	\processVariable{y}}]\models\refinement{x<10}$.
	Since $\gtFmt{G_y}$ occurs in one of the branches of $\gtFmt{G_s}$
	(from the introduction of this section), this step
	\emph{occurs} in $\gtFmt{G_s}$.
\end{example}

Given this notion of steps occurring in a type that is analogous to transitions
in the RCFSM of that type, we can now focus on the conditions of
\cref{thm:static_elision:correct}. Therefore, we have to characterise what
corresponds to \emph{well-defined transitions} in a type. Since transitions (in
RCS) and steps (in types) are analogous, we introduce
\emph{well-define
steps} in a type. We recall that, in a RCS, a
transition is well-defined if the free variables of the refinement it carries
are always known when the transition is fired.
Since variables are never
removed from the map, we can focus on the first occurrence of the transition.
So far, we do not have a notion of \emph{run} for a type. Therefore, we
first
define an \emph{happens-before} relation in RMPST, and we use this
relation to define
\emph{well-defined steps} as steps that contain a refinement which free
variable are all exchanged in a communication that \emph{happens-before} the
step we consider. With those two definitions, we can finally prove that a
well-defined step in a type
corresponds to a well-defined transition in the corresponding RCS.

\begin{example}[Well-define step in a type]
	Consider $\gtFmt{G_s}$ and $\gtFmt{G_y}$ as in \cref{ex:step_in_type}.
	The step
	$\participant{A}\rightarrow\participant{B}\msgM[\tuple{\gtLbl{\ell_2},
	\processVariable{y}}]\models\refinement{x<10}$ is well-defined.
	Indeed, $\fv{x<10} = \{\processVariable{x}\}$, $\gtFmt{G_s} < \gtFmt{G_y}$,
	and $\gtFmt{G_s}$ contains a branch that sends $\processVariable{x}$ and
	which continuation contains $\gtFmt{G_y}$.
\end{example}

We can finally proceed to the overall goal of this section: showing that the
type with and without the target refinement behave similarly. Thanks to the
above lemmata, we simply have to target a refinement with the appropriate
conditions and apply \cref{thm:static_elision:correct}.

\begin{restatable}[Static elision of redundant refinements in
types]{theorem}{StaticElisionRMPST}
	\label{thm:static_elision_types}
	Given two a global types $\gtFmt{G}$ and
	$\gtFmt{G_s} = \gtComm[i][I]{p}{q}{\ell}[x]{S}[r]{G} \in\gtFmt{G}$, such
	that, for one
	$t\in I$,
	$\participant{p}\rightarrow\participant{q}\msgM[\tuple{\gtLbl{\ell_t},
	\processVariable[t]{x}}]\models\refinement{r_t}$ is a well-defined step with
	$\processVariable[t]{x}\not\in\fv{r_t}$.
Let $\gtLbl{\ell_{t^\prime}}=\gtLbl{\ell_t}$,
	$\processVariable[t^\prime]{x}=\processVariable[t]{x}$,
	$\processType[t^\prime]{S}=\processType[t]{S}$,
	$\refinement{r_t^\prime}=\taut$,
	$\gtFmt{G_{t^\prime}} = \gtFmt{G_t}$,
	$\gtFmt{G_{s^\prime}} =
	\gtComm[i][I\setminus\{t\}\cup\{t^\prime\}]{p}{q}{\ell}[x]{S}[r]{G}$;
and $\gtFmt{G^\prime}$ be $\gtFmt{G}$ where $\gtFmt{G_s}$ is replaced with
	$\gtFmt{G_{s^\prime}}$.
	If, for all steps,
	$\participant{r}\rightarrow\participant{s}\msgM[\tuple{\gtLbl{\_},
		\processVariable[w]{x}}]\models\refinement{r_w}$
	occurring in
	$\gtFmt{G}$ (for each $\processVariable{x}\in\fv{r}$),
	$\map{M}\models\refinement{r_w}$ entails $\map{M}\models\refinement{r}$ (for
	all $\map{M}$),
	there exists a bisimulation between the states of $\RcsOfType{G}$ and those
	of $\RcsOfType{G^\prime}$.
\end{restatable}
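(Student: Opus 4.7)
The plan is to reduce \cref{thm:static_elision_types} to \cref{thm:static_elision:correct} by showing that the type-level hypotheses translate, via projection and the construction of \cref{def:LT_to_RCFSM}, into the corresponding RCS-level hypotheses. The first step is to identify, in $\RcsOfType{G}$, the transition $t$ that corresponds to the targeted branch $\participant{p}\rightarrow\participant{q}\msgM[\tuple{\gtLbl{\ell_t},\processVariable[t]{x}}]\models\refinement{r_t}$ in $\gtFmt{G_s}$. Because labels are assumed uniquely used, projection followed by \cref{def:LT_to_RCFSM} produces exactly one transition carrying $\refinement{r_t}$ in the RCFSM of $\participant{p}$, and the RCS $\RcsOfType{G^\prime}$ differs from $\RcsOfType{G}$ only in that this single transition's refinement is replaced by $\taut$, matching the setting of \cref{thm:static_elision:correct}.

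Next I would verify each hypothesis of \cref{thm:static_elision:correct} in turn. Self-independence of the RCS transition follows directly from the assumption $\processVariable[t]{x}\notin\fv{r_t}$: the payload variable of $t$ does not occur free in $\refinement{r_t}$, so $t\notin\set{T}_{\processVariable{x}}$ for any $\processVariable{x}\in\fv{r_t}$. Well-definedness of $t$ as an RCS transition should come from the type-level well-definedness of the step, using a lemma (the one the authors allude to before the statement) that a happens-before predecessor in the type corresponds, on every run of the RCS reaching $t$, to a communication that has already deposited the relevant variable into the global map before $t$ fires. Finally, the entailment condition must be lifted from steps to transitions: every transition $t_w$ in $\bigcup_{\processVariable{x}\in\fv{r_t}}\set{T}_{\processVariable{x}}$ arises from some step $\participant{r}\rightarrow\participant{s}\msgM[\tuple{\gtLbl{\_},\processVariable[w]{x}}]\models\refinement{r_w}$ in $\gtFmt{G}$ (again using unique labels and the projection/translation), and the assumption $\map{M}\models\refinement{r_w}\Rightarrow\map{M}\models\refinement{r_t}$ on steps transfers verbatim to the corresponding RCS transitions.

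Once these three hypotheses are checked, \cref{thm:static_elision:correct} directly yields a bisimulation between $\RcsOfType{G}$ and $\RcsOfType{G^\prime}$, which is the desired conclusion. The main obstacle I expect is the well-definedness lifting: one must argue that the happens-before ordering on the syntax of the type is reflected, on \emph{every} asynchronous interleaving the RCS can perform, as an ordering on the underlying reductions that precedes the firing of $t$. This requires care because asynchrony allows independent communications to be reordered; the happens-before relation used to certify well-defined steps must be strong enough (i.e.\ based on causal dependencies such as shared participants or payload flow) to survive every interleaving permitted by the semantics of \cref{def:semantics_RGA}. Assuming the authors' happens-before definition satisfies this (as hinted by their earlier \cref{thm:happens_before_correct} in the full version), the rest of the reduction is routine bookkeeping on projection and the translation of \cref{def:LT_to_RCFSM}.
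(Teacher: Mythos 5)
Your proposal follows essentially the same route as the paper's proof: reduce to \cref{thm:static_elision:correct} by checking its three hypotheses, obtaining self-independence from $\processVariable[t]{x}\notin\fv{r_t}$, well-definedness of the RCS transition from well-definedness of the step via the happens-before lemma (the paper's \cref{thm:well_def_step_well_def_trans}, which rests on \cref{thm:happens_before_correct} exactly as you anticipate), and the entailment condition by transferring transitions back to steps (the paper's \cref{thm:transition_implies_step}, argued by contradiction). The obstacle you flag — that syntactic happens-before must survive every asynchronous interleaving — is precisely what the paper resolves by restricting happens-before to communications sharing a participant, so the shared participant's local automaton enforces the ordering on every run.
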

\begin{proof}
	We prove this by showing that \cref{thm:static_elision:correct} applies to
	$\RcsOfType{G}$ and $\RcsOfType{G^\prime}$.
	The proof is provided \inApp{sec:app:elision_rmpst}.
\end{proof}

\begin{example}[Application of \cref{thm:static_elision_types}]
	Given $\gtFmt{G_s}$ as in \cref{ex:step_in_type} and $\gtFmt{G_s^\prime}$
	as follows (notice the second refinement is replaced by $\taut$),
$\gtFmt{G_s}$ and the following $\gtFmt{G_s^\prime}$ have the same behaviour:\\[1mm]
\centerline{
$\gtFmt{G_s^\prime} = \gtComm{A}{B}{\ell_1}[x]{int}[x < 0]{
		\gtComm{A}{B}{\ell_2}[y]{int}[\taut]{\gtEnd}}$
}
\end{example}

\section{Implementation}
\label{sec:implem}
In the previous section, we introduced an instance of our
framework: a system that accommodates refinements using a decentralised
verification mechanism. In this section, we follow up on this example with an
implementation, based on Rumpsteak, of this system.

Rumpsteak \cite{cutner_deadlock-free_2022} is a framework to write Rust programs
according to an MPST specification. The framework is divided into two parts:
\begin{inlineenum}
	\item a runtime library that provides primitives to write asynchronous
	programs
	in Rust
	\item a tool (\texttt{rumpsteak-generate}) to generate skeleton Rust files
	from
	specification files (i.e. from global types), in two steps.
\end{inlineenum}

Working with Rumpsteak takes two manual steps.
The user specifies (step 1)
the
protocol in a global type(written as Scribble files \cite{Scribble}, see
\cref{fig:nuscr:plus_minus}). This global type is automatically projected using
\(\nu\)Scr \cite{ZhouNuScr} and the projected types are used to generate
skeleton Rust
files (see \cref{fig:rumpsteak:plus_minus}). The generated Rust code
contains Rust types that encode local types (e.g. the type for \participant{A}
is shown in \cref{fig:rumpsteak:plus_minus:type_A} in
\cref{fig:rumpsteak:plus_minus}). The user then manually implements (step 2)
the process of each participant, following their type
(\cref{fig:rumpsteak:plus_minus:process_A}), using provided communication
primitives (\cref{fig:rumpsteak:send}). Rumpsteak relies on Rust's typechecker
to ensure the consistency of the implementation.
For the sake of clarification where needed, we call \emph{Vanilla Rumpsteak} the
framework without refinements (i.e. as presented
in~\cite{cutner_deadlock-free_2022}),
and \emph{Refined Rumpsteak} the framework modified to accommodate refinements.

\begin{figure}
	\begin{subfigure}[t]{0.52\textwidth}
\begin{lstlisting}[language=Scribble, style=boxedScribble]
(*# RefinementTypes #*)

global protocol PlusMinus
(role A, role B, role C)
{
Secret(n: int) from A to B;
rec Loop {
	Guess(x: int) from C to B;
	choice at B {
		More(x: int {x < n}) from B to C;
		continue Loop;
	} or {
		Less(x: int {x > n}) from B to C;
		continue Loop;
	} or {
		Correct(x: int {x = n}) from B to C;
	}}}
\end{lstlisting}
		\caption{$\nu$Scr description of the guessing game protocol.}
		\label{fig:nuscr:plus_minus}
	\end{subfigure}
	\hfill
	\begin{subfigure}[t]{0.45\textwidth}
\begin{lstlisting}[language=Rust, style=boxed, escapechar=@]
type PlusMinusA = @\label{fig:rumpsteak:plus_minus:type_A}@
Send<B, 'n', @\label{fig:rumpsteak:plus_minus:type_A_var_name}@
Secret,
Tautology::<Name, Value, Secret>, @\label{fig:rumpsteak:plus_minus:type_A_ref}@
Constant<Name, Value>, End>;@\label{fig:rumpsteak:plus_minus:type_A_end}@
// ...
async fn a(role: &mut A)@\label{fig:rumpsteak:plus_minus:process_A}@
-> Result<(), Box<dyn Error>> {
	try_session(role,
	HashMap::new(),
	|s: PlusMinusA<'_, _>| async {
		let s =
		s.send(Secret(10)).await?;@\label{fig:rumpsteak:send}@
		return Ok(((), s))
	})
	.await
}@\label{fig:rumpsteak:plus_minus:process_A_end}@\end{lstlisting}
		\caption{Rust type and implementation of participant \participant{A} of
		the guessing game protocol. The handwritten code
		(\cref{fig:rumpsteak:plus_minus:process_A} to
		\cref{fig:rumpsteak:plus_minus:process_A_end})
		 is the same than with Vanilla Rumpsteak.}
		\label{fig:rumpsteak:plus_minus}
	\end{subfigure}
	\caption{Implementation of the guessing game using Rumpsteak.}
\end{figure}

In this section, we explain the main differences between Vanilla and Refined
Rumpsteak: we introduce refinements in the types used in the runtime library,
we modify the program generation step accordingly, and we introduce tools that
ensure the localisation conditions are met (\cref{def:cond_dyn_verif} in
\cref{sec:dynamic_CS}).
The overall workflow is presented in \cref{fig:rumpsteak_workflow}.
We conclude this section by measuring the overhead induced by the refinement w.r.t.
Vanilla Rumpsteak and the time needed for asserting the localisation conditions.

\begin{figure}
	\centering
	{{\small
\begin{tikzpicture}[
		unmodified/.style={fill=blue!10},
		new/.style={fill=red!10},
		modified/.style={fill=green!10},
		node distance=15pt,
		]

	\node[modified, align=center] (start) {Global Type (Scribble)
	(\cref{fig:nuscr:plus_minus})};
	\node[new, right=2cm of start] (step 0) {Graph of Global Type};
	\node[new, below=of step 0] (step 1) {Unrolled Graph};
	\node[new, below=of step 1] (loc_res) {Localisation result};
	\node[modified, below=of start] (step 2) {Local Types (DOT)};
	\node[modified, below=of step 2, align=center] (step 3) {Rust APIs};
	\node[unmodified, below=of step 3, align=center] (step 4) {Rumpsteak
	program
	(\cref{fig:rumpsteak:plus_minus})};
	\node[unmodified, below=of step 4] (step 5) {Executable File};

	\draw[->] (start)  -- node[above] {\texttt{scr2dot}} (step 0);
	\draw[->] (step 0)  -- node[left] {\texttt{mpst\_unroll}} (step 1);
	\draw[->] (step 1)  -- node[left] {\texttt{dynamic\_verify}} (loc_res);
	\draw[->] (start)  -- node[right] {\(\nu\)Scr } (step 2);
	\draw[->] (step 2) -- node[right] {\texttt{rumpsteak-generate}}  (step 3);
	\draw[->] (step 3) -- node[right] {Manual implementation} (step 4);
	\draw[->] (step 4) -- node[right] {Compilation (type-checking)} (step 5);
\end{tikzpicture}
}}
	\caption{Workflow of Rumpsteak. Green nodes represent steps that
	already existed in Vanilla Rumpsteak and that have been adapted to
	accommodate for refinements, red nodes represent new steps, and blue nodes
	represent unmodified steps. The three new steps (\texttt{scr2dot},
	\texttt{mpst\_unroll}, and \texttt{dynamic\_verify}) verify the conditions
	mentioned in \cref{def:cond_dyn_verif}.}
	\label{fig:rumpsteak_workflow}
\end{figure}

\subsection{Refinement Implementation}
\label{sec:implem:rumpsteak}

\myparagraph{Modifications to the Rumpsteak Library.}
In order to accommodate for refinements, we have to introduce new elements in
to the Rumpsteak's encoding of local types. Consider the local type of
participant \participant{A} introduced in \cref{ex:plus_minus_proj}
\(\ltIntC{\participant{B}}{secret}{n}{int}[\taut]{\ltEnd}\):
Rumpsteak now has to take into account the name of the variable sent
($\processVariable{n}$), and the refinement attached to the transition
($\taut$). Consider the type declaration in
\cref{fig:rumpsteak:plus_minus:type_A} to
\cref{fig:rumpsteak:plus_minus:type_A_end}, \cref{fig:rumpsteak:plus_minus}.
Compared to Vanilla Rumpsteak, we introduce \rustinline{'n'}, a const
generic\footnote{\url{https://github.com/rust-lang/rfcs/blob/master/text/2000-const-generics.md}},
that carries the name of the variable sent
(\cref{fig:rumpsteak:plus_minus:type_A_ref}). Regarding the refinement, we
introduce \rustinline{Tautology::<Name, Value, Secret>}, which represent the
refinement $\taut$. The generic parameters are used to specify the type of
variable names (\rustinline{char}s in our case) and values (\rustinline{i32})
as well as the label of the message (\rustinline{Secret}). We modified $\nu$Scr
and \texttt{rumpsteak-generate} to generate skeleton files (the content of the
file up to \cref{fig:rumpsteak:plus_minus:type_A_end}).
Rumpsteak provides a set of available refinements, and additional ones can be
written ad-hoc (for specific needs). To add an ad-hoc refinement, the user
simply implements the trait \rustinline{Predicate}
(which extends \rustinline{Default}), which requires a method
\rustinline{check} that asserts whether the predicate holds. For instance, the
\rustinline{check} function of \rustinline{Tautology} simply returns
\rustinline{true}.

\myparagraph{Verification of the Conditions for Decentralised Refinement
Assertion.}
As we explained in \cref{sec:dynamic_CS}, to make
sure that
refinements can be verified in a decentralised way, we require to check that
variables needed for the refinements are located correctly
(\cref{def:cond_dyn_verif}). To perform this
verification, we implemented new tools for the Rumpsteak framework (in red in
\cref{fig:rumpsteak_workflow}).

Our tools:
\begin{inlineenum}
	\item convert the global type into a graph (\texttt{scr2dot})
	\item unroll the loops once to precisely capture variables initialisations
	(\texttt{unroll\_mpst})
	\item localise variables on the unrolled graph (\texttt{dynamic\_verify}).
\end{inlineenum}

The core part of this verification, \texttt{dynamic\_verify}, finds variables
locations
with simple inference rules written in Datalog. We use the \emph{crepe}
library~\cite{Crepe} which provides a Datalog DSL for Rust.
We provide more details on the algorithm in \inApp{sec:static_verif_conditions}.

\myparagraph{Limitations.\label{sec:par:implem_limitations}}
The current implementation makes extensive use of the Rust feature \emph{const
generics}\footnotemark[9]
which introduces a limited form of dependent types in Rust. It allows to use
constant values in types. As of today, only some \emph{basic types} can be used
as const generics, in particular \lstinline[language=Rust,
style=colouredRust]{char}s and the various integer types. We use such const
generics to encode informations about the variables into the types: for
instance, the predicate \refinement{x < 5} would have the type
\lstinline[language=Rust, style=colouredRust]{LTnConst<L, 'x', 5>}, where the
\lstinline[language=Rust, style=colouredRust]{'x'} and the
\lstinline[language=Rust, style=colouredRust]{5} are const generics.

For readability, we choose to set variables to
\lstinline[language=Rust, style=colouredRust]{char}s, meaning that in the
current implementation, we can only accommodate a limited number of distinct
variables. Should more be needed, one could easily modify our
implementation to replace them with \lstinline[language=Rust,
style=colouredRust]{u64}, which allows \(2^{64}\) variables names.
Similarly, we only consider \lstinline[language=Rust, style=colouredRust]{i32}
as
message payloads. Should different types of messages be needed, they could be
encoded in an \lstinline[language=Rust, style=colouredRust]{enum}.

Finally, the static elision optimisation
(\cref{sec:static_elision}) is not implemented.

\subsection{Runtime and Localisation Benchmarks}

We evaluate how Rumpsteak with refinements performs
with respect to Rumpsteak without refinements. First, we measure the runtime of
our analysis
tool which verifies the two conditions in \cref{def:cond_dyn_verif}
(\texttt{scr2dot}, \texttt{unroll\_mpst} and \texttt{dynamic\_verify}).
Although not a runtime cost, and while we expect this analysis to be possibly
expensive, we would like to ensure that it is still
practical for test cases from the literature.  Secondly, we evaluate the
runtime overhead of adding
refinements with respect to Rumpsteak without refinements.

\myparagraph{Setup and Benchmark Programs.}
We evaluate the performance of Rumpsteak with refinement with benchmarks. Most
of them are taken from the
literature
(\Cref{tab:benchmarks}).
This set of program contains various micro-benchmarks
with a variety of combination of properties (whether the protocol is binary or
multiparty, contains recursivity or choice).
\begin{wraptable}{o}{0.51\textwidth}
	\vspace{-\baselineskip}
	\centering
	\begin{tabular}{ll|ccc}
		Name 		& MP	& Rec	& Choice \\
		\hline
		\ding{172} simple adder \cite{hu_hybrid_2016}	& no	& no	&
		no \\
		\ding{173} travel\_agency \cite{hu_session-based_2008} 	& no    &
		no    & yes \\
		\ding{174} ping pong \cite{zhou_statically_2020}	& no	& yes
		& no \\
		\ding{175} simple auth.	& no    & yes   & yes \\
		\ding{176} ring max	& yes	& no	& no \\
		\ding{177} three\_buyers \cite{honda_multiparty_2016}  	& yes   &
		no    & yes \\
		\ding{178} plus or minus	& yes	& yes	& yes \\
	\end{tabular}
	\caption{The set of micro benchmarks together with their
		characteristics. "MP" denotes a multiparty protocol, "Rec" the
		presence of recursion, and "Choice" the presence of choice.}
	\label{tab:benchmarks}
	\vspace{-1cm}
\end{wraptable}
Notice that protocols that contain
recursivity with no choice (e.g. \emph{simple auth} are infinite). Therefore, such
protocols are only measured in the variable localisation paragraph. Also, where
it applies, protocols were modified in order to add
relevant refinements; such modifications are listed below. By default, we add
\rustinline{Tautology} predicates (\cref{sec:implem:rumpsteak}).
The tests were performed on a machine running Ubuntu 22.04.1 LTS x86\_64 (kernel
5.15.0-60) with an Intel i7-6700 processor (4 cores, 8 threads running at 4GHz
maximum) and 16GB of memory\footnote{The micro-benchmarks are not memory
intensive. The memory size is not a limiting factor. However, the
benchmarks seem to be dominated by the startup time, which includes memory
access time.}. We compare Rumpsteak with refinement vs. Vanilla Rumpsteak. For
a comparison between Vanilla Rumpsteak and other libraries, see
\cite[Figure~6]{cutner_deadlock-free_2022}.

\myparagraph{Added Refinements \& Protocol Modifications.}
Some benchmarks from the literature were adapted in order to accommodate
refinements.  In addition, we introduce three benchmarks. Those benchmarks are
close to examples from the literature, adapted to better highlight
refinements.
\begin{description}
	\item [simple adder:]This example is adjusted from the
		\emph{Adder} (\citet{hu_hybrid_2016}) protocol, but we remove the
		choice of operation in order to increase the benchmark
		diversity;
	\item [ping pong:]In~\citet{zhou_statically_2020}, some of the loops
		were statically unrolled, and the protocol contained a choice
		to exit. Ours is equivalent to an infinite
		\emph{PingPong\(_1\)} in \citet{zhou_statically_2020}.
	\item [simple authentication:]This example is a binary example of an
		authentication protocol (e.g. OAuth \cite{scalas_less_2019}).
		The added refinements enforce that access is granted if and only
		if the given password is correct.
	\item [ring max:]A multiparty protocol where participants receive a
		value from their predecessor (except for the initial
		participant), and forward an other value to their successor (the
		final participant forwards it to the initial one). Refinements
		ensure that the value forwarded is greater than or equal to the
		value received.
	\item [plus or minus:]An implementation of our running example.
\end{description}

\begin{wraptable}{o}{0.4\textwidth}
	\vspace{-0.5\baselineskip}
	\centering
	\begin{tabular}{l|ccc|c}
		& \(|S|\) & \(|U|\) & \(|V|\) & \emph{et} (\(\mu\pm\sigma\))\\
		\hline
		\ding{172} & 4 & 4 & 3 & \(5.5\pm0.2\)\\
		\ding{173} & 7 & 7 & 6 & \(5.5\pm 0.2\)\\
		\ding{174} & 2 & 4 & 1 & \(5.5\pm0.2\)\\
		\ding{175} & 6 & 11 & 3 & \(5.6 \pm 0.2\)\\
		\ding{176} & 8 & 8 & 7 & \(5.7\pm0.2\)\\
		\ding{177} & 10 & 10 & 7 & \(5.6\pm 0.2\)\\
		\ding{178} & 4 & 19 & 2 & \(5.6\pm0.2\)\\
	\end{tabular}
	\caption{Benchmark of the localisation analysis (Red branch in
		\cref{fig:rumpsteak_workflow}). \(|S|\) denotes the number of states of
		the graph of the protocol; \(|U|\) denotes the number of states after
		unrolling the recursion loops once; and \(|V|\) denotes the number of
		variables in the protocol. $|S|$, $|U|$ and $|V|$ are computed manually
		to
		give an insight on how protocols compare. \emph{et} is the execution
		time,
		measured by the benchmark (in \si{\milli\second}).}
	\label{tab:benchmark_localisation}
	\vspace{-0.5\baselineskip}
	\begin{tabular}{l|ccc}
		& \(p\) & \(m\) & \(m_r\) \\\hline
		\ding{172} & 0.00 & 0.7 & 0.8\\
		\ding{173} & 0.11 & 0.8 & {\color{lightgray} N/A}\\
		\ding{175} & 0.29 & 0.8 & {\color{lightgray} N/A}\\
		\ding{176} & 0.17 & 0.8 & {\color{lightgray} N/A}\\
		\ding{177} & 0.68 & 0.7 & {\color{lightgray} N/A}\\
		\ding{178} & 0.04 & 0.7 & 0.8\\
	\end{tabular}
	\caption{Evaluation of the runtime overhead due to the addition of
		refinements in Rumpsteak. \(p\) is the MWU \(p\)-value, \(m\) is the
		baseline median runtime and
		\(m_r\) is the median runtime with refinements when applicable (\(p <
		0.05\)). All times are in~\si{\milli\second}.}
	\label{tab:runtime_benchmarks}
	\vspace{-1cm}
\end{wraptable}
\myparagraph{Static Analysis of Variable Locations.\label{sec:benchmark_dynamic_verif}}
\Cref{tab:benchmark_localisation} shows the decentralised verification time cost for
each refined global label. As shown in~\cref{fig:rumpsteak_workflow}, this
static analysis is performed with three tools. The results shown account for
the whole pipeline, and were measured over \(50\) samples, with \(10\) warmup
runs (excluded from the measurements).
Overall, the runtime for variable localisation is stable (around 5.6ms). We
suspect that, for graphs with a low number of states, the runtime is dominated
by the accesses to the file.

\myparagraph{Runtime Overhead of Refinement
Feature.\label{sec:benchmark_runtime}}
Our second set of benchmarks aims to measure the overhead of runtime refinement
verification with respect to the original Rumpsteak framework.
We are expecting Rumpsteak with refinements to be slower than the original
Rumpsteak, due to the additional cost of evaluating refinements. This benchmark
has two objectives: first, to find out whether there is an actual,
statistically significant, overhead; and second, if so, estimate this overhead.
To measure this overhead, we only consider the protocols
that
terminate from the benchmark set.

To fulfil the first objective, we use a Mann-Whitney U test (MWU). We used MWU
as it is a non-parametric test, and our runtime distributions do not follow a
normal distribution, which prevents us to do simpler analysis. As MWU is
sensitive to the number of samples, we run each benchmark \(30\) times, on both
the original Rumpsteak and Rumpsteak with refinements. We perform the MWU test
on the collected \(30\) samples, preceded by \(10\) iterations to warm the
system up. Our
hypothesis for the MWU test are the following:
\begin{quote}
	\(H_0\): \ The distributions of runtimes with and without refinements are
identical.\\
 \(H_1\): \ The distributions of runtimes with and without refinements are
distincts.
\end{quote}

The \(p\)-values obtained from the MWU test are reported in the first column
of~\cref{tab:runtime_benchmarks}. We also report the baseline (Rumpsteak without
refinements) median run time (over the \(30\) runs) in the second column of the
table. Most often, the overhead is not significant (\(p \geq 0.05\)) and
$H_0$ can not be rejected. When the overhead is
statistically significant, we also report the median runtime (over the \(30\)
runs) of Rumpsteak with refinements in the third column.
With our set of microbenchmarks, in most cases we cannot distinguish Rumpsteak
with refinement from Rumpsteak without refinements. We suspect Rumpsteak runtime
is dominated by communications and context switching. However, as our
refinements can be arbitrarily complex, specific instances could show real
slowdown due to refinement evaluation.

\section{Related Work and Conclusion}
\label{sec:related}
\myparagraph{Design-by-Contract for (Multiparty) Session Types.}
In binary session types, \citet{villard-phd2011} introduces contracts for
binary sessions, and provides an analysis tool which verifies whether a given
program comply with its associated contract. The verification is done with
symbolic execution. Compared to this paper, we address multiparty sessions.
Besides, our framework is more generic (specific instances could be based on
symbolic execution, but we can also accommodate other verification methods).
Bocchi et al. \citet{bocchi_theory_2010} present a variant of MPST
that allows predicates on exchanges, that must hold for a typed process to take
transitions. The main difference with our work is that their approach focuses on
\emph{correctness by construction}, i.e. they accept only correct protocols, while we
can accept protocols that fail, and we simply prevent them to generate incorrect traces.
More precisely, the authors statically ensure that there is a
satisfiable path, which prevents some valid runs to be accepted.
For instance, consider the following
type:\\[1mm]
\centerline{$
	\gtComm{A}{B}{\ell_1}[x]{int}[x < 10]{
		\gtComm{B}{A}{\ell_2}[y]{int}[x > y \wedge y > 6]{
			\gtEnd
		}
	}
$}\\[1mm]
This type would be rejected in \cite{bocchi_theory_2010} since if
\(\participant{A}\) sends \(\processVariable{x} = \val{5}\) (which is allowed by
\(\refinement{x < 10}\)), then there is no \(\processVariable{y}\) that
satisfies \(\refinement{5 > y\wedge y > 6}\).  By rejecting this, they also
reject all possibly valid runs (e.g.\ if \(\participant{A}\) sends
\(\processVariable{x} = \val{9}\) and \(\participant{B}\) replies with
\(\processVariable{y} = \val{7}\)).
A follow-up on this work is~\cite{bocchi_multiparty_2012} which introduces
local states, i.e.\ the authors allow participants to have local variables, which can
be updated during process execution. The session types
reflect those elements and contain predicates on exchanged variables and local
variables.

With respect to these two papers, our criteria for the validity of refinements
(expressed as a property of the generated trace) is decoupled from the
semantics of the model. This approach allows us to be more flexible than
enforcing statically the refinements, and to lower the cost of adopting
refinements, in particular to retrofit refinements into existing systems. For
instance, using our framework, one can simply use the centralised semantics at
first, which is very expressive, without having to prove the correctness of the
implementation. In a second step, users can then develop different verification
or analysis techniques which can be plugged-in transparently. For instance,
switching from Vanilla Rumpsteak to Refined Rumpsteak does not involve changes
in the implementation, as the modifications do not happen in the programming
interface.
Also, compared to these papers, our framework is not bound to MPST only, and
provide an actual implementation of our framework.

\myparagraph{Design-by-Contract in Choreography Automata.}
Choreography Automata (CA) are graphs that represent the global behaviour of a
concurrent system. The behaviour of individual participants is obtained by
\emph{projecting} well-formed CA, i.e. erasing all actions that do not concern
a given participant. The result is a FSM which, after determinising and
minimising, is used as a CFSM. The projection of all participants leads to a
CS. Notice that CA accept some protocols that would be rejected by MPST, and
vice-versa.

Gheri et al.\ \citet{Gheri_Design_2022} study the verification of
CA with assertions. Their work and ours are distinct with respect to the
following aspects:
\begin{inlineenum}
\item the communication semantics\label{lbl:BdC_choreography:item1}
\item the choices\label{lbl:BdC_choreography:item2}
\item the logic for predicates\label{lbl:BdC_choreography:item3}
\item the implementation presented in \cite{Gheri_Design_2022} is limited to CA
without assertions (i.e., the design-by-contract approach was not implemented
and left as their future work).
\end{inlineenum}

Regarding \cref{lbl:BdC_choreography:item1}, Gheri et al.\
\citet{Gheri_Design_2022} defines choreography automata with \emph{synchronous}
communication semantics, while the one we developed in this work is
asynchronous. Gheri et al.\ \cite[Section~7]{Gheri_Design_2022} discusses
asynchronous semantics but it remains future works.

Regarding \cref{lbl:BdC_choreography:item2}, we are constrained by the syntax
of RMPST, in which choices can only happen between two selected participants,
while choreography automata accept protocols with choices where a (single)
participant \participant{A} sends to multiple receivers (\participant{B} and
\participant{C}) \cite[Definition~4.15]{Gheri_Design_2022}. Explicit
connections \cite{hu_explicit_2017} is an extension of MPST that accommodates
with choices with multiple receivers.

Regarding \cref{lbl:BdC_choreography:item3}, we kept our refinement logic
abstract, while it is fixed in choreography automata, with a form of first
order logic. Besides, predicates are handled differently in both frameworks as
well: Gheri et al.\ \citet{Gheri_Design_2022} require choreography automata to
be \emph{history-sensitive} \cite{bocchi_theory_2010}, a definition which
serves a similar purpose to our definition of \emph{variable localisation}
(\cref{sec:dynamic_CS} and \inApp{sec:localisation_algo}), which constrains our
decentralised semantics. Our centralised semantics (\cref{def:RConf}) is not
constrained by variable localisation. For instance, the RMPST
$\gtComm{A}{B}{\ell_1}[x]{int}[\taut]{
		\gtComm{C}{D}{\ell_2}[y]{int}[x=y]{
			\gtEnd
		}
	}$
produces valid traces with our centralised semantics, while the corresponding
choreography automata would be rejected.

Besides, our work introduces a general framework that can accommodate
	refined CA in addition to RMPST. We show
	\inApp{sec:refined_choreo_automata} a
	possible way to do so.

\myparagraph{Implementations of Refinements in MPST.}
Neykova et al. \Citet{neykova_session_2018} develop an F\(\#\)
library for
static verification of MPST with refinements.
They present a compiler
plugin which uses an SMT solver (Z3) to statically verify some refinements.
They use a notion of similar to our variable localisation criterion (which they call
\emph{variable knowledge}), and a variant of CFSM with refinements that is
similar to ours. In their work, refinements that are statically asserted by the SMT
solver are pruned in the CFSM, while the rest of refinements are kept in the
CFSM and are dynamically checked.
Similarly, \citet{OOPSLA20SessionStar, zhou_statically_2020} develop a framework for
multiparty session types with refinements in F\(^\star\). They delegate the
management of refinements to F\(^\star\) type system (which internally uses an
SMT solver).
They define refinements on global types, which are then projected onto local
types. They show that a global type and its projection are trace equivalent.
Those two works focus on the \emph{implementation} of MPST with
refinements. \citet{neykova_session_2018} does not focus on the theory of
refinements and the theory developed in \cite{zhou_statically_2020} is tightly
coupled to F\(^\star\). For instance, they do not present a \emph{correctness}
criterion such as \emph{valid refined traces} we present.
Contrary to both works,
our correctness criteria (based on valid refined traces) is
\emph{decoupled from}
(i.e. independent of) any target type theory, programming language or model of
computation: we only require an LTS labelled with actions. Besides, the logic
used for refinements is also a parameter of our framework, and users could use
alternatives, leading to a greater expressivity of our framework.

The main syntactical difference between our RMPST and those
developed in \cite{zhou_statically_2020} is that we attach refinements to the
messages of the protocol, while \citet{zhou_statically_2020} attach refinements
to the payload value. This is due to a different approach:
correctness in \cite{zhou_statically_2020} is related to payload types being
inhabited while our criteria of correctness (developed in
\cref{def:valid_refined_trace}) relies
on actions being allowed.
In binary linear logic-based session types,
\citet{DasSession2020} study the metatheory of binary session types
with arithmetic refinements. In particular, they focus on the type equality,
showing that added refinements make the type equality
undecidable (they provide a sound but incomplete algorithm for type equality).
\Citet{DasRast2022} also implement a library for session types with
refinements, although it only accounts for arithmetic
refinements.

\myparagraph{Other Related Works.}
There are various papers on the dynamic verification of MPST. For
instance \citet{TCS17MPSTMonitor} present a framework that
allows for both static and dynamic verification of MPST. This paper introduces
a theory for (dynamically) monitoring assertions on messages (i.e. the
equivalent of our refinements). Furthermore, the authors introduce theoretical
tools (bisimulations) to relate monitored processes with correct unmonitored
processes. This paper, however, suffers a few limitations. First, it focuses on
\emph{monitorable} types (which intuitively correspond to types satisfying our
conditions for decentralised verification \cref{def:cond_dyn_verif}). Second,
it focuses on dynamic verification of assertions. The paper is compatible with
statically verified processes (which allows turning off the dynamic
monitoring), but it does not present techniques for static verification in
itself.

On the other hand, our paper takes a different approach, by decoupling the
correctness criterion from the verification technique. This allows us to have a
more general framework (our framework accept types that are not
localisable/monitorable, although not all semantics can accommodate those), as
well as to develop static verification techniques.

In Rust, the \texttt{refinement} crate \cite{Dean_refinements_2021} provides
refinement data types. Their approach of refinements is similar to ours,
with a
\rustinline{Predicate} trait that provides a method to perform the predicate
verification (at runtime).
Refinement data types have also been implemented in multiple languages
(e.g.
F\(^\star\), Haskell \cite{Vazou_Liquid_2016}, etc.).
On the practical side, we can note the
similarities between typestates and session types \cite{hu_type-safe_2010}.
\citet{duarte_retrofitting_2021} implements typestates in Rust with a DSL to
verify protocol conformance. While Rumpsteak does not use their library, it
internally uses similar constructs.

Regarding implementations of session types in Rust, there are several
frameworks beside Rumpsteak. \citet{JespersenSession2015} first
integrate binary session types in Rust, but their implementation suffers a few
drawbacks (see \cite[Section~3]{Kokke2019} for a detailed explanation).
Sesh \cite{Kokke2019} and Ferrite
\cite{Chen2021} are two Rust libraries for \emph{binary} session
types, and they implement synchronous and asynchronous ones,
respectively.
MultiCrusty
\cite{lagaillardie_stay_2022} implements synchronous MPST on top
of Sesh, with a mesh of binary sessions. Compared to MultiCrusty,
Rumpsteak implements directly MPST instead of wrapping them into binary
sessions, and focuses on asynchronous MPST.
None of the aforementioned tools develops refinements.
It would be an interesting future work to apply our criteria
to extend their tools with refinements.

Finally, we note the proximity between (MP)ST with refinements and dependent
(MP)ST. For instance, \citet{ThiemannLabeldependent2020} introduce a session
type calculus with label-dependency (their approach does not explicitly account
for payload value refinement).
Other approaches exist, for instance, an intuitionistic linear logic-based
type theory for building value-dependent session
types \cite{ToninhoDependent2011}, and
separation logic-based work for reasoning about session types \cite{HinrichsenActris2020}.

\label{sec:conclusion}




\myparagraph{Future Work}
While, in our work, we consider MPST with payloads (some variants only consider
messages with labels), we restrict our MPST with a single payload (i.e.
\emph{monadic} MPST, where each message carries a single value).
The extension to polyadic MPST, where a message can carry multiple values, is
straightforward, by adapting the RCS rules (\grsnd{} and \grrec{},
\cref{def:semantics_RGA}). 

We presented two optimisations, in order to illustrate the
flexibility of our theoretical framework.
Regarding the decentralised
verification (\cref{sec:dynamic_CS}), there is room for
an extension, e.g.\ with specific domains (i.e. some class of protocols with
specific refinements). Regarding the static elision of redundant refinements,
we envision improving the technique with use of SMT solvers could be promising.
The main difficulty lies in asynchronous communications: one would need to
consider all possible message orderings before solving constraints.

	\bibliography{references}

\iftoggle{full}{
	\newpage
	\appendix
	\newcommand{\separator}{
\begin{center}
\rule{5cm}{.5mm}
\end{center}
}

\section{Extra examples illustrating interesting aspects of RMPST.}

\subsection{Usefulness of multiparty}
Our guessing game running example is somewhat simple and could be implemented
using two successive binary sessions. Here, we present a slightly more complex
example that uses RMPST moore intensively. This example is based on the
guessing game, with an extra communication (a final official guess)
$\processVariable{y}$ from
\participant{C} to \participant{A} at the end\footnote{Based on an idea from
ECOOP reviewer C.}.

$$
\gtComm{A}{B}{secret}[n]{int}{
	\gtRec{T}{
		\gtComm{C}{B}{guess}[x]{int}{
			\gtCommRaw{B}{C}{
				\begin{aligned}
					&\commChoice{more}{}[x < n]{\gtRecVar{T}},\\
					&\commChoice{less}{}[x > n]{\gtRecVar{T}},\\
					&\commChoice{correct}{}[x = n]{
						G_{\text{cont}}
					}
				\end{aligned}
			}
		}
	}
}
$$
with
$$
\gtFmt{G_\text{cont}} = \gtComm{C}{A}{validate}[y]{int}[y=x]{\gtEnd}
$$

\subsection{List Adder}
This example is more involved: \participant{A} sends a list of numbers to
\participant{B}, and eventually \participant{B} returns the sum of all numbers.
We want to have refinements to make sure the returned value is correct. To
achieve that, upon each number received, \participant{B} computes the partial
sum. This example requires extra participants in order to update and store the
partial sum.

$$
	\gtRec{T}{
		\gtCommRaw{A}{B}{
		\begin{aligned}
			&\commChoice{add}[n]{int}{\gtFmt{G_{\text{update}}}
				},\\
			&\commChoice{done}{}{
				\gtComm{B}{A}{total}[tot]{int}[tot = partial]{\gtEnd}
			}
		\end{aligned}
		}
	}
$$
with
\begin{align*}
\gtFmt{G_{\text{update}}} &=
\gtComm{B}{C}{partial}[tmp]{int}[tmp = partial + n]{
	G_{\text{update}}^\prime
	}\\
\gtFmt{G_{\text{update}}^\prime} &=
\gtComm{C}{B}{update}[partial]{int}[partial=tmp]{\gtRecVar{T}}
\end{align*}


\subsection{Diffie-Hellman protocol}

The Diffie-Hellman protocol is a protocol that allows two participants to
securely establish a shared secret, relying on the difficulty of the discrete
logarithm \cite{DiffieNew1976}.

In a nutshell, two participants $\participant{A}$ and $\participant{B}$ each
have a private key (resp. $\processVariable{a}$ and $\processVariable{b}$). In
addition, public values $\processVariable{p}$ and $\processVariable{g}$ are
known and not secret\footnote{In addition, $p$ and $g$ must be co-prime. In our
example, we only focus on the secret keys not being disclosed, and we therefore
ignore this condition.}. During the protocol $\participant{A}$ (resp.
$\participant{B}$) sends
$\processVariable{A} = \processVariable{g}^{\processVariable{a}}\text{ mod
}\processVariable{p}$ to
$\participant{B}$ (resp.
$\processVariable{B} = \processVariable{g}^{\processVariable{b}}\text{ mod
}\processVariable{p}$ to
$\participant{A}$). The shared secret is then obtain on each side with
$\processVariable{A}^{\processVariable{b}} =
\processVariable{B}^{\processVariable{a}}$. In our implementation, we have an
extra participant $\participant{G}$ used for setting-up the values. In
practice, $\processVariable{g}$ and $\processVariable{p}$ can be shared
beforehand or agreed-on on the fly. The private keys can be generated from a
random number generator local to each participant. The protocol we implement is
shown in \cref{fig:app:diffie-hellman}.

\begin{figure}
\begin{messagepassing}[yscale=5, ]
	\newprocesswithlength{B}[\participant{B}]{5.6}
	\newprocesswithlength{A}[\participant{A}]{5.6}
	\newprocesswithlength{G}[\participant{G}]{3.2}

	\sendwithname{G}{0.5}{A}{0.5}{
		$\gtLbl{generatorA}(\processVariable{g}: \processType{int})$}[pos=.5]
	\sendwithname{A}{1}{B}{1}{
		$\gtLbl{generatorB}(\refinedTypedVariable[][h]{int}[g = h])$}[pos=.5]
	\sendwithname{G}{1.5}{A}{1.5}{
		$\gtLbl{PrimeA}(\processVariable{p}: \processType{int})$}[pos=.5]
	\sendwithname{A}{2}{B}{2}{
		$\gtLbl{PrimeB}(\refinedTypedVariable[][q]{int}[q = p])$}[pos=.5]
	\sendwithname{G}{2.5}{A}{2.5}{
		$\gtLbl{PrivateA}(\processVariable{a}: \processType{int})$}[pos=.5]
	\sendwithname{G}{3}{B}{3}{
		$\gtLbl{PrivateB}(\refinedTypedVariable[][b]{int})$}[pos=.75]
	\draw[draw=black!50] (0, -3.2) rectangle (3.2, -0.8);
	\node[anchor=north west,
		fill opacity=0.8,
		text opacity=1,
		draw,
		fill=white,
		outer sep=0pt] at (0, -3.2) {setup};

	\sendwithname{A}{4.4}{B}{4.4}{
		$\gtLbl{SharedA}(\refinedTypedVariable[][A]{int}[A = g^a\%p])$}[pos=.5]
	\sendwithname{B}{5.2}{A}{5.2}{
		$\gtLbl{SharedB}(\refinedTypedVariable[][B]{int}[B = h^b\%q])$}[pos=.5]
	\draw[draw=black!50] (3.2, -2.2) rectangle (5.6, -0.8);
\node[anchor=north west,
fill opacity=0.8,
text opacity=1,
draw,
fill=white,
outer sep=0pt] at (3.2, -2.2) {Secured with refinements};
\end{messagepassing}
\caption{Communication diagram for the Diffie-Hellman key exchange protocol with
	refinements.}
\label{fig:app:diffie-hellman}
\end{figure}

The RMPST of this protocol is:
$$
\gtFmt{G_1} = \gtComm{G}{A}{generatorA}[g]{int}{
\gtComm{A}{B}{generatorB}[h]{int}[g = h]{G_2}}
$$
$$
\gtFmt{G_2} = \gtComm{G}{A}{PrimeA}[p]{int}{
\gtComm{A}{B}{PrimeB}[q]{int}[p = q]{G_3}}
$$
$$
\gtFmt{G_3} = \gtComm{G}{A}{PrivateA}[a]{int}{
\gtComm{G}{B}{PrivateB}[b]{int}{
G_4
}}
$$
$$
\gtFmt{G_4} = \gtComm{A}{B}{SharedA}[A]{int}[A = g^a\%p]{
	\gtComm{B}{A}{SharedB}[B]{int}[B = g^b\%q]{\gtEnd}}
$$

While this example is theoretically simple (no recursion), we implemented this
example in Rumpsteak, in order to show how to accommodate arithmetic
refinements.

\section{Refined MPST}
\subsection{Preliminary definitions\label{app:map}}

\begin{definition}[Map]
	A \emph{map} \(\map{M}\) is a set of pairs \(\tuple{\processVariable{t},
	\val{v}}\), where \(\processVariable{t}\) is a variable and \(\val{v}\)
	is a value, such that there are no two pairs with the same variable.
	Maps are equipped with the following operations: lookup
	\(\lookupMap{M}{x}\), update \(\updateMap{M}{x}{v}\),
	domain \(\domMap{M}\), and removal \(\removeMap{M}{x}\).
	\begin{mathpar}
	\lookupMap{M}{x}\define\begin{cases}
	\val{v} & \text{if }\tuple{\processVariable{x}, \val{v}}\in \map{M}\\
	\text{undefined} & \text{otherwise}
	\end{cases}
	\and
	\updateMap{M}{x}{v}\define (\map{M}\setminus \setcomp
	{\tuple{\processVariable{x}, \val{v^\prime}}}
	{\forall \val{v^\prime}})
	\cup \setof{\tuple{\processVariable{x}, \val{v}}}
	\and
	\domMap{M}\define
	\setcomp{
		\processVariable{x}
	}{
		\exists \val{v}\suchthat \tuple{\processVariable{x},
			\val{v}}\in \map{M}
	}
	\and
	\removeMap{M}{x}\define\map{M}\setminus\setcomp{\tuple{\processVariable{x},
	\val{v}}}{\forall\val{v}}
	\and
	\map{M_1}\biguplus\map{M_2} \define\begin{cases}
		M_1 \cup M_2 & \text{if }\domMap{M_1}\cap\domMap{M_2} = \emptyset\\
		\text{undefined} & \text{otherwise}
	\end{cases}
	\end{mathpar}

	We write \(\map{M_{\emptyset}}\) for the empty map.
\end{definition}

Given a map \(\map{M}\) and a refinement \(\refinement{r}\), we note
\(\map{M}\models \refinement{r}\)
if and only if
the refinement \(\refinement{r}\) is closed under the map \(\map{M}\):
\(\refinement{\fv{r}} \subseteq \domMap{M}\),
and
evaluates to truth after substitution:
\(\refeval{r\subst{\refinement{\fv{r}}}{\lookupMap{M}{\refinement{\fv{r}}}}}=\top\).

\begin{definition}[Queues\label{def:queue}]
	A \emph{queue} $\queue{w}$ is a set of FIFOs for each pair of distinct
	participants in $\set{\setRoles}\times\set{\setRoles}$.
	$\lookupQueue{w}{p}{q}$ denotes a FIFO from
	$\participant{p}$ to $\participant{q}$ in $\queue{w}$.
	We define:
	\begin{enumerate}
		\item
		$\pushQueue{w}{e}{p}{q}
		\define
		\setcomp{
			\lookupQueue{w}{p^\prime}{q^\prime}
		}{
			\participant{p^\prime} \neq \participant{p} \vee
			\participant{q^\prime} \neq \participant{q}
		}
		\cup \{\push{\lookupQueue{w}{p}{q}}{e}\}$
		\item $\popQueue{w}{p}{q}
		\define
		\setcomp{
			\lookupQueue{w}{p^\prime}{q^\prime}
		}{
			\participant{p^\prime} \neq \participant{p} \vee
			\participant{q^\prime} \neq \participant{q}
		}
		\cup \{\pop{\lookupQueue{w}{p}{q}}\}$ if $\pop{\lookupQueue{w}{p}{q}}$
		is defined
		\item
		$\nextElemQueue{w}{p}{q} \define
		\nextElem{\lookupQueue{w}{p}{q}}$
	\end{enumerate}
	We write $\emptyqueue$ for the empty queue, which is the queue where
	$\lookupQueue{w}{p}{q} = \emptyfifo$ for all $\participant{p}$ and
	$\participant{q}$.
\end{definition}

\begin{definition}[Trace Ending-Up with Map\label{def:end_up}]
	A trace \(\trace\) \emph{ends up} with \(\map{M_{\traceP}}\) w.r.t.\ an
	initial map \(\map{M_\text{\rm I}}\) if and only if:
	\begin{enumerate}
		\item if \(\trace\) is \(\tempty\),
		then \(\map{M_\text{\rm I}} = \map{M_{\traceP}}\)
		\label{def:end_up:i}; and
		\item if \(\trace\) is
		\(\tconcat{\actionGeneric{\roleP}{\roleQ}{\lblL,(\varX,
				\val{c})}{r}}{\tau^\prime}\),
		then \(\trace[\tau^\prime]\) ends up with \(\map{M_{\traceP}}\)
		w.r.t.\ \(\updateMap{M_\text{\rm I}}{x}{c}\).
		\label{def:end_up:ii}
		\qedhere
	\end{enumerate}
\end{definition}

\subsection{Run and trace of an RCS}

In order to define traces of RCSs, we first define \emph{runs} (sequences of
states, where the order is consistent with the reduction rules) and
explain how to obtain a trace (as defined in~\cref{def:trace}) from a run:
the function $\operatorname{trace\_step}$ extracts the action that
happens between two consecutive configurations; by running this function on
all successive states of a run, we retrieve the sequence of actions
that took place, i.e. the trace of the run.

\begin{definition}[Run of an RCS]
	\label{def:run}
	A \emph{run} of an RCS is a sequence
	\(\sigma_0; \dots\) of refined configurations such that
	\begin{inlineenum}
		\item for all \(i\in\{1, \dots\}\),
		\(\sigma_{i-1}\csred\sigma_i\)
		\item \(\sigma_0\) is initial
		\label{def:run:ii}
		\item if the sequence is finite, then the last configuration
		\(\sigma_n\) is final.
	\end{inlineenum}
\end{definition}

\begin{remark}[Reachable State]
	We say a state $\sigma$ of an RCS is \emph{reachable} if there is a run of
	$R$ that contains $\sigma$.
	This implies the run begins from an
	initial state (\cref{def:run:ii} in \cref{def:run}).
\end{remark}

\begin{definition}[Trace of a Reduction Step\label{def:trace_red_step}]
	\begin{equation*}
		\operatorname{trace\_step}(\tuple{\sigma_1, \sigma_2}) \define
		\begin{cases}
			\ltsLabelRecv{j}{i}{\msgM}[r] & \text{if }
			\sigma_1
			\csred[{\red{s_\participant{i}}{s_{\participant{i}}^\prime}{\mathrel{?}}{\msgM}{j}{i}[r]}]
			\sigma_2
			\\
			\ltsLabelSend{i}{j}{\msgM}[r] & \text{if }
			\sigma_1
			\csred[{\red{s_\participant{i}}{s_{\participant{i}}^\prime}{\mathrel{!}}{\msgM}{i}{j}[r]}]
			\sigma_2
			\\
			\text{undefined} & \text{otherwise}
		\end{cases}
	\end{equation*}
\end{definition}

\begin{definition}[Trace of a Refined Communicating System\label{def:trace_RGA}]
	Given a run \(\sigma_0; \dots\) of an RCS, the trace of those
	reductions is given with the following function:
	\begin{equation*}
		\operatorname{trace}(\sigma_0; \sigma_1; \dots)=
		\begin{cases}
			\tconcat{\withcolor{black}{\operatorname{trace\_step}(\tuple{\sigma_0,
						\sigma_1})}}{\withcolor{black}{\operatorname{trace}(\sigma_1;
					\dots)}} & \text{if
					}\operatorname{trace\_step}(\tuple{\sigma_0,
				\sigma_1})\text{ is defined} \\
			\text{undefined}& \text{otherwise}
		\end{cases}
	\end{equation*}
	\begin{mathpar}
		\operatorname{trace}(\epsilon) = \tempty
		\and
		\operatorname{trace}(\sigma_0) = \tempty
	\end{mathpar}

\end{definition}

\begin{remark}[Trace of step is invertible\label{rmk:trace_invertible}]
	\(\operatorname{trace\_step}\) is injective and therefore
	invertible. For the sake of simplicity, we implicitly convert
	traces into runs and vice-versa.
\end{remark}

\begin{example}[Run and Trace of an RCS\label{ex:run_trace_CS}]
	A possible run of the RCS of \(\gtFmt{G_\pm}\) is:\\[1mm]
	$
	\begin{array}{l}
		\small
		\tuple{\tuple{\participant{A_1}, \participant{B_1}, \participant{C_1}},
			\emptyqueue, \emptyMap};
		\tuple{\tuple{\participant{A_2}, \participant{B_1}, \participant{C_1}},
			\queue{w_1}, \map{\{\tuple{\processVariable{n}, \val{5}}\}}};
		\tuple{\tuple{\participant{A_2}, \participant{B_2}, \participant{C_1}},
			\emptyqueue, \map{\{\tuple{\processVariable{n}, \val{5}}\}}};\\
		\tuple{\tuple{\participant{A_2}, \participant{B_2}, \participant{C_2}},
			\queue{w_2}, \map{M}};
		\tuple{\tuple{\participant{A_2}, \participant{B_3}, \participant{C_2}},
			\emptyqueue, \map{M}};
		\tuple{\tuple{\participant{A_2}, \participant{B_4}, \participant{C_2}},
			\queue{w_3}, \map{M}};
		\tuple{\tuple{\participant{A_2}, \participant{B_4}, \participant{C_3}},
			\emptyqueue, \map{M}}
	\end{array}
	$\\
	with the queues $\queue{w_1} =
	\pushQueue{\emptyqueue}{\msgM[\tuple{\gtLbl{secret},
			\tuple{\processVariable{n}, \val{5}}}]}{A}{B}$; $\queue{w_2} =
	\pushQueue{\emptyqueue}{\msgM[\tuple{\gtLbl{guess},
	\tuple{\processVariable{x},
				\val{5}}}]}{C}{B}$; $\queue{w_3} =
	\pushQueue{\emptyqueue}{\msgM[\tuple{\gtLbl{correct},
			\tuple{\processVariable{\_}, \val{\_}}}]}{B}{C}$; and the map
			\(\map{M} =
	\map{\{\tuple{\processVariable{n}, \val{5}},
		\tuple{\processVariable{x}, \val{5}}\}}\).

	This run produces the trace \(\tconcat{\trace}{\trace[\tau_2]}\) presented
	in
	\cref{ex:WPT}. Notice that, from the configuration with
	\(\tuple{\participant{A_2}, \participant{B_3}, \participant{C_2}}\),
	the system cannot take any of the \grsnd{} transition with
	\(\msgM[\tuple{\gtLbl{more}, \tuple{\processVariable{\_}, \val{\_}}}]\)
	and
	\(\msgM[\tuple{\gtLbl{less}, \tuple{\processVariable{\_},
			\val{\_}}}]\),
	as the refinement in the corresponding transition in the RCFSM of
	\(\participant{B}\) does not hold with \(\map{M}\): \(\map{M}\not\models
	\refinement{x < n}\) (resp.\ \(\map{M}\not\models \refinement{x > n}\))
	(c.f.\ \cref{ex:RCS_transitions}).
\end{example}

\subsection{Syntax\label{app:rmpst_syntax}}

\subsubsection{Roles in a global type}

\begin{definition}[Set of Roles in a Global Type\label{def:role_in_gtype}]
	\[\rolesOfGT{G} = \begin{cases}
		\{\roleP, \roleQ\}\cup \bigcup_{i\in I} \rolesOfGT{G_i} & \text{if } \gSessionType{G} = \gtComm[i][I]{p}{q}{l}[x]{S}[R]{G}\\
		\rolesOfGT{G^\prime} & \text{if } \gSessionType{G} = \gtRec{t}{G^\prime}\\
		\emptyset & \text{otherwise}
	\end{cases}\]
	We note \(\participant{p}\in\gtFmt{G}\) for \(\participant{p}\in\rolesOfGT{G}\).
\end{definition}

\subsubsection{Type occurring in a type}

 Cases (i) --- (iii) of
\inApp{def:type_occurring_type} recursively delve
into the continuations, until we match on the exact type with case (iv).

\begin{definition}[Type Occurring in a Type\label{def:type_occurring_type}]
	We say a type \(\ltFmt{T^\prime}\) occurs in \(\ltFmt{T}\) (noted
	\(\occursLoc{T^\prime}{T}\)) if and only if at least one of the following
	conditions holds:
	\begin{inlineenum}[or]
		\item if \(\ltFmt{T}\) is
		\(\ltIntC[i][I]{\participant{p}}{\ell}{x}{S}[R]{\lSessionType{T}}\),
		there exist \(i\in I\) such that \(\occursLoc{T^\prime}{T_i}\)
		\item if
		\(\ltFmt{T}\) is
		\(\ltExtC[i][I]{\participant{p}}{\ell}{x}{S}{R}{\lSessionType{T}}\),
		there exist \(i\in I\) such that \(\occursLoc{T^\prime}{T_i}\)
		\item if \(\ltFmt{T}\) is \(\ltRec{t}{T_\mu}\),
		\(\occursLoc{T^\prime}{T_\mu}\)
		\item \(\lSessionType{T^\prime} = \lSessionType{T}\).
	\end{inlineenum}
\end{definition}

\begin{definition}[Global Type Occurring in a Global
Type\label{def:g_type_occurring_g_type}]
	We say a type \(\gtFmt{G^\prime}\) occurs in \(\gtFmt{G}\) (noted
	\(\occursGlob{G^\prime}{G}\)) if and only if at least one of the following
	conditions holds:
	\begin{itemize}
		\item if
		\(\gtFmt{T}\) is
		$\gtComm[i][I]{p}{q}{\ell}[x]{S}[r]{G}$,
		there exist \(i\in I\) such that \(\occursGlob{G^\prime}{G_i}\)
		\item if \(\gtFmt{G}\) is \(\gtRec{t}{G_\mu}\),
		\(\occursGlob{G^\prime}{G_\mu}\)
		\item \(\gSessionType{G^\prime} = \gSessionType{T}\).
	\end{itemize}
\end{definition}

\section{Section~\ref{sec:traces}}
\label{app:traces}
\subsection{Proofs of Lemmas}

\begin{lemma}[Concatenating Well-Queued Traces]
	\label{lem:concat_WQ_traces}
	For any traces \(\trace[\tau_1]\) and \(\trace[\tau_2]\),
	for any queues \(\queueQ[i]\), \(\queueQ[t]\) and \(\queueQ[f]\),
	if \(\trace[\tau_1]\) ends up with queue \(\queueQ[t]\) with respect to
	\(\queueQ[i]\),
	and \(\trace[\tau_2]\) ends up with queue \(\queueQ[f]\) with respect to
	\(\queueQ[t]\),
	then \(\tconcat{\tau_1}{\tau_2}\) ends up with queue \(\queueQ[f]\) with
	respect to \(\queueQ[i]\).
\end{lemma}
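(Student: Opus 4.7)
The plan is to prove this lemma by straightforward induction on the structure (equivalently, length) of $\trace[\tau_1]$, directly unfolding \cref{def:ends_up_with_queue}. The statement is really saying that "ending up with a queue" is transitive under trace concatenation, which should fall out once the recursive clauses of the definition are peeled off one at a time.

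For the base case, I would take $\trace[\tau_1] = \tempty$. By clause~\ref{item:def:ends_up_with_queue:i} of \cref{def:ends_up_with_queue}, the hypothesis that $\tempty$ ends up with $\queueQ[t]$ w.r.t.\ $\queueQ[i]$ forces $\queueQ[i] = \queueQ[t]$. Since $\tconcat{\tempty}{\tau_2} = \trace[\tau_2]$ and $\trace[\tau_2]$ already ends up with $\queueQ[f]$ w.r.t.\ $\queueQ[t] = \queueQ[i]$, we are done.

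For the inductive step, suppose $\trace[\tau_1] = \tconcat{\act}{\tau_1^\prime}$ and assume the statement holds for $\trace[\tau_1^\prime]$. There are two cases depending on whether $\act$ is a send or a receive:
\begin{itemize}
    \item If $\act = \actionSend{\roleP}{\roleQ}{\msgM}{r}$, then by clause~\ref{item:def:ends_up_with_queue:ii}, $\trace[\tau_1^\prime]$ ends up with $\queueQ[t]$ w.r.t.\ $\pushQueue{\queueQ[i]}{\msgM}{p}{q}$. Applying the induction hypothesis to $\trace[\tau_1^\prime]$ (with initial queue $\pushQueue{\queueQ[i]}{\msgM}{p}{q}$) and $\trace[\tau_2]$ yields that $\tconcat{\tau_1^\prime}{\tau_2}$ ends up with $\queueQ[f]$ w.r.t.\ $\pushQueue{\queueQ[i]}{\msgM}{p}{q}$. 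Re-applying clause~\ref{item:def:ends_up_with_queue:ii} in the forward direction then gives that $\tconcat{\act}{(\tconcat{\tau_1^\prime}{\tau_2})} = \tconcat{(\tconcat{\act}{\tau_1^\prime})}{\tau_2} = \tconcat{\tau_1}{\tau_2}$ ends up with $\queueQ[f]$ w.r.t.\ $\queueQ[i]$.
    \item If $\act = \actionReceive{\roleP}{\roleQ}{\msgM}{r}$, the argument is symmetric using clause~\ref{item:def:ends_up_with_queue:iii}: we obtain from the hypothesis that $\nextElemQueue{\queueQ[i]}{p}{q} = \msgM$ and that $\trace[\tau_1^\prime]$ ends up with $\queueQ[t]$ w.r.t.\ $\popQueue{\queueQ[i]}{p}{q}$, apply the induction hypothesis, and close with clause~\ref{item:def:ends_up_with_queue:iii}.
\end{itemize}

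There is no real obstacle: the lemma is essentially an associativity/transitivity fact about the recursive definition, and the only bookkeeping needed is associativity of trace concatenation, i.e.\ $\tconcat{(\tconcat{\act}{\tau_1^\prime})}{\tau_2} = \tconcat{\act}{(\tconcat{\tau_1^\prime}{\tau_2})}$, which is immediate from \cref{def:trace}. If anything warrants care, it is matching the rewritten "intermediate" queue in each inductive case so that the induction hypothesis applies with the right initial queue; this is purely syntactic.
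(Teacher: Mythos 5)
Your proof is correct and follows essentially the same route as the paper's: induction on the length of $\trace[\tau_1]$ with a send/receive case split driven by clauses~\ref{item:def:ends_up_with_queue:ii} and~\ref{item:def:ends_up_with_queue:iii} of \cref{def:ends_up_with_queue}. If anything, your inductive step is slightly cleaner than the paper's (which re-applies the induction hypothesis to the singleton prefix and therefore needs a separate length-one base case), since you instead unfold the definition clause for the head action directly.
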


\begin{proof}
	By induction on the size of \(\trace[\tau_1]\).
	Notice that we leave \(\queueQ[i]\), \(\queueQ[t]\), \(\queueQ[f]\) and \(\trace[\tau_2]\)
	quantified; hence the property we want to prove by induction is
	\(\forall \trace[\tau_2], \queueQ[i], \queueQ[t], \queueQ[f]\),
	if \(\trace[\tau_1]\) ends up with queue \(\queueQ[t]\) w.r.t.
	\(\queueQ[i]\) and \(\trace[\tau_2]\) ends up with queue \(\queueQ[f]\)
	w.r.t. \(\queueQ[t]\), then \(\tconcat{\tau_1}{\tau_2}\) ends up
	with queue \(\queueQ[f]\) w.r.t. \(\queueQ[i]\).

	\begin{description}
		\item[Base case, size of \({\trace[\tau_1]}\) is \(0\):] in this
			case, \(\trace[\tau_1] = \tempty\), which trivially
			holds.
		\item[Base case, size of \({\trace[\tau_1]}\) is \(1\):] in this
			case, \(\trace[\tau_1] = \act\).
			We have to show that, for all \(\trace[\tau_2]\), if
			\(\act\) ends up with queue \(\queueQ[t]\) w.r.t.
			\(\queueQ[i]\),
			and if \(\trace[\tau_2]\) ends up with queue
			\(\queueQ[f]\) w.r.t. \(\queueQ[t]\),
			then \(\tconcat{\act}{\tau_2}\) ends up with
			\(\queueQ[f]\) w.r.t. \(\queueQ[i]\).

			By case analysis on \(\act\):
			\begin{description}
				\item[If \(\act\) is
					\(\actionReceive{\roleP}{\roleQ}{\msgM}{r}\),]
					since \(\act\) (i.e. \(\tconcat{\act}{\tempty}\)) ends up with \(\queueQ[t]\)
					w.r.t. \(\queueQ[i]\),
					then, from \cref{item:def:ends_up_with_queue:ii} in \cref{def:ends_up_with_queue},
					\(\tempty\) ends up with queue \(\queueQ[t]\) w.r.t. \(\popQueue{\queueQ[i]}{\roleP}{\roleQ}\),
					and \(\nextElemQueue{\queueQ[i]}{\roleP}{\roleQ} = \msgM\).

					From \cref{item:def:ends_up_with_queue:i} in \cref{def:ends_up_with_queue},
					\(\queueQ[t] = \popQueue{\queueQ[i]}{\roleP}{\roleQ}\).

					Therefore, we have that:
					\begin{itemize}
						\item \(\trace[\tau_2]\) ends up with \(\queueQ[f]\) w.r.t.
							\(\queueQ[t] = \popQueue{\queueQ[i]}{\roleP}{\roleQ}\)
							(by hypothesis and above); and
						\item \(\nextElemQueue{\queueQ[i]}{p}{q} = \msgM\)
							(as shown above).
					\end{itemize}
					Therefore, from \cref{item:def:ends_up_with_queue:ii} in
					\cref{def:ends_up_with_queue},
					\(\tconcat{\act}{\tau_2}\) ends up with \(\queueQ[f]\) w.r.t.
					\(\queueQ[i]\).

				\item[If \(\act\) is
					\(\actionSend{\roleP}{\roleQ}{\msgM}{r}\),]
					since \(\act\) (i.e. \(\tconcat{\act}{\tempty}\)) ends up with \(\queueQ[t]\)
					w.r.t. \(\queueQ[i]\),
					then, from \cref{item:def:ends_up_with_queue:iii} in \cref{def:ends_up_with_queue},
					\(\tempty\) ends up with queue \(\queueQ[t]\) w.r.t.
					\(\pushQueue{\queueQ[i]}{\msgM}{\roleP}{\roleQ}\).

					From \cref{item:def:ends_up_with_queue:i} in \cref{def:ends_up_with_queue},
					\(\queueQ[t] = \pushQueue{\queueQ[i]}{\msgM}{\roleP}{\roleQ}\).

					Therefore, we have that \(\trace[\tau_2]\) ends up with \(\queueQ[f]\) w.r.t.
					\(\queueQ[t] = \pushQueue{\queueQ[i]}{\msgM}{\roleP}{\roleQ}\)
					(by hypothesis and above).
					Therefore, from \cref{item:def:ends_up_with_queue:iii} in
					\cref{def:ends_up_with_queue},
					\(\tconcat{\act}{\tau_2}\) ends up with \(\queueQ[f]\) w.r.t.
					\(\queueQ[i]\).
			\end{description}
		\item[Inductive case, size of \({\trace[\tau_1]}\) is \(n+1\)
			(\(n \geq 1\)):]
			The induction hypothesis (IH) is:
			for all \(\trace[\tau_1^{IH}]\) with length less or equal to \(n\),
			for all \(\trace[\tau_2^{IH}], \queueQ[i]^{IH}, \queueQ[t]^{IH}, \queueQ[f]^{IH}\),
			if \(\trace[\tau_1^{IH}]\) ends up with queue \(\queueQ[t]^{IH}\) w.r.t \(\queueQ[i]^{IH}\),
			and \(\trace[\tau_2^{IH}]\) ends up with queue \(\queueQ[f]^{IH}\) w.r.t. \(\queueQ[t]^{IH}\),
			then \(\tconcat{\tau_1^{IH}}{\tau_2^{IH}}\) ends up with queue \(\queueQ[f]^{IH}\)
			w.r.t. \(\queueQ[i]^{IH}\).

			Since the length of \(\trace[\tau_1]\) is \(n+1 \geq 2\),
			\(\trace[\tau_1] = \tconcat{\act}{\tau_1^\prime}\).
			Notice that the length of \(\trace[\tau_1^\prime]\) is \(n\).

			Since \(\trace[\tau_1]\) ends up with \(\queueQ[t]\) w.r.t. \(\queueQ[i]\),
			then \(\trace[\tau_1^\prime]\) ends up with queue \(\queueQ[t]\) w.r.t. either
			\(\pushQueue{\queueQ[i]}{\msgM}{p}{q}\) or \(\popQueue{\queueQ[i]}{p}{q}\) (depending on
			\(\act\)), let call it \(\queueQ[\act]\).

			Therefore, by applying the induction hypothesis with
			\(\trace[\tau_1^{IH}] = \trace[\tau_1^\prime]\),
			\(\trace[\tau_2^{IH}] = \trace[\tau_2]\),
			\(\queueQ[i]^{IH} = \queueQ[\act]\),
			\(\queueQ[t]^{IH} = \queueQ[t]\),
			\(\queueQ[f]^{IH} = \queueQ[f]\),
			we have that \(\tconcat{\tau_1^\prime}{\tau_2}\) ends up with queue
			\(\queueQ[f]\) w.r.t. \(\queueQ[\act]\).

			By applying the induction hypothesis a second time, with
			\(\trace[\tau_1^{IH}] = \act\),
			\(\trace[\tau_2^{IH}] = \tconcat{\tau_1^\prime}{\tau_2}\),
			\(\queueQ[i]^{IH} = \queueQ[i]\),
			\(\queueQ[t]^{IH} = \queueQ[t_i]\),
			\(\queueQ[f]^{IH} = \queueQ[f]\),
			we have that \(\tconcat{\act}{\tconcat{\tau_1^\prime}{\tau_2}} =
			\tconcat{\tau_1}{\tau_2}\) ends up with queue \(\queueQ[f]\)
			w.r.t. \(\queueQ[i]\), which concludes the inductive step.
	\qedhere
	\end{description}
\end{proof}

\separator

\begin{lemma}[Concatenating Well-Predicated Traces]
	\label{lem:concat_WP_traces}
	For any map \(\map{M}\), for any traces \(\trace[\tau_1]\) and
	\(\trace[\tau_2]\), if \(\trace[\tau_1]\)
	is well-predicated under \(\map{M}\) and ends up with
	\(\map{M_{\trace[\tau_1]}}\) with respect to \(\map{M}\), and if
	\(\trace[\tau_2]\) is well-predicated under
	\(\map{M_{\trace[\tau_1]}}\), then \(\tconcat{\tau_1}{\tau_2}\) is
	well-predicated under \(\map{M}\).
\end{lemma}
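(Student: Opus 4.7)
The plan is to prove this by induction on the size of $\trace[\tau_1]$, closely mirroring the structure of \cref{lem:concat_WQ_traces}. As in that proof, I would quantify over $\map{M}$, $\map{M_{\trace[\tau_1]}}$, and $\trace[\tau_2]$ inside the induction hypothesis so that the inductive step has enough freedom to re-instantiate the initial map.

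For the base case ($\trace[\tau_1] = \tempty$), \cref{def:end_up:i} in \cref{def:end_up} forces $\map{M_{\trace[\tau_1]}} = \map{M}$, and since $\tconcat{\tempty}{\tau_2} = \trace[\tau_2]$, the conclusion is immediate from the hypothesis that $\trace[\tau_2]$ is well-predicated under $\map{M_{\trace[\tau_1]}}$. For the inductive step, write $\trace[\tau_1] = \tconcat{\act}{\tau_1^\prime}$ with $\act = \actionGeneric{\roleP}{\roleQ}{\lblL,(\varX,\val{c})}{r}$. Since $\trace[\tau_1]$ is well-predicated under $\map{M}$, \cref{def:WPT:ii} in \cref{def:WPT} gives both $\updateMap{M}{x}{c} \models \refinement{r}$ and that $\trace[\tau_1^\prime]$ is well-predicated under $\updateMap{M}{x}{c}$. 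Analogously, \cref{def:end_up:ii} in \cref{def:end_up} yields that $\trace[\tau_1^\prime]$ ends up with $\map{M_{\trace[\tau_1]}}$ with respect to $\updateMap{M}{x}{c}$.

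I would then apply the induction hypothesis to $\trace[\tau_1^\prime]$ (whose size is one less than $\trace[\tau_1]$) with initial map $\updateMap{M}{x}{c}$, terminal map $\map{M_{\trace[\tau_1]}}$, and second trace $\trace[\tau_2]$ (which is well-predicated under $\map{M_{\trace[\tau_1]}}$ by assumption). This yields that $\tconcat{\tau_1^\prime}{\tau_2}$ is well-predicated under $\updateMap{M}{x}{c}$. Combining with $\updateMap{M}{x}{c} \models \refinement{r}$ and reapplying \cref{def:WPT:ii}, I conclude that $\tconcat{\act}{\tconcat{\tau_1^\prime}{\tau_2}} = \tconcat{\tau_1}{\tau_2}$ is well-predicated under $\map{M}$.

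There is no serious obstacle: the statement is a direct structural recursion mirroring the definition of well-predicated traces, and (unlike the well-queued case) actions do not split into send/receive subcases that require distinct reasoning, because \cref{def:WPT} treats all actions uniformly through the generic $\updateMap{M}{x}{c}$ update. The only subtlety to avoid is keeping $\map{M}$ quantified in the induction, so that when the recursive call shifts the initial map to $\updateMap{M}{x}{c}$ it remains a valid instance of the hypothesis; this is exactly the pattern already used for \cref{lem:concat_WQ_traces}.
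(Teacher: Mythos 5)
Your proof is correct and follows essentially the same route as the paper's: induction on the size of \(\trace[\tau_1]\) with the initial map kept universally quantified in the induction hypothesis, the base case resolved via \cref{def:end_up:i}, and the inductive step peeling off the head action using \cref{def:WPT:ii} and \cref{def:end_up:ii} before reassembling. No gaps.
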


\begin{proof}
	By induction on the size of \(\trace[\tau_1]\):
	\begin{description}
		\item[Case size of \({\trace[\tau_1]}\) is \(0\):]
			In that case, \(\trace[\tau_1] = \tempty\),
			therefore \(\tconcat{\tau_1}{\tau_2} = \trace[\tau_2]\)
			and \(\map{M_{\tau_1}} = \map{M}\) (from \cref{def:end_up:i} in \cref{def:end_up}).
			The result then trivially holds.
		\item[Case size of \({\trace[\tau_1]}\) is \(n+1\):]
			the induction hypothesis is: "for all
			\(\map{M^\prime}\), \(\trace[\tau_1^\prime]\) and
			\(\trace[\tau_2^\prime]\),
			(H1) if the size of \(\trace[\tau_1^\prime]\) is less than or equal to \(n\), then
			(H2) if \(\trace[\tau_1^\prime]\) is well-predicated with respect to \(\map{M^\prime}\)
			and ends up with \(\map{M^\prime_{\trace[\tau_1^\prime]}}\)
			w.r.t.\ to \(\map{M^\prime}\),
			and (H3) if \(\trace[\tau_2^\prime]\) is well-predicated with respect to \(\map{M^\prime_{\trace[\tau_1^\prime]}}\),
			then \(\tconcat{\tau_1^\prime}{\tau_2^\prime}\)
			is well-predicated with respect to \(\map{M^\prime}\)".

			Let \(\trace[\tau_1] = \tconcat{\act[\alpha_0]}{\tconcats{\act[\alpha_1]}{\act[\alpha_n]}}\),
			with \(\act[\alpha_0] =
			\actionGeneric{\roleP}{\roleQ}{\lblL,(\varX, \val{c})}{r}\).

			Given that \(\trace[\tau_1]\) is well-predicated with respect to \(\map{M}\),
			then
			\begin{inlineenum}
			\item \(\refinement{r}\) holds under \(\updateMap{M}{x}{c}\)
				\label{lem:concat_WP_traces:i}
			\item \(\tconcats{\act[\alpha_1]}{\act[\alpha_n]}\) is well-predicated with respect to \(\updateMap{M}{x}{c}\).
				\label{lem:concat_WP_traces:ii}
			\end{inlineenum}

			Let \(\map{M}_{\tconcats{\act[\alpha_1]}{\act[\alpha_n]}}\)
			be the map \(\tconcats{\act[\alpha_1]}{\act[\alpha_n]}\)
			ends up to with respect to \(\updateMap{M}{x}{c}\).

			From the induction hypothesis (with \(\map{M^\prime}\) being \(\updateMap{M}{x}{c}\),
			\(\trace[\tau_1^\prime]\) being \(\tconcats{\act[\alpha_1]}{\act[\alpha_n]}\),
			\(\trace[\tau_2^\prime]\) being \(\trace[\tau_2]\)),
			and given that
			\begin{inlineenum}
			\item (H1) holds trivially
			\item (H2) holds from \cref{lem:concat_WP_traces:ii}
			\item (H3) holds from \cref{def:end_up:ii} in \cref{def:end_up}
				(which directly shows that \(\map{M_{\trace[\tau_1]}} = \map{M_{\tconcats{\act[\alpha_1]}{\act[\alpha_n]}}}\))
			\end{inlineenum}
			then \(\tconcat{\tconcats{\act[\alpha_1]}{\act[\alpha_n]}}{\tau_2}\) is well-predicated under \(\updateMap{M}{x}{c}\).

			From \cref{lem:concat_WP_traces:i} above and \cref{def:WPT:ii} in \cref{def:WPT},
			we have that \(\tconcat{\act[\alpha_0]}{\tconcat{\tconcats{\act[\alpha_1]}{\act[\alpha_n]}}{\tau_2}} = \tconcat{\tau_1}{\tau_2}\)
			is well-predicated under \(\map{M}\), which concludes the inductive step.
			\qedhere
	\end{description}
\end{proof}

\section{Refined Automata\label{app:refined_automata}}

\begin{figure}
\begin{subfigure}{\textwidth}
	\begin{center}
	\begin{tikzpicture}[node distance=4.2cm]
		\draw node[draw, rounded rectangle] (G1) {\(\participant{A_1}\)};
		\draw node[draw, rounded rectangle, right=3.5cm of G1] (G2) {\(\participant{A_2}\)};
		\coordinate [above=.5cm of G1] (start);
		\draw[-{Stealth[scale=1.5]}] (start) -- (G1);

		\draw[-{Stealth[scale=1.5]}] (G1) to node[anchor=south, pos=.5, sloped] {\(\actionSend{A}{B}{\gtLbl{secret}, \tuple{\processVariable{n}, \val{c_n}}}{\taut}\)} (G2);
	\end{tikzpicture}
		\caption{RCFSM of \(\participant{A}\) in the \(\gtFmt{G_\pm}\)
		protocol.}
	\end{center}
\end{subfigure}

	\vspace{.5cm}
\begin{subfigure}{\textwidth}
	\begin{center}
	\begin{tikzpicture}[node distance=4.2cm]
		\draw node[draw, rounded rectangle] (G1) {\(\participant{B_1}\)};
		\draw node[draw, rounded rectangle, right=3.5cm of G1] (G2) {\(\participant{B_2}\)};
		\draw node[draw, rounded rectangle, right=3.5cm of G2] (G3) {\(\participant{B_3}\)};
		\draw node[draw, rounded rectangle, right=3.9cm of G3] (G4) {\(\participant{B_4}\)};
		\coordinate [above=.5cm of G1] (start);
		\draw[-{Stealth[scale=1.5]}] (start) -- (G1);

		\draw[-{Stealth[scale=1.5]}] (G1) to node[anchor=south, pos=.5, sloped] {\(\actionReceive{A}{B}{\gtLbl{secret}, \tuple{\processVariable{n}, \val{c_n}}}{\taut}\)} (G2);
		\draw[-{Stealth[scale=1.5]}] (G2) to node[anchor=south, pos=.5, sloped] {\(\actionReceive{C}{B}{\gtLbl{guess}, \tuple{\processVariable{x}, \val{c_x}}}{\taut}\)} (G3);

		\coordinate (midG23) at ($ (G3)!.5!(G2) $);
		\coordinate [above=.7cm of midG23] (midG23up);
		\coordinate [below=.7cm of midG23] (midG23down);

		\draw[-{Stealth[scale=1.5]}]
			(G3) |- (midG23up) node[anchor=south] {\(\actionSend{B}{C}{\gtLbl{more}, \tuple{\processVariable{\_}, \val{\_}}}{x < n}\)} -|  (G2);
		\draw[-{Stealth[scale=1.5]}]
			(G3) |- (midG23down) node[anchor=north] {\(\actionSend{B}{C}{\gtLbl{less}, \tuple{\processVariable{\_}, \val{\_}}}{x > n}\)}  -|  (G2);
		\draw[-{Stealth[scale=1.5]}] (G3) to node[anchor=south, pos=.5, sloped] {\(\actionSend{B}{C}{\gtLbl{correct}, \tuple{\processVariable{\_}, \val{\_}}}{x = n}\)} (G4);
	\end{tikzpicture}
		\caption{RCFSM of \(\participant{B}\) in the \(\gtFmt{G_\pm}\)
		protocol.}
	\end{center}
\end{subfigure}

	\vspace{.5cm}
\begin{subfigure}{\textwidth}
	\begin{center}
	\begin{tikzpicture}[node distance=3.5cm]
		\draw node[draw, rounded rectangle] (G1) {\(\participant{C_1}\)};
		\draw node[draw, rounded rectangle, right=of G1] (G2) {\(\participant{C_2}\)};
		\draw node[draw, rounded rectangle, right=of G2] (G3) {\(\participant{C_3}\)};
		\coordinate [left=.5cm of G1] (start);
		\draw[-{Stealth[scale=1.5]}] (start) -- (G1);

		\draw[-{Stealth[scale=1.5]}] (G1) to node[anchor=south, pos=.5, sloped] {\(\actionSend{C}{B}{\gtLbl{guess}, \tuple{\processVariable{x}, \val{c_x}}}{\taut}\)} (G2);

		\coordinate (midG12) at ($ (G2)!.5!(G1) $);
		\coordinate [above=.7cm of midG12] (midG12up);
		\coordinate [below=.7cm of midG12] (midG12down);

		\draw[-{Stealth[scale=1.5]}]
			(G2) |- (midG12up) node[anchor=south] {\(\actionReceive{B}{C}{\gtLbl{more}, \tuple{\processVariable{\_}, \val{\_}}}{\taut}\)} -|  (G1);
		\draw[-{Stealth[scale=1.5]}]
			(G2) |- (midG12down) node[anchor=north] {\(\actionReceive{B}{C}{\gtLbl{less}, \tuple{\processVariable{\_}, \val{\_}}}{\taut}\)}  -|  (G1);
		\draw[-{Stealth[scale=1.5]}] (G2) to node[anchor=south, pos=.5, sloped] {\(\actionReceive{B}{C}{\gtLbl{correct}, \tuple{\processVariable{\_}, \val{\_}}}{\taut}\)} (G3);
	\end{tikzpicture}
		\caption{RCFSM of \(\participant{C}\) in the \(\gtFmt{G_\pm}\)
		protocol.}
	\end{center}
\end{subfigure}
\caption{RCFSM of the participants of the \(\gtFmt{G_\pm}\) protocol.}
\label{fig:ex:RCFSM_GPM}
\end{figure}

\begin{lemma}
	\label{thm:run_send_before_receive}
	For all runs $\sigma_0; \dots; \sigma_{i-1}; \sigma_i; \dots$,
	if $\sigma_{i-1}\csred[t]\sigma_i$ with $t =
	\redact{s_\participant{i}}{s_{\participant{i}}^\prime}{\actionReceive{j}{i}{\gtLbl{\ell},
			\tuple{\processVariable{x}, \val{c}}}{r}}$,
	then exist $j \leq i-1$ and $\refinement{r^\prime}$ such that
	$\sigma_{j-1}\csred[t^\prime]\sigma_j$ with
	$t^\prime=\redact{s_\participant{i}}{s_{\participant{i}}^\prime}{\actionSend{j}{i}{\gtLbl{\ell},
			\tuple{\processVariable{x}, \val{c}}}{r^\prime}}$.
\end{lemma}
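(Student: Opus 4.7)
The plan is to proceed by strong induction on $i$, using as the main device a queue invariant stating that the contents of every FIFO in a reachable configuration is exactly the (in-order) list of payloads whose matching \grsnd{} has fired but whose corresponding \grrec{} has not yet fired. Once this invariant is established, the conclusion is immediate: the premise of \grrec{} in the step $\sigma_{i-1} \csred[t] \sigma_i$ requires $\nextElemQueue{\queueQ_{i-1}}{j}{i} = \msgM[\tuple{\gtLbl{\ell}, \tuple{\processVariable{x}, \val{c}}}]$, and by the invariant this head element was enqueued by some earlier \grsnd{} step $\sigma_{j-1} \csred[t'] \sigma_j$ with $j \leq i-1$ whose label is of the required form $\actionSend{j}{i}{\gtLbl{\ell}, \tuple{\processVariable{x}, \val{c}}}{r'}$ for some $\refinement{r'}$ (the witnessing refinement is whatever $\refinement{r'}$ guarded the send).

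First, I would make the invariant precise: to each reachable configuration $\sigma_k$ I associate, for every ordered pair $(\participant{p}, \participant{q})$, the subsequence of $\operatorname{trace}(\sigma_0;\dots;\sigma_k)$ consisting of the \grsnd{} actions from $\participant{p}$ to $\participant{q}$ whose corresponding \grrec{} has not yet occurred (i.e.\ those sends not matched in FIFO order by a later receive). Then I would prove by induction on $k$ that this list, projected on message payloads, equals $\lookupQueue{\queueQ_k}{p}{q}$. The base case $k = 0$ holds because $\sigma_0$ is initial, so all FIFOs are $\emptyfifo$ (\cref{def:init_RGS}) and the subsequence is empty. For the inductive step I would case-split on the rule applied in $\sigma_{k-1} \csred \sigma_k$: a \grsnd{} step on channel $(\participant{p'},\participant{q'})$ updates the queue via $\pushQueue{\cdot}{\msgM}{p'}{q'}$ and simultaneously adds the corresponding unmatched send to the end of the subsequence for $(\participant{p'},\participant{q'})$, leaving all other channels unchanged; a \grrec{} step on $(\participant{p'},\participant{q'})$ pops the head of the queue and matches (in FIFO order) the earliest unmatched send on the same channel. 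Both cases preserve the invariant.

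The second part is to use the invariant to conclude. Applied to $k = i-1$, the invariant says that the head $\msgM[\tuple{\gtLbl{\ell}, \tuple{\processVariable{x}, \val{c}}}]$ of $\lookupQueue{\queueQ_{i-1}}{j}{i}$ is the payload of the earliest unmatched \grsnd{} from $\participant{j}$ to $\participant{i}$ in $\operatorname{trace}(\sigma_0;\dots;\sigma_{i-1})$. In particular, such a send exists: it is of the form $\actionSend{j}{i}{\gtLbl{\ell}, \tuple{\processVariable{x}, \val{c}}}{r'}$ fired at some index $j_0 \leq i-1$, which gives the required witness (renaming the index in the statement, this is the $j$ of the lemma, and $\refinement{r'}$ is obtained from the label of that \grsnd{} transition).

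The main obstacle is purely bookkeeping: the lemma is stated with overloaded notation where the index $i$ in $s_\participant{i}$ refers to a participant, while the $i$ in $\sigma_i$ refers to a step index (and the $j$ on the right-hand side of $\csred$ is yet another index). So the actual work in the proof is less mathematical than notational: I would rename the step indices (say using $k$ for the step counter and keeping $\participant{p},\participant{q}$ for participants) in the formal write-up to keep the two roles of indices disjoint, and then the invariant argument above goes through routinely. No subtlety arises from refinements themselves, since the queue contents and FIFO discipline are orthogonal to whether $\updateMap{M}{x}{c} \models \refinement{r}$ holds at the firing time.
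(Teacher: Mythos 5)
Your proposal is correct and takes essentially the same approach as the paper: the premise of \grrec{} forces the message to be in $\lookupQueue{w}{j}{i}$, and the only rule that enqueues a message on that FIFO is \grsnd{}, which yields the required earlier transition. The paper's own proof states exactly this in two lines without spelling out the induction; your queue invariant (FIFO contents equal the in-order unmatched sends) is just the rigorous elaboration of the step the paper leaves implicit.
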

\begin{proof}
	Let $\queue{w^\prime}$ be the queues of $\sigma^\prime$.
	From the premises of \grrec{}, $\msgM=\msgM[\tuple{\gtLbl{\ell},
		\tuple{\processVariable{x}, \val{c}}}]\in\lookupQueue{w}{j}{i}$.
	The only rule that enqueues $\msgM$ in $\lookupQueue{w}{j}{i}$ is \grsnd{},
	with
	$\redact{s_\participant{i}}{s_{\participant{i}}^\prime}{\actionSend{j}{i}{\gtLbl{\ell},
			\tuple{\processVariable{x}, \val{c}}}{r^\prime}}$
	(for some $\refinement{r^\prime}$)
\end{proof}

\tracesGRAvalidrefined*
\begin{proof}
	From \cref{def:valid_refined_trace}, we have to show that
	\begin{inlineenum}
	\item \(\trace\) is well-queued with regards to the empty queue \(\queueQ[\emptyset]\)
	\item \(\trace\) is well-predicated by the empty variable context \(\varsV[\emptyset]\).
	\end{inlineenum}
	We show the two points separately.

	\paragraph*{Well-Queued}
	By case analysis on the length of \(\trace\).
	\begin{description}
		\item [Case length of \(\trace\) is 0 (\(\trace = \tempty\)):]
			since the trace is empty, initial and final, the
			corresponding run of the automaton is composed
			of a single state \(\sigma = \tuple{\tuple{s_\participant{i}}_{\participant{i}\in I}, \queueQ, \map{m}}\), which is therefore both
			initial and final.

			From \cref{def:init_RGS} (Initial Refined Global
			State), the queues \(\queueQ\) of the state are empty
			(\(\queueQ = \emptyqueue\)).

			Therefore, from \cref{def:well_queued_trace},
			\(\trace\) is well-queued with regards to
			\(\emptyqueue\).

		\item [Case length of \(\trace\) is 1 (\(\trace = \act\)):]
			We prove this case leads to a contradiction, and therefore cannot
			happen.

			From the definition of \(\delta\),
			the label of the transition is either
			\(\actionReceive{p}{q}{\gtLbl{\ell}, \tuple{\processVariable{x},
			\val{c}}}{r}\) or
			\(\actionSend{p}{q}{\gtLbl{\ell}, \tuple{\processVariable{x},
			\val{c}}}{r}\).

			Therefore, for any \(\sigma_1\), \(\sigma_2\) such that \(\operatorname{trace}(\sigma_1;\sigma_2) = \act\),
			we show that it is not simulatenously possible for \(\sigma_1\) to be initial
			and for \(\sigma_2\) to be final.
			Let \(\queueQ[\sigma_1]\) (resp. \(\queueQ[\sigma_2]\)) be
			the queue of \(\sigma_1\) (resp. \(\sigma_2\)).

			If the label is
			\(\actionReceive{p}{q}{\gtLbl{\ell}, \tuple{\processVariable{x},
			\val{c}}}{r}\),
			then from \cref{def:trace_RGA,def:trace_red_step},
			\(\sigma_1\csred\sigma_2\) with a \grrec{} transition.
			From the premise of \grrec{},
			\(\nextElemQueue{\queueQ[\sigma_1]}{p}{q} = \msgM[\tuple{\gtLbl{\ell}, \tuple{\processVariable{x}, \val{c}}}]\),
			i.e. \(\lookupQueue{\queueQ[\sigma_1]}{p}{q}\) is not \(\emptyfifo\),
			i.e. \(\queueQ[\sigma_1] \neq \emptyqueue\), therefore \(\sigma_1\) is not initial.

			Similarly, if the label is
			\(\actionSend{p}{q}{\gtLbl{\ell}, \tuple{\processVariable{x},
			\val{c}}}{r}\),
			then from \cref{def:trace_RGA,def:trace_red_step},
			\(\sigma_1\csred\sigma_2\) with a \grsnd{} transition.
			From the conclusion of \grsnd{},
			\(\queueQ[\sigma_2] = \pushQueue{\queueQ[\sigma_1]}{\msgM[\tuple{\gtLbl{\ell}, \tuple{\processVariable{x}, \val{c}}}]}{p}{q}\).
			Therefore, from~\cref{def:queue}, \(\queueQ[\sigma_2] \neq\emptyqueue\),
			therefore \(\sigma_2\) is not final.

			Therefore, if \(\trace\) contains a single element,
			then \(\trace\) is not initial and final, which contradicts the hypothesis.

		\item [Case length of \(\trace\) is greater than \(1\) (\(\trace = \tconcat{\tau_1}{\tau_2}\) for non-empty \(\traceRaw{\tau_1}\) and \(\traceRaw{\tau_2}\)):]
			first, let's notice that \(\trace\) is initial and final; therefore the initial state \(\sigma_i\)
			and the final state \(\sigma_f\) of \(\trace\) both have \(\emptyqueue\) as their queues
			(from \cref{def:init_RGS}). Therefore, we only have to show that \(\trace\) ends
			up with \(\queueQ[f]\) (the queue of \(\sigma_f\)) with respect to \(\queueQ[i]\) (the queue of \(\sigma_i\)).

			By contradiction, suppose \(\trace\) does not end up in \(\queueQ[f]\) w.r.t \(\queueQ[i]\).

			Let \(\sigma_t\) be the final state of \(\trace[\tau_1]\), starting from \(\sigma_i\).
			Let \(\queueQ[t]\) be the queue of \(\sigma_t\).

			From the contraposition of \cref{lem:concat_WQ_traces}, either
			\begin{inlineenum}[or]
			\item \(\trace[\tau_1]\) does not end up with queue \(\queueQ[t]\) w.r.t. \(\queueQ[i]\)
			\item \(\trace[\tau_2]\) does not end up with queue \(\queueQ[f]\) w.r.t. \(\queueQ[t]\).
			\end{inlineenum}

			In either case, we can recursively apply the contraposition of \cref{lem:concat_WQ_traces}
			until we have a trace composed of a single transition which trace is \(\act[\alpha_c]\) that does not end up in \(\queueQ[c]^\prime\) w.r.t. \(\queueQ[c]\).

			By case analysis of \(\act[\alpha_c]\):
			\begin{description}
				\item[Case \({\act[\alpha_c]} =
				\actionSend{\roleP}{\roleQ}{\msgM}{r}\) :]
					From \cref{rmk:trace_invertible}, we deduce that
					\(\act[\alpha_c]\) is a \grsnd{} transition
					\(\redact{s_\participant{p}}{s_\participant{p}^\prime}{\actionSend{\roleP}{\roleQ}{\msgM}{r}}\).

					From the definition of \grsnd{}, we have that
					\(\queueQ[c]^\prime = \pushQueue{\queueQ[c]}{\msgM}{p}{q}\).
				\item[Case \({\act[\alpha_c]} =
				\actionReceive{\roleP}{\roleQ}{\msgM}{r}\) :]
					From \cref{rmk:trace_invertible}, we deduce that \(\act[\alpha_c]\) is a
					\grrec{} transition with
					\(\redact{s_\participant{q}}{s_\participant{q}^\prime}{\actionReceive{\roleP}{\roleQ}{\msgM}{r}}\).

					From the definition of \grsnd{}, we have that
					\(\queueQ[c]^\prime = \popQueue{\queueQ[c]}{p}{q}\) and
					\(\nextElemQueue{\queueQ[c]}{p}{q} = \msgM\).
			\end{description}
			In both cases, \(\act[\alpha_c]\) ends up in \(\queueQ[c]^\prime\) w.r.t. \(\queueQ[c]\).
			Contradiction.
	\end{description}
	\paragraph*{Well-Predicated}
	By induction on the length of \(\trace\).
	\begin{description}
		\item [Case \(\trace\) is \(\tempty\):] the result trivially hold from \cref{def:WPT:i} in \cref{def:WPT}.
		\item [Case \(\trace\) is \(\tconcat{\tau^\prime}{\act[\alpha_n]}\):]
			the induction hypothesis is that \(\trace[\tau^\prime]\) is
			well-predicated by the empty map $\emptyMap$.

			Suppose that \(\trace[\tau^\prime]\) ends up with map \(\map{M_{\trace[\tau^\prime]}}\).
			By case analysis of the action \(\act[\alpha_n]\):
			\begin{description}
				\item [Case \({\act[\alpha_n]}\) is
				\(\actionReceive{p}{q}{\gtLbl{\ell},
				\tuple{\processVariable{x}, \val{c}}}{r}\) :]
					from \cref{rmk:trace_invertible}, this corresponds to a
					\grrec{} transition, which corresponds to a reduction:
					\[\tuple{\tuple{s_\participant{1}, \dots, s_\participant{p}, \dots, s_\participant{n}}, \queueQ, \map{M_{\trace[\tau^\prime]}}} \csred \tuple{\tuple{s_\participant{1}, \dots, s_\participant{p}^\prime, \dots, s_\participant{n}}, \popQueue{w}{q}{p}, \updateMap{M_{\trace[\tau^\prime]}}{x}{c}}\]
					with
					\(\redact{s_\participant{p}}{s_\participant{p}^\prime}{\actionReceive{p}{q}{\gtLbl{\ell},
					 \tuple{\processVariable{x}, \val{c}}}{r}}\)
					and
					\(\updateMap{M_{\trace[\tau^\prime]}}{x}{c}\models
					\refinement{r}\) (from the premises of \grrec{}).

					Since \(\trace[\tau^\prime]\) is a well-predicated trace, that ends up with \(\map{M_{\trace[\tau^\prime]}}\),
					from \cref{lem:concat_WP_traces} and the induction hypothesis,
					we simply have to show that \(\act[\alpha_n]\) is well-predicated under \(\map{M_{\trace[\tau^\prime]}}\).

					This holds directly from that
					\(\updateMap{M_{\trace[\tau^\prime]}}{x}{c}\models
					\refinement{r}\) (from above) and
					\cref{def:WPT:ii} in \cref{def:WPT}.
				\item [Case \({\act[\alpha_n]}\) is
				\(\actionSend{p}{q}{\gtLbl{\ell}, \tuple{\processVariable{x},
				\val{c}}}{r}\) :]
					from \cref{rmk:trace_invertible}, this corresponds to a
					\grsnd{} transition, which corresponds to a reduction:
					\[\tuple{\tuple{s_\participant{1}, \dots, s_\participant{p}, \dots, s_\participant{n}}, \queueQ, \map{M_{\trace[\tau^\prime]}}} \csred \tuple{\tuple{s_\participant{1}, \dots, s_\participant{p}^\prime, \dots, s_\participant{n}}, \pushQueue{w}{\msgM[\tuple{\gtLbl{\ell}, \tuple{\processVariable{x}, \val{c}}}]}{p}{q}, \updateMap{M_{\trace[\tau^\prime]}}{x}{c}}\]
					with
					\(\redact{s_\participant{p}}{s_\participant{p}^\prime}{\actionSend{p}{q}{\gtLbl{\ell},
					 \tuple{\processVariable{x}, \val{c}}}{r}}\)
					and
					\(\updateMap{M_{\trace[\tau^\prime]}}{x}{c}\models
					\refinement{r}\) (from the premises of \grsnd{}).

					Since \(\trace[\tau^\prime]\) is a well-predicated trace, that ends up with \(\map{M_{\trace[\tau^\prime]}}\),
					from \cref{lem:concat_WP_traces} and the induction hypothesis,
					we simply have to show that \(\act[\alpha_n]\) is well-predicated under \(\map{M_{\trace[\tau^\prime]}}\).

					This holds directly from that
					\(\updateMap{M_{\trace[\tau^\prime]}}{x}{c}\models
					\refinement{r}\) (from above) and
					\cref{def:WPT:ii} in \cref{def:WPT}.
					\qedhere
			\end{description}
	\end{description}
\end{proof}

\section{Decentralised Verification}

\subsection{Decentralised run of an RCS}

\begin{example}[Run of a Decentralised Configuration]
	We present a run of a decentralised configuration, which corresponds the
	run of the (centralised) configuration in \cref{ex:run_trace_CS}. Compared
	to that example, notice that there is no single global map, but instead
	each local state is associated with a local map (with $\queue{w_1}$,
	$\queue{w_2}$, $\queue{w_3}$ and $\map{M}$ as in
	\cref{ex:run_trace_CS}).\\[1mm]
	$
	\begin{array}{l}
		\small
		\tuple{\tuple{\participant{A_1},\emptyMap}, \tuple{\participant{B_1},
				\emptyMap}, \tuple{\participant{C_1}, \emptyMap}, \emptyqueue};
		\tuple{\tuple{\participant{A_2},\emptyMap}, \tuple{\participant{B_1},
				\emptyMap}, \tuple{\participant{C_1}, \emptyMap}, \queue{w_1}};
		\\
		\tuple{\tuple{\participant{A_2},\emptyMap}, \tuple{\participant{B_2},
				\map{\{\tuple{\processVariable{n}, \val{5}}\}}},
				\tuple{\participant{C_1},
				\emptyMap}, \emptyqueue};
		\tuple{\tuple{\participant{A_2},\emptyMap}, \tuple{\participant{B_2},
				\map{\{\tuple{\processVariable{n}, \val{5}}\}}},
				\tuple{\participant{C_2},
				\emptyMap}, \queue{w_2}};\\
		\tuple{\tuple{\participant{A_2},\emptyMap}, \tuple{\participant{B_3},
		\map{M}},
			\tuple{\participant{C_2}, \emptyMap}, \emptyqueue};
		\tuple{\tuple{\participant{A_2},\emptyMap}, \tuple{\participant{B_4},
		\map{M}},
			\tuple{\participant{C_2}, \emptyMap},
			\queue{w_3}};\\
		\tuple{\tuple{\participant{A_2},\emptyMap}, \tuple{\participant{B_4},
		\map{M}},
			\tuple{\participant{C_3}, \emptyMap}, \emptyqueue};
	\end{array}
	$
\end{example}

\subsection{Simulation\label{sec:app:simulation}}

\begin{definition}[Simulation \citet{SangiorgiIntroduction2012}]
	Given two labelled transition systems with states taken from \(S_1\) and
	\(S_2\), a relation \(\mathcal{R} \subseteq S_1 \times S_2\) is a
	simulation if:
	\(\forall s_1, s_1^\prime \in S_1\), if \(s_1 \rightarrow s_1^\prime\),
	then \(\forall s_2 \in S_2\) such that
	\(\tuple{s_1, s_2} \in \mathcal{R}\), there exists \(s_2^\prime\) such that
	\(s_2 \rightarrow s_2^\prime\) and \(\tuple{s_1^\prime, s_2^\prime} \in
	\mathcal{R}\).
\end{definition}

\begin{remark}[Bisimulation]
	A relation $\mathcal{R}$ is a \emph{bisimulation} if
	$\mathcal{R}$ and
	$\mathcal{R}^{-1}$ are simulations.
\end{remark}

\subsection{Main theorem\label{sec:app:centralised_simulated_decentralised}}
\centralisedSimulatesDecentralised
\begin{proof}
	The proof is quite straightforward, we match each
	\drec{} by \grrec{} and each \dsnd{}
	by \grsnd{}. We can trivially note that the
	queue of messages is the same in both
	\(\DConfOfType{G}\) and \(\ConfOfType{G}\).
	Similarly, each \(s_\participant{i}\) match in both systems,
	and remain matched after taking transitions.

	Finally, the map of \(\ConfOfType{G}\) is always a
	superset of the union of all \(\map{M}_\participant{i}\) in
	\(\DConfOfType{G}\).

	We show the proof for \dsnd{} and \grsnd{}. The proof for
	\drec{} and \grrec{} is similar.

	Suppose that we have \(D =
	\tuple{\tuple{\dots, \tuple{s_\participant{i}, \map{M}_\participant{i}}
			\dots}, \queueQ}\)
	reducing to
	\(D^\prime =
	\tuple{\tuple{\dots, \tuple{s_\participant{i}^\prime,
				\map{M_\participant{i}^\prime}} \dots}, \queue{w^\prime}}\)
	with \dsnd{} and let \(\participant{j}\) be the destination
	of the message.
	Suppose \(S =
	\tuple{\tuple{\dots, s_\participant{i}, \dots}, \queueQ,
		\map{M}}\), such that \(\map{M} \subseteq
	\biguplus_{\participant{i}\in\rolesOfGT{G}}
	\map{M_\participant{i}}\).

	Notice that the premises of \dsnd{} entails the
	premises of \grsnd{}. Therefore, \grsnd{} can be
	triggered: let
	\(S^\prime = \tuple{\tuple{\dots, s_\participant{i}^\prime, \dots},
		\queue{w^\prime}, \map{M^\prime}}\) be the resulting configuration.
	To prove the simulation, we just have to show that
	the resulting \(s_\participant{i}^\prime\) of both configurations are
	the same, which is direct from the rule; that the queues
	\(\queue{w^\prime}\) of both configurations
	are the same, which is also direct from the rule; and that
	\(\map{M^\prime}\) is indeed a subset of
	\(\biguplus_{\participant{i}\in\rolesOfGT{G}}\map{M_\participant{i}^\prime}\).

	Let \(\processVariable{x}\) be the variable sent and \(\val{c}\)
	its associated value. By definition
	\(\map{M_\participant{i}^\prime} =
	\removeMap{\map{M_\participant{i}}}{x}\), and all other local
	maps are unchanged. Also, by hypothesis (no duplication
	hypothesis), no other map contains
	\(\processVariable{x}\). Therefore, \(\map{M^\prime}
	= \updateMap{M}{x}{c}
	\varsupsetneq \removeMap{\map{M}}{x}
	\supseteq
	\removeMap{\color{black}(\biguplus_{\participant{i}\in\rolesOfGT{G}}\map{M_\participant{i}})}{x}
	= \biguplus_{\participant{i}\in\rolesOfGT{G}}\map{M_\participant{i}^\prime}
	\).
	Therefore, \(\ConfOfType{G}\) simulates \(\DConfOfType{G}\).
\end{proof}

\subsection{Static Verification of the Two Conditions\label{sec:static_verif_conditions}}

In the previous section, we introduced two conditions for the
refined configurations to simulate decentralised configurations
(\cref{def:cond_dyn_verif}). We now aim to statically verify whether those two
conditions hold for a given type. The algorithm we present keeps track of
variable moves and tries to find which participant has a copy of which variable
at any point in the execution of the protocol, which we call \emph{variable
localisation}.

Intuitively, from a global type \(\gtFmt{G}\), it is possible to infer
constraints on variable localisation: for instance, given the global type
\(\gtFmt{G} = \gtComm{A}{B}{\ell}[x]{int}[x \geq 0]{\gtEnd}\), we can infer that
participant \(\participant{A}\) has \(\processVariable{x}\) before the
communication, and that \(\processVariable{x}\) is transfered to
\(\participant{B}\) during the communication, therefore \(\processVariable{x}\)
is at \(\participant{B}\) after the communication. The algorithm we propose
generates such constraints to find variables localisations at any time.
We represent RMPST as graphs (\cref{sec:type_graph}), and the algorithm finds an assignation of
variables that is consistent with the actions of the type. Finding such an
assignation is trivially solved with logic programming
(\cref{sec:localisation_algo}). The only non-trivial part is to carefully
pinpoint where variables are first used in loops. We illustrate this problem and
solve it by unrolling loops in the graph of the type, which we present in
\cref{sec:unroll_mpst}.

\subsubsection{Type Graphs\label{sec:type_graph}}

First, we need to create the \emph{graph of a global type}. This is analogous to
the CFSM of a local type, but for global types instead.

\begin{definition}[Graph of a Global Type\label{def:global_type_graph}]
	Given a global type \(\gtFmt{T_0}\), the \emph{graph} \(\GG{T_0}\) of that type is
	the labeled graph \(\tuple{V, E}\), where
	\(V = \{\gtFmt{T^\prime} | \gtFmt{T^\prime}\in \gtFmt{T_0}\wedge \gtFmt{T^\prime}\neq\gtRecVar{t} \wedge \gtFmt{T^\prime}\neq\gtRec{t}{T_\mu}\}\),
	\(E = \delta_g\),
	and where \(\delta_g\) is defined as the smallest relation such that%
	\footnote{Notice that \(\delta_g\) is almost like \(\delta\), but we erase the value in the messages' payloads},
	for all \(\gtFmt{T}\in V\), with \(\gtFmt{T} =
	\gtComm[i][I]{p}{q}{\ell}[x]{S}[r]{T}\):
	\begin{inlineenum}
		\item if \(\operatorname{strip}(\gSessionType{T_i})\) is a
			communication or a termination,
			then, for all \(i\in I\),
			\(\tuple{\gSessionType{T}, \tuple{\participant{p}, \participant{q},
			\processVariable{x_i}, \fv{r_i}},
			\operatorname{strip}(\gSessionType{T_i}})\in\delta_g\)
		\item if \(\operatorname{strip}(\gSessionType{T_i}) = \gtRecVar{t}\) with
			\(\gtRec{t}{T^\prime}\in \gSessionType{T_0}\),
			then \(\tuple{\gSessionType{T}, \tuple{\participant{p},
			\participant{q}, \processVariable{x_i}, \fv{r_i}},
			\operatorname{strip}(\gSessionType{T^\prime})}\in\delta_g\)
	\end{inlineenum}
\end{definition}
Where \(\operatorname{strip}\) is defined on global types as on local types.

\begin{example}[Graph of a global type]
	The type graph of \(\gtFmt{G_\pm}\) is shown below. Notice that the two
	transitions with \(\gtLbl{more}\) and \(\gtLbl{less}\)
	result in a single edge in the graph, as we erase the label, and both
	lead to the same state.
	\begin{center}
\begin{tikzpicture}[node distance=4.2cm]
	\draw node[draw, rounded rectangle] (G1) {};
	\draw node[draw, rounded rectangle, right=3.5cm of G1] (G2) {};
	\draw node[draw, rounded rectangle, right=3.5cm of G2] (G3) {};
	\draw node[draw, rounded rectangle, right=3.9cm of G3] (G4) {};
	\coordinate [above=.5cm of G1] (start);
	\draw[-{Stealth[scale=1.5]}] (start) -- (G1);

	\draw[-{Stealth[scale=1.5]}] (G1) to node[anchor=south, pos=.5, sloped] {\(\tuple{\participant{A}, \participant{B}, \processVariable{n}, \refinement{\emptyset}}\)} (G2);
	\draw[-{Stealth[scale=1.5]}] (G2) to node[anchor=south, pos=.5, sloped] {\(\tuple{\participant{C}, \participant{B}, \processVariable{x}, \refinement{\emptyset}}\)} (G3);

	\coordinate (midG23) at ($ (G3)!.5!(G2) $);
	\coordinate [above=.7cm of midG23] (midG23up);
	\coordinate [below=.7cm of midG23] (midG23down);

	\draw[-{Stealth[scale=1.5]}]
		(G3) |- (midG23up) node[anchor=south] {\(\tuple{\participant{B}, \participant{C}, \processVariable{\_}, \refinement{\{x, n\}}}\)} -|  (G2);
	\draw[-{Stealth[scale=1.5]}] (G3) to node[anchor=south, pos=.5, sloped] {\(\tuple{\participant{B}, \participant{C}, \processVariable{\_}, \refinement{\{x, n\}}}\)} (G4);
\end{tikzpicture}
\qedhere
	\end{center}
\end{example}

Among the vertices of the graph type \(\GG{T_0}\) of a global type
\(\gSessionType{T_0}\), we distinguish the \emph{initial} node, which is the
vertice labeled \(\gSessionType{T_0}\). In the graphical representation, this
state is shown with an arrow.

\subsubsection{Localising Algorithm\label{sec:localisation_algo}}

We now present the \emph{localisation algorithm}. The goal of this algorithm is
to find whether there exists a state, in the execution of the protocol specified
by a type \(\gtFmt{G}\), where a variable is duplicated, and whether there
exists an action which refinement requires a variable that is not locally
available.

Our algorithm infers the location of each variable (i.e. which participant has
access to which variables), at each step of the protocol. We infer such location information from the
actions of the protocol. This algorithm is expressed as a set of inference
rules, which can be directly encoded in a logic programming language as DataLog.

The input of our algorithm is given by two provided atoms that are extracted from the global type graph (we show provided atoms in green and computed atoms in orange):
\begin{itemize}
	\item \PLAtomGiven{Send}{s_1, \participant{p}, \processVariable{x}, \participant{q}, s_2}
		holds when the graph contains an edge from \(s_1\) to \(s_2\)
		with label \(\) \(\tuple{\participant{p}, \participant{q}, \processVariable{x}, \_}\);
	\item \PLAtomGiven{FVRefinement}{s_1, \processVariable{x}, s_2}
		holds when there is an edge from \(s_1\) to \(s_2\) with a label
		\(\tuple{\_, \_, \_, \fv{r}}\) and \(\fv{r}\) contains
		\(\processVariable{x}\).
\end{itemize}

The core part of our algorithm is the atom \PLAtom{In}{s, \participant{p}, \processVariable{x}},
which holds if \(\participant{p}\) has access to variable
\(\processVariable{x}\) in state \(s\) (or, equivalently, if
\(\processVariable{x}\) is at \(\participant{p}\) in \(s\)).

The first rule simply capture the fact that, if a participant
\participant{p} sends a variable \(\varX\) to \participant{q}, then \(\varX\) is
located at \participant{q} after the exchange (notice that \(\varX\) is
\emph{not} necessarily available at \(\participant{p}\) before the exchange, as
it might be the first time it appears):
\PLAtom{In}{s_2, \participant{q}, \processVariable{x}} \(\rightarrow\) \PLAtomGiven{Send}{\_, \_, \processVariable{x}, \participant{q}, s_2}.

We then define two rules that infer which variables are available at which
location after a communication. When
\(\participant{p}\) sends \(\processVariable{x}\), all
\(\processVariable{y}\neq\processVariable{x}\)
are kept at \(\participant{p}\). For the other participants
\(\participant{r}\neq\participant{p}\), all variables are
preserved.

\begin{minipage}{.5\textwidth}
\begin{prolog}{\textwidth}
	\PLRule{\PLAtom{In}{s_2, \participant{p}, \processVariable{x}}}{\PLAtom{In}{s_1, \participant{p}, \processVariable{x}}\PLAnd
	\PLAtomGiven{Send}{s_1, \participant{p}, \processVariable{y}, \_, s_2}\PLAnd
	(\(\processVariable{x}\neq \processVariable{y}\))}
\end{prolog}
\end{minipage}%
\begin{minipage}{.5\textwidth}
\begin{prolog}{\textwidth}
	\PLRule{\PLAtom{In}{s_2, \participant{r}, \processVariable{x}}}{\PLAtom{In}{s_1, \participant{r},\processVariable{x}}\PLAnd
	\PLAtom{Send}{s_1, \participant{p}, \_, \_, s_2}\PLAnd
	\((\participant{p}\neq\participant{r})\)}
\end{prolog}
\end{minipage}%

Once variables are localised, we can check whether the two conditions hold.
First, the atom \PLAtomName{NotVerifFV} checks whether there is a state
\(s\) in which \(\participant{p}\) sends a message with a refinement that
contains a free variable \(\processVariable{x}\) which \(\participant{p}\)
cannot access (that is a variable \(\processVariable{x}\) that is neither
located at \(\participant{p}\) nor in the message sent). If that happens, the
system is not verifiable, as \(\participant{p}\) will not be able to check its
refinement when taking the transition. Second, the atom
\PLAtomName{NotVerifDupl} checks whether a variable is duplicated. It simply
records whether there is a state \(s\) in which two distinct participants
\(\participant{p}\) and \(\participant{q}\) can access the same variable
\(\processVariable{x}\).

\begin{minipage}{.5\textwidth}
\begin{prolog}{\textwidth}
	\PLRule{\PLAtom{NotVerifFV}{s, s_2, \processVariable{x},\participant{p}}}{\PLAtomGiven{FVRefinement}{s, \processVariable{x}, s_2}\PLAnd
	\PLAtomGiven{Send}{s, \participant{p}, \processVariable{y}, \_, \_}\PLAnd
	!\PLAtom{In}{s, \participant{p}, \processVariable{x}}\PLAnd
	\((\processVariable{x}\neq\processVariable{y})\)}
\end{prolog}
\end{minipage}%
\begin{minipage}{.5\textwidth}
\begin{prolog}{\textwidth}
	\PLRule{\PLAtom{NotVerifDupl}{s, \processVariable{x}, \participant{p}, \participant{q}}}{
		\PLAtom{In}{s, \participant{p}, \processVariable{x}}\PLAnd
		\PLAtom{In}{s, \participant{q}, \processVariable{x}}\PLAnd
		(\(\participant{p}\neq \participant{q}\))}
\end{prolog}
\end{minipage}

Finally, the two conditions hold if there is no \PLAtomName{NotVerifFV} nor
\PLAtomName{NotVerifDupl}.

\subsubsection{Recursion Unrolling\label{sec:unroll_mpst}}
The localisation algorithm does not consider the case where a variable is first sent
in a loop. Consider the type:
\[\gtFmt{\gtRec{T}{\gtCommRaw{A}{B}{
		\begin{aligned}
			&\commChoice{\ell_1}[x]{int}[x = y]{\gtComm{B}{A}{\ell_3}[x]{int}{\gtRecVar{T}}}\\
			&\commChoice{\ell_2}[y]{int}[x \neq y]{\gtComm{B}{A}{\ell_4}[y]{int}{\gtRecVar{T}}}\\
		\end{aligned}
}}}\]

This type cannot be verified, as if \participant{A} chooses the branch
\(\gtLbl{\ell_1}\) on the first iteration of the loop, then
\(\processVariable{y}\) is not defined, and similarly for the branch
\(\gtLbl{\ell_2}\). However, our localisation algorithm first computes the
location of variables (\(\PLAtomName{In}\)), which results in the following
graph and variable locations:
\begin{center}
\begin{tikzpicture}[node distance=4.2cm]
	\draw node[draw, rounded rectangle] (G1) {\(\processVariable{x}@\participant{A}\), \(\processVariable{y}@\participant{A}\)};
	\draw node[draw, rounded rectangle, right=3.5cm of G1] (G2) {\(\processVariable{x}@\participant{B}\), \(\processVariable{y}@\participant{A}\)};
	\draw node[draw, rounded rectangle, left=3.5cm of G1] (G3) {\(\processVariable{x}@\participant{A}\), \(\processVariable{y}@\participant{B}\)};
	\coordinate [above=.5cm of G1] (start);
	\draw[-{Stealth[scale=1.5]}] (start) -- (G1);

	\draw[-{Stealth[scale=1.5]}] (G1) to[bend left=10] node[anchor=south, pos=.5, sloped]
	{\(\tuple{\participant{A}, \participant{B}, \processVariable{x}, \refinement{\{x, y\}}}\)} (G2);
	\draw[-{Stealth[scale=1.5]}]
		(G2) to[bend left=10] node[anchor=north] {\(\tuple{\participant{B}, \participant{A}, \processVariable{\_}, \refinement{\emptyset}}\)} (G1);

	\draw[-{Stealth[scale=1.5]}] (G1) to[bend right=10] node[anchor=south, pos=.5, sloped]
	{\(\tuple{\participant{A}, \participant{B}, \processVariable{y}, \refinement{\{x, y\}}}\)} (G3);
	\draw[-{Stealth[scale=1.5]}]
		(G3) to[bend right=10] node[anchor=north] {\(\tuple{\participant{B}, \participant{A}, \processVariable{\_}, \refinement{\emptyset}}\)} (G1);
\end{tikzpicture}
\end{center}
Therefore, we have to distinguish the first time a branch is taken from the
following iterations. We therefore introduce our \emph{unrolling} algorithm,
which unrolls each loop once to distinguish the first iteration from the
following ones. This task is non trivial, as it is not a simple syntactical
replacement of a recursion variable by its definition since we have to take
into account all branches. Specifically, in the example above, \participant{A} can possibly
take the branch \(\gtLbl{\ell_1}\) multiple times before choosing
\(\gtLbl{\ell_2}\): we have to take all declarations orders into account.

For the sake of simplicity, consider the type \(\gtFmt{G^\prime}\) (which is not
localisable, but greatly reduces the size of the graph and is sufficient to
explain our algorithm):
\(\gtRec{T}{\gtCommRaw{A}{B}{
		\commChoice{\ell_1}[x]{int}{\gtRecVar{T}};
		\commChoice{\ell_2}[y]{int}{\gtRecVar{T}}
}}\).
The different steps of the execution of our algorithm are shown
in \cref{fig:graph_G_prime}.

We first distinguish forward and backward edges in the type graph: a backward
arrow is an arrow that contains a recursion, i.e.\ it ends in a previous
state or in the same state. We initially mark all those backward edges as
unvisited. In \cref{fig:graph_G_prime}, those unvisited backward arrows are shown
in red. We then unroll those unvisited backward edges one at a time.
For each unvisited backward edge \(E\) from \(N\) to \(N^\prime\), we copy the
graph, we mark \(E\) as visited in the copied graph and we
replace \(E\) by a forward visited edge from \(N\) in the original graph to
\(N^\prime\) in the copied graph.

\begin{figure}
\begin{subfigure}{.45\textwidth}
\begin{center}
\begin{tikzpicture}[]
		\draw node[draw, rounded rectangle] (G1) {};

		\draw[-{Stealth[scale=1.5]}, color=red] (G1) to[loop left,] node[anchor=east, pos=.5] {\(\gtLbl{\ell_1}\)} (G1);
		\draw[-{Stealth[scale=1.5]}, color=red] (G1) to[loop right,] node[anchor=west, pos=.5] {\(\gtLbl{\ell_2}\)} (G1);

\end{tikzpicture}
\end{center}
	\caption{The initial graph}
\end{subfigure}
\begin{subfigure}{.45\textwidth}
\begin{center}
\begin{tikzpicture}[]
		\draw node[draw, rounded rectangle] (G2) {};
		\draw node[draw, rounded rectangle, left=of G2] (G3) {};

		\draw[-{Stealth[scale=1.5]}, ] (G2) to[bend right] node[anchor=south, pos=.5] {\(\gtLbl{\ell_1}\)} (G3);
		\draw[-{Stealth[scale=1.5]}, color=red] (G2) to[loop right,] node[anchor=west, pos=.5] {\(\gtLbl{\ell_2}\)} (G2);
		\draw[-{Stealth[scale=1.5]}, ] (G3) to[loop left,] node[anchor=east, pos=.5] {\(\gtLbl{\ell_1}\)} (G3);
		\draw[-{Stealth[scale=1.5]}, color=red] (G3) to[loop right,] node[anchor=west, pos=.5] {\(\gtLbl{\ell_2}\)} (G3);
\end{tikzpicture}
\end{center}
	\caption{After expanding action \(\gtLbl{\ell_1}\)}
\end{subfigure}
\begin{subfigure}{.45\textwidth}
\begin{center}
\begin{tikzpicture}[]
		\draw node[draw, rounded rectangle] (G4) {};
		\draw node[draw, rounded rectangle, left=of G4] (G5) {};
		\draw node[draw, rounded rectangle, below =of G5] (G6) {};

		\draw[-{Stealth[scale=1.5]}, ] (G4) to[] node[anchor=south, pos=.5] {\(\gtLbl{\ell_1}\)} (G5);
		\draw[-{Stealth[scale=1.5]}, color=red] (G4) to[loop right,] node[anchor=west, pos=.5] {\(\gtLbl{\ell_2}\)} (G4);
		\draw[-{Stealth[scale=1.5]}, ] (G5) to[loop left,] node[anchor=east, pos=.5] {\(\gtLbl{\ell_1}\)} (G5);
		\draw[-{Stealth[scale=1.5]}, ] (G5) to[] node[anchor=east, pos=.5] {\(\gtLbl{\ell_2}\)} (G6);
		\draw[-{Stealth[scale=1.5]}, ] (G6) to[loop left,] node[anchor=east, pos=.5] {\(\gtLbl{\ell_1}\)} (G6);
		\draw[-{Stealth[scale=1.5]}, ] (G6) to[loop right,] node[anchor=west, pos=.5] {\(\gtLbl{\ell_2}\)} (G6);
\end{tikzpicture}
\end{center}
	\caption{After expanding the lower \(\gtLbl{\ell_2}\) action}
\end{subfigure}
\begin{subfigure}{.45\textwidth}
\begin{center}
\begin{tikzpicture}[]
		\draw node[draw, rounded rectangle] (G7) {};
		\draw node[draw, rounded rectangle, left=of G7] (G8) {};
		\draw node[draw, rounded rectangle, below =of G8] (G9) {};
		\draw node[draw, rounded rectangle, right=of G7] (G10) {};
		\draw node[draw, rounded rectangle, right=of G10] (G11) {};
		\draw node[draw, rounded rectangle, below =of G11] (G12) {};

		\draw[-{Stealth[scale=1.5]}, ] (G7) to[] node[anchor=south, pos=.5] {\(\gtLbl{\ell_1}\)} (G8);
		\draw[-{Stealth[scale=1.5]}, ] (G7) to[] node[anchor=south, pos=.5] {\(\gtLbl{\ell_2}\)} (G10);
		\draw[-{Stealth[scale=1.5]}, ] (G8) to[loop left,] node[anchor=east, pos=.5] {\(\gtLbl{\ell_1}\)} (G8);
		\draw[-{Stealth[scale=1.5]}, ] (G8) to[] node[anchor=east, pos=.5] {\(\gtLbl{\ell_2}\)} (G9);
		\draw[-{Stealth[scale=1.5]}, ] (G9) to[loop left,] node[anchor=east, pos=.5] {\(\gtLbl{\ell_1}\)} (G9);
		\draw[-{Stealth[scale=1.5]}, ] (G9) to[loop right,] node[anchor=west, pos=.5] {\(\gtLbl{\ell_2}\)} (G9);

		\draw[-{Stealth[scale=1.5]}, ] (G10) to[] node[anchor=south, pos=.5] {\(\gtLbl{\ell_1}\)} (G11);
		\draw[-{Stealth[scale=1.5]}, ] (G10) to[loop below,] node[anchor=north, pos=.5] {\(\gtLbl{\ell_2}\)} (G10);
		\draw[-{Stealth[scale=1.5]}, ] (G11) to[loop right,] node[anchor=west, pos=.5] {\(\gtLbl{\ell_1}\)} (G11);
		\draw[-{Stealth[scale=1.5]}, ] (G11) to[] node[anchor=west, pos=.5] {\(\gtLbl{\ell_2}\)} (G12);
		\draw[-{Stealth[scale=1.5]}, ] (G12) to[loop left,] node[anchor=east, pos=.5] {\(\gtLbl{\ell_1}\)} (G12);
		\draw[-{Stealth[scale=1.5]}, ] (G12) to[loop right,] node[anchor=west, pos=.5] {\(\gtLbl{\ell_2}\)} (G12);
\end{tikzpicture}
\end{center}
	\caption{The graph type fully unrolled}
\end{subfigure}
\caption{The graph of \(\gSessionType{G^\prime}\), with \(\text{unvisited}\)
edges in red. For the sake of simplicity, we only use \(\gtLbl{\ell_1}\) and \(\gtLbl{\ell_2}\) as
the two communication labels. Notice that the unrolled graph is not the
smallest (in that some loops might be unrolled more than needed), but still
unrolls every loop.}
\label{fig:graph_G_prime}
\end{figure}

\vspace{.2cm}

\noindent
\begin{minipage}{.45\textwidth}
	\makeatletter
	\let\@latex@error\@gobble
	\makeatother
	\begin{center}
\begin{algorithm}[H]
	\KwData{\(G\) -- A type graph}
	\While{\(\exists e = \tuple{v_1, v_2} \in \set{B}_G\cap\set{U}_G\)}{
		Mark \(e\) as \(\text{visited}\) in \(G\)\;
		Let \(G^\prime\) be a copy of the continuation of \(v_1\) using \(e\) in \(G\)\;
		Let \(v_2^\prime\) be the copy of \(v_2\) in \(G^\prime\)\;
		Remove \(e\) from \(G\)\;
		Let \(G\) be the union of \(G\) and \(G^\prime\)\;
		Add an edge from \(v_1\) to \(v_2^\prime\) in \(G\)\;
	}
	Remove unreachable vertices\;
\end{algorithm}
	\end{center}
\end{minipage}
\begin{minipage}{.55\textwidth}
	\begin{description}
		\item[Depth of a node:]
			The depth of a node \(N\) is the length of the shortest
			path from the initial node to \(N\).
		\item[Backward edge:]
			An edge from \(N_1\) to \(N_2\) is said \emph{backward}
			if the depth of \(N_1\) is greater than or equal to the
			depth of \(N_2\). We note \(\set{B}_G\) the set of
			backward edges of graph \(G\).
		\item[State of an edge:]
			We assign each edge a state taken from
			\(\{\text{visited}, \text{unvisited}\}\). We note
			\(\set{U}_G\) the set of \(\text{unvisited}\) edges of
			graph \(G\).
		\item[Continuation:]
			The continuation of a node \(N\) following an edge
			\(E\) is the (sub)graph reachable from the destination
			of \(E\), provided that the origin of \(E\) is \(N\).
	\end{description}
\end{minipage}

\section{Static Elision of Redundant Refinements}

\subsection{Static Elision of Refinements in RCS}
\label{sec:app:static_elision_RCS}

\subsubsection{Definitions.}

\begin{definition}[Independent transitions]
	\label{def:indenpendent_trans}
	A transition
	$t = \redact{\sigma}{\sigma^\prime}
	{\ltsLabelGeneric{p}{q}{\gtLbl{\ell},\tuple{\processVariable{x},\val{\_}}}[r]}$
	\emph{depends} on $\fv{r}$, the free variables of its refinement.

	We say that $t$ \emph{depends} on another transition
	$t^\prime = \redact{\rho}{\rho^\prime}
	{\ltsLabelGeneric{r}{s}{\gtLbl{\ell^\prime},\tuple{\processVariable{y},\val{\_}}}[r^\prime]}$
	if $t$ depends on $\processVariable{y}$, the payload of $t^\prime$.
	Otherwise, we say $t$ is independent of $t^\prime$. When $t$ depends
	(resp.\ does not depend) on itself, we say it is \emph{self-dependent}
	(resp.\ \emph{self-independent}).
\end{definition}

\begin{definition}[Well-defined transitions]
	\label{def:static:well_defined}
	Given a RCS with a CFSM containing a transition $t =
	\redact{s_\participant{i}}{s_\participant{i}^\prime}{\actionGeneric{\_}{\_}{\gtLbl{\_},\_}{r}}$,
	we say $t$ is \emph{well-defined} if and only if, in all reachable states
	$\tuple{\tuple{\dots, s_\participant{i}, \dots}, \queue{\_}, \map{M}}$,
	$\fv{r}\subseteq\domMap{M}$.
\end{definition}

\subsubsection{Lemmas and proofs.}

\begin{lemma}
	\label{thm:indep_trans}
	Let
	\(t_1 = \redact{\sigma}
	{\sigma^\prime}
	{
		\actionGeneric{p}{q}{\gtLbl{\ell},\tuple{\processVariable{x},\val{c}}}{r}
	}
	\) and
	\(
	t_2 =
	\redact{\rho}
	{\rho^\prime}
	{
		\actionGeneric{r}{s}{\gtLbl{\ell^\prime},\tuple{\processVariable{y},\val{c^\prime}}}{r^\prime}
	}
	\) such that $t_1$ is independent from $t_2$.

	For all $\map{M}$, $\map{M_2}$ such that \(
	\tuple{\_, \queue{\_}, \map{M}}
	\csred \tuple{\_, \queue{\_}, \map{M_2}}
	\) with $t_2$; $\map{M}\models\refinement{r}$ if and only if
	$\map{M_2}\models\refinement{r}$.
\end{lemma}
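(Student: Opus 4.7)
The plan is to exploit the definition of independence (\cref{def:indenpendent_trans}) together with the update pattern of the reduction rules \grsnd{} and \grrec{} (\cref{def:semantics_RGA}). The fact that $t_1$ is independent of $t_2$ means $t_1$ does not depend on $\processVariable{y}$, the payload of $t_2$, which by definition gives $\processVariable{y}\notin\fv{r}$. Meanwhile, inspecting both reduction rules shows that any firing of $t_2$ from a configuration with map $\map{M}$ to a configuration with map $\map{M_2}$ produces $\map{M_2} = \updateMap{M}{y}{c^\prime}$: this is the only modification to the map in either rule.

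First I would do a short case analysis on whether $t_2$ is a send or a receive, concluding in either case that $\map{M_2} = \updateMap{M}{y}{c^\prime}$. Then I would observe that, by the definition of $\updateMap{\cdot}{\cdot}{\cdot}$ (see \inApp{app:map}), for every variable $\processVariable{z}\neq\processVariable{y}$, $\lookupMap{M}{z} = \lookupMap{M_2}{z}$, while the domains differ at most on $\processVariable{y}$. Combined with $\processVariable{y}\notin\fv{r}$, this yields $\fv{r}\subseteq\domMap{M}$ if and only if $\fv{r}\subseteq\domMap{M_2}$, and moreover $\lookupMap{M}{z} = \lookupMap{M_2}{z}$ for every $\processVariable{z}\in\fv{r}$.

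Finally, since $\refeval{\cdot}$ is defined on closed refinements obtained by substituting the free variables with their values in the map, the substitution $\refinement{r}\subst{\fv{r}}{\lookupMap{M}{\fv{r}}}$ and $\refinement{r}\subst{\fv{r}}{\lookupMap{M_2}{\fv{r}}}$ produce syntactically identical closed refinements. Hence $\refeval{r\subst{\fv{r}}{\lookupMap{M}{\fv{r}}}} = \top$ iff $\refeval{r\subst{\fv{r}}{\lookupMap{M_2}{\fv{r}}}} = \top$, which is exactly $\map{M}\models\refinement{r}$ iff $\map{M_2}\models\refinement{r}$.

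I do not expect any real obstacle: the lemma is essentially a bookkeeping fact about map updates at variables disjoint from $\fv{r}$, and the only mildly delicate point is that the same argument covers both the send and the receive case (because both rules perform the same map update on $\processVariable{y}$).
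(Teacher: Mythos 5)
Your proposal is correct and matches the paper's own proof essentially step for step: both reduce independence to $\processVariable{y}\notin\fv{r}$, observe that either reduction rule yields $\map{M_2}=\updateMap{M}{y}{c^\prime}$, and conclude that the substitution and the domain condition in the definition of $\models$ are unaffected. The only cosmetic difference is that you make the send/receive case split explicit where the paper states the map update directly.
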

\begin{proof}[Proof of \cref{thm:indep_trans}.]
	First, $t_1$ is independent of $t_2$, i.e.\
	(\cref{def:indenpendent_trans}), $t_1$ does not depend on
	$\processVariable{y}$; that is
	$\processVariable{y}\not\in\fv{r}$. Also, since \(
	\tuple{\_, \queue{\_}, \map{M}}
	\csred \tuple{\_, \queue{\_}, \map{M_2}}
	\), then $\map{M_2} = \updateMap{M}{y}{c^\prime}$.

	In addition, by definition (\cref{sec:pred_lang_sem}),
	$\map{M}\models \refinement{r}$ if and only if:
	\begin{inlineenum}
		\item \(\fv{r} \subseteq \domMap{M}\)
		\item \(\refeval{r\subst{\fv{r}}{\lookupMap{M}{\fv{r}}}}=\top\).
	\end{inlineenum}

	Therefore, $\refinement{r}\subst{\fv{r}}{\lookupMap{M}{\fv{r}}} =
	\refinement{r}\subst{\fv{r}}{\lookupMap{M_2}{\fv{r}}}$ and
	$\fv{r}\subseteq\domMap{M}$ if and only if
	$\fv{r}\subseteq\domMap{M}\cup\{\processVariable{y}\}=\domMap{M_2}$.

	Therefore, $\map{M}\models \refinement{r}$ if and only if
	$\map{M_2}\models \refinement{r}$.
\end{proof}

\begin{lemma}
	\label{thm:static:reachable_states_dom_refinement}
	Given an RCS $R$ containing an RCFSM with a transition
	$\redact{s_\participant{i}}{s_\participant{i}^\prime}
	{\actionGeneric{\_}{\_}{\_}{r}}$, and such that for each transition
	$\redact{\_}{\_}{\actionSend{\_}{\_}{\_}{r_w}}
	\in\bigcup_{\processVariable{x}\in\fv{r}}\set{T}_{\processVariable{x}}$,
	for all map $\map{M}$, $\map{M}\models \refinement{r_w}$ entails
	$\map{M}\models\refinement{r}$.

	For all reachable states
	$\sigma=\tuple{\tuple{\vec{s_\participant{i}}}, \queue{w}, \map{M}}$
	of $R$, if $\fv{r}\subseteq\domMap{M}$, then
	$\map{M}\models\refinement{r}$.
\end{lemma}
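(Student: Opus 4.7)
The plan is to proceed by induction on the length of the run reaching $\sigma$. The base case is essentially vacuous: in the initial state $\map{M}=\emptyMap$, so $\fv{r}\subseteq\domMap{M}$ forces $\fv{r}=\emptyset$ (which we treat as the degenerate case where $\refinement{r}$ is a closed tautology). For the inductive step, let $\sigma^{\prime\prime}\csred\sigma$ via a transition $t$ whose payload is $\processVariable{y}$ with value $\val{c}$, so that $\map{M}=\updateMap{M^{\prime\prime}}{y}{c}$, and case-split on whether $\processVariable{y}\in\fv{r}$ and on the direction of $t$.

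If $\processVariable{y}\notin\fv{r}$, the restrictions of $\map{M}$ and $\map{M^{\prime\prime}}$ to $\fv{r}$ coincide; since $\fv{r}\subseteq\domMap{M}$ then implies $\fv{r}\subseteq\domMap{M^{\prime\prime}}$, the IH gives $\map{M^{\prime\prime}}\models\refinement{r}$, which transfers to $\map{M}$. If $\processVariable{y}\in\fv{r}$ and $t$ is a \grsnd{} step, then $t\in\set{T}_{\processVariable{y}}\subseteq\bigcup_{\processVariable{x}\in\fv{r}}\set{T}_{\processVariable{x}}$; the premise of \grsnd{} yields $\updateMap{M^{\prime\prime}}{y}{c}\models\refinement{r_w}$, and the lemma's hypothesis immediately gives $\map{M}\models\refinement{r}$.

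The delicate case, and the main obstacle, is $\processVariable{y}\in\fv{r}$ with $t$ a \grrec{}, since the received value $\val{c}$ may be ``stale'' with respect to $\lookupMap{M^{\prime\prime}}{y}$ when the asynchronous FIFO queues carry several outstanding sends of $\processVariable{y}$ with differing values. I would first appeal to \cref{thm:run_send_before_receive} to exhibit the matching earlier send, at which point the send case already shows that $\refinement{r}$ held in the state right after that send with precisely the value $\val{c}$ now being received. To propagate this to the current state, I plan to strengthen the induction hypothesis with a consistency invariant on queued messages: for every queued payload $\tuple{\processVariable{y},\val{c}}$ with $\processVariable{y}\in\fv{r}$, substituting $\val{c}$ for $\processVariable{y}$ together with the current map values for the remaining $\fv{r}$-variables still satisfies $\refinement{r}$. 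Showing that this strengthened invariant is preserved across interleavings of sends and receives --- in particular across updates to other $\fv{r}$-variables that happen while the payload sits in transit --- is the technical heart of the argument, and is where I expect the bulk of the work to lie; instantiating the invariant at the head of the relevant queue then yields $\map{M}\models\refinement{r}$ for the current case.
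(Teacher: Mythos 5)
Your overall skeleton---induction on the run, the three-way split on whether the payload variable lies in $\fv{r}$ and on the direction of the transition, the send case discharged by the \grsnd{} premise plus the entailment hypothesis, and the appeal to \cref{thm:run_send_before_receive} in the receive case---is exactly the paper's. The base case, the $\processVariable{y}\notin\fv{r}$ case and the send case are fine as you state them.

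The gap is the receive case, which you correctly single out as the crux but do not close. The strengthened queue invariant you propose (every in-transit payload $\tuple{\processVariable{y},\val{c}}$ with $\processVariable{y}\in\fv{r}$ satisfies $\refinement{r}$ when combined with the current map on the remaining free variables) does not follow from the stated hypotheses by the obvious preservation argument: if, while $\tuple{\processVariable{x},\val{c}}$ sits in a queue, some other $\processVariable{z}\in\fv{r}$ is updated by a send whose refinement entails $\refinement{r}$, the \grsnd{} premise only yields $\refinement{r}$ under the \emph{current} map value of $\processVariable{x}$, not under the queued value $\val{c}$, and nothing lets you swap one for the other. So the ``bulk of the work'' you defer is not routine bookkeeping; it is exactly the point where one must argue that the free variables of $\refinement{r}$ other than the received payload cannot have drifted (in a way that matters) between the matching send and the receive. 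The paper closes this case differently: rather than an invariant on queue contents, it compares the current map $\map{M}$ with the map $\map{M_j}$ obtained just after the matching send exhibited by \cref{thm:run_send_before_receive}, does a sub-case analysis to argue that the relevant entries of the two maps coincide (the payload entry is $\val{c}$ in both, and the remaining $\fv{r}$-entries are claimed unchanged in between), and then transfers $\map{M_j}\models\refinement{r_w}$---which holds by the \grsnd{} premise at the send---through the entailment hypothesis to get $\map{M}\models\refinement{r}$. Until you either prove preservation of your invariant or supply this agreement-of-maps argument, the receive case, and hence the lemma, is not established.
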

\begin{proof}
	Since $\sigma$ is reachable, there is a run $\sigma_0; \dots; \sigma;
	\dots$ that contains $\sigma$.
	By induction on the run.
	\begin{description}
		\item [Base case ($\sigma$ is initial):] Since $\sigma$ is an
		initial state, then $\map{M} = \emptyMap$, our claim vacuously
		holds.
		\item [Inductive case
		({ $\sigma^\prime \csred[t]\sigma$ with
			$t=\redact{\_}{\_}{\actionGeneric{\_}{\_}{\gtLbl{\_},\tuple{\processVariable{x},\val{c_x}}}{r^\prime}}$
		})]
		Let $\sigma^\prime =
		\tuple{\tuple{\vec{s_\participant{i}^\prime}},
			\queue{w^\prime}, \map{M^\prime}}$.
		The induction hypothesis is that if
		$\fv{r}\subseteq\map{M^\prime}$, then
		$\map{M^\prime}\models\refinement{r}$.
		Also, $\map{M}\models\refinement{r^\prime}$, from the premises
		of \grrec{} or \grsnd{}.
		By case analysis on the transition $t$:
		\begin{description}
			\item[If $\processVariable{x}\not\in\fv{r}$:] In that case,
			$\fv{r}\subseteq\map{M}$ if and only if
			$\fv{r}\subseteq\map{M^\prime}$. The conclusion holds
			directly from the induction hypothesis.
			\item[If $\processVariable{x}\in\fv{r}$:]
			In that case,
			$t\in\bigcup_{\processVariable{x}\in\fv{r}}\set{T}_{\processVariable{x}}$.
			We distinguish the case of send and receive:
			\begin{description}
				\item[Case receive] In that case, from
				\cref{thm:run_send_before_receive}, there are
				$\sigma_{j-1}; \sigma_j$ in $\sigma_0; \dots; \sigma_{i-1}$
				such that
				$\sigma_{j-1}
				\csred[\redact{\_}{\_}{\actionSend{\_}{\_}{\gtLbl{\_},\tuple{\processVariable{x},\val{c_x}}}{r^\prime}}]
				\sigma_j$, therefore $\processVariable{x}\in\domMap{M^\prime}$,
				therefore $\fv{r}\subseteq\domMap{M}$ if and only if
				$\fv{r}\subseteq\domMap{M^\prime}$.
				Let $\map{M_j}$ be the map of $\sigma_j$.
				From the induction hypothesis, if
				$\fv{r}\subseteq\domMap{M^\prime}$, then
				$\map{M^\prime}\models \refinement{r}$,
				therefore if $\fv{r}\subseteq\domMap{M}$, then
				$\map{M^\prime}\models\refinement{r}$.
				\begin{description}
					\item[If
					$\forall\processVariable{y}\in\fv{r}\suchthat\lookupMap{M}{y}=\lookupMap{M^\prime}{y}$:]
					From the rule \grrec{}, we therefore have that $\map{M} =
					\map{M^\prime}$, therefore $\fv{r}\subseteq\domMap{M}$
					implies $\map{M}\models\refinement{r}$.
					\item[If
					$\exists\processVariable{y}\in\fv{r}\suchthat\lookupMap{M}{y}\neq\lookupMap{M^\prime}{y}$:]
					Without loss of generality, suppose there is a single such
					$\processVariable{y}$. In that case, there is at least one
					$\sigma_k$ ($k > j$, $k
					<i$)
					such that
					$\sigma_{k-1}
					\csred[\redact{\_}{\_}{\actionGeneric{\_}{\_}{\gtLbl{\_},\tuple{\processVariable{y},\val{c_y}}}{r_y}}]
					\sigma_k$.
					Without loss of generality, consider there is a single such
					$\sigma_k$. Let $\map{M_k}$ and $\map{M_{k-1}}$ be the map
					of $\sigma_k$ and $\sigma_{k-1}$ respectively.
					\begin{description}
						\item[If $\processVariable{x} = \processVariable{y}$:]
						in that case:
						\begin{inlineenum}
							\item $\lookupMap{M}{x} = \lookupMap{M_j}{x} =
							\val{c_x}$
							\item for all $\processVariable{v}\in\fv{r}$ such
							that $\processVariable{v}\neq\processVariable{x}$,
							$\lookupMap{M}{v} = \lookupMap{M^\prime}{v} =
							\lookupMap{M_k}{v} = \lookupMap{M_{k-1}}{v} =
							\lookupMap{M_j}{v}$
							\item $\fv{r}\subseteq\map{M}$ if and only if
							$\fv{r}\subseteq\map{M_j}$.
						\end{inlineenum}
						Therefore, $\map{M}\models \refinement{r^\prime}$ if and
						only if $\map{M_j}\models\refinement{r^\prime}$.
						Therefore,
						if $\fv{r}\subseteq\map{M}$, then
						$\map{M}\models\refinement{r^\prime}$, which itself
						entails
						$\map{M}\models\refinement{r}$.
						\item[If $\processVariable{x}\neq\processVariable{y}$:]
						in that case, $\map{M} = \updateMap{M^\prime}{x}{c_x} =
						\map{M^\prime}$. The conclusion holds directly from the
						induction hypothesis.
					\end{description}
				\end{description}
				\item[Case send] In that case,
				$\map{M}\models\refinement{r^\prime}$ entails
				$\map{M}\models\refinement{r}$.
			\end{description}
		\end{description}
	\end{description}
\end{proof}

\StaticElisionCorrect*

\begin{proof}
	Let $\mathcal{R}$ being the identity of reachable states of $R$ and
	$R^\prime$. We show that $\mathcal{R}$ is a bisimulation.

	$R^\prime$ simulating $R$ is trivial, since the only difference is the lack
	of one refinement, on transition $t^\prime$ w.r.t.\ $t$.

	We now show that $R$ can simulate $R^\prime$. Since our candidate
	simulation relation $\mathcal{R}$ is the reachable state identity, we have
	to prove that for each $R^\prime$ transition
	${\sigma}\csred[t^{\prime\prime}]{\sigma^\prime}$,
	there exists a transition from $\sigma$ to $\sigma^\prime$ in $R$. We
	distinguish two cases:
	\begin{description}
		\item[If $t^{\prime\prime}$ is $t^\prime$:] In that case, we have to
		prove that $R$ can take the transition $\sigma\csred[t]\sigma^\prime$,
		i.e.\ we have to show that
		$\map{M_{\sigma^\prime}}\models\refinement{r}$.

		First, since $\sigma\csred[t^\prime]\sigma^\prime$, from the definition
		of \grrec{} or \grsnd{}:
		$\sigma^\prime = \tuple{\tuple{\dots, s_\participant{i}^\prime, \dots},
			\queue{\_}, \map{M_{\sigma^\prime}}}$.

		Since $\sigma^\prime$ is
		reachable, from
		\cref{thm:static:reachable_states_dom_refinement}, if
		$\fv{r}\subseteq\domMap{\map{M_{\sigma^\prime}}}$, then
		$\map{M_{\sigma^\prime}}\models\refinement{r}$.
		Since $t$ is well-defined, from \cref{def:static:well_defined},
		$\fv{r}\subseteq\domMap{M_{\sigma^\prime}}$.
		Therefore $\map{M_{\sigma^\prime}\models\refinement{r}}$, i.e.\ $R$ can
		take the transition $\sigma\csred[t]\sigma^\prime$.

		\item[If $t^{\prime\prime}$ is not $t^\prime$:] In that case,
		$t^{\prime\prime}\in\delta$, so $R$ can also take the transition
		$\sigma\csred[t^{\prime\prime}]\sigma^\prime$.
	\end{description}
\end{proof}

\subsection{Application to RMPST Protocols}
\label{sec:app:elision_rmpst}

\subsubsection{Definitions}

\begin{definition}[Step of a communication]
	\label{def:step_of_communication}
	A type $\gtComm[i][I]{p}{q}{\ell}[x]{S}[r]{G}$ has step $z$ (noted
	$\participant{p}\rightarrow\participant{q}\msgM[\tuple{\gtLbl{\ell},
		\processVariable{x}}]\models\refinement{r}$) if there is an $i\in I$
		such
	that $\processVariable{x} = \processVariable[i]{x}$, $\gtLbl{\ell} =
	\gtLbl{\ell_i}$ and $\refinement{r} = \refinement{r_i}$.
\end{definition}

\begin{definition}[Step in a Type]
	\label{thm:step_in_type}
	A step $z$ \emph{occurs} in a type $\gtFmt{G}$ if:
	\begin{itemize}
		\item if $\gtFmt{G}=\gtComm[i][I]{r}{s}{\ell}[x]{S}[r]{G}$,
		either:
		\begin{itemize}
			\item $z$ is a step of $\gtFmt{G}$; or
			\item there exists $i\in I$ such that $z$ occurs in $\gtFmt{G_i}$;
		\end{itemize}
		\item if $\gtFmt{G}=\gtRec{t}{G^\prime}$,
		$z$ occurs in $\gtFmt{G^\prime}$.
	\end{itemize}
\end{definition}

\begin{definition}[Happens-before in type]
	Given a global type $\gtFmt{G} =
	\gtComm[i][I]{p}{q}{\ell}[x]{S}[r]{G}$,
	$\gtFmt{G}$ \emph{happens-before} all
	$\gtFmt{G^\prime} = \gtComm[j][J]{r}{s}{\ell}[x]{S}[r]{G}\in\gtFmt{G_i}$
	where $\participant{r}=\participant{p}$ or
	$\participant{r}=\participant{q}$,
	noted $\gtFmt{G} <^\prime \gtFmt{G^\prime}$.

	The happens-before relation (noted $<$) is the transitive closure of
	$<^\prime$.
\end{definition}

\begin{remark}
	The \emph{happens-before} characterises the order of the \emph{first
		occurrence} of each step, in particular in recursive types, where a step
	can occur multiple times.
\end{remark}

\begin{example}[Happens-before in a type]
	Considering the same $\gtFmt{G_s}$ and $\gtFmt{G_y}$ than in
	\cref{ex:step_in_type}, since they both have the same sender
	$\participant{A}$, $\gtFmt{G_s}$ happens-before $\gtFmt{G_y}$.
\end{example}

\begin{definition}[Well-defined step in a type]
	\label{def:se:well_defined_step}
	Given two global types $\gtFmt{G}$ and $\gtFmt{G_s}$ such that
	$\gtFmt{G_s}\in\gtFmt{G}$, and $z =
	\participant{p}\rightarrow\participant{q}\msgM[\tuple{\gtLbl{\ell},
		\processVariable{x}}]\models\refinement{r}$ a step of $\gtFmt{G_s}$,
	we say $z$ is \emph{well-defined} if for all
	$\processVariable{x}\in\fv{r}$, there
	exists $\gtFmt{G}_{\processVariable{x}} =
	\gtComm[i\in I]{r}{s}{\_}[x_i]{\_}[\_]{G_i}$
	such that, $\gtFmt{G}_{\processVariable{x}} < \gtFmt{G_s}$
	and for one $i\in I$, $\gtFmt{G_s}\in\gtFmt{G_i}$ and
	$\processVariable{x} = \processVariable{x_i}$.
\end{definition}

\subsubsection{Lemmata}

\begin{lemma}
	\label{thm:step_implies_transition}
	Given a projectable type $\gtFmt{G}$, if a step $z =
	\participant{p}\rightarrow\participant{q}\msgM[\tuple{\gtLbl{\ell},
		\processVariable{x}}]\models\refinement{r}$ occurs in $\gtFmt{G}$,
	then there exists a transition
	$\redact{s}{s^\prime}{\actionSend{p}{q}{\gtLbl{\ell},
			\tuple{\processVariable{x}, \val{\_}}}{r}}$ in
	$\ltToAut{\proj{p}{G}}$,
	and
	$\redact{s}{s^\prime}{\actionReceive{p}{q}{\gtLbl{\ell},
			\tuple{\processVariable{x}, \val{\_}}}{\taut}}$ in
	$\ltToAut{\proj{q}{G}}$.
\end{lemma}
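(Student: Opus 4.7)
The plan is to proceed by structural induction on $\gtFmt{G}$, following the recursive definition of "step occurring in a type" (\cref{thm:step_in_type}) and the construction of RCFSMs (\cref{def:LT_to_RCFSM}). The key observation is that $\ltToAut{\proj{p}{G}}$ is essentially assembled from the (sub)types that occur in $\proj{p}{G}$, so transitions derived from a subtype of $\gtFmt{G}$ will still appear in $\ltToAut{\proj{p}{G}}$.

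First I would handle the base case, where $z$ is a step of the outermost constructor, so $\gtFmt{G} = \gtComm[i\in I]{p}{q}{\ell}[x]{S}[r]{G}$ and there exists $i_0\in I$ with $\gtLbl{\ell} = \gtLbl{\ell_{i_0}}$, $\processVariable{x} = \processVariable{x_{i_0}}$ and $\refinement{r} = \refinement{r_{i_0}}$. Applying the sender clause of \cref{def:proj}, $\proj{p}{G} = \ltIntC[i\in I]{q}{\ell_i}{x_i}{S_i}[r_i]{\proj{p}{G_i}}$; by the internal-choice clause of \cref{def:LT_to_RCFSM}, this generates a transition labelled $\actionSend{p}{q}{\gtLbl{\ell_{i_0}},\tuple{\processVariable{x_{i_0}},\val{c}}}{r_{i_0}}$ from $\operatorname{strip}(\proj{p}{G})$ in $\ltToAut{\proj{p}{G}}$, which is exactly the required send transition. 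Symmetrically, $\proj{q}{G} = \ltExtC[i\in I]{p}{\ell_i}{x_i}{S_i}{\taut}{\proj{q}{G_i}}$ contributes a receive transition with refinement $\taut$ in $\ltToAut{\proj{q}{G}}$.

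Next I would handle the inductive cases. If $\gtFmt{G} = \gtComm[i\in I]{r}{s}{\ell'}[x']{S'}[r']{G}$ and $z$ occurs in some $\gtFmt{G_{i_0}}$, then by IH there are transitions with the stated labels in $\ltToAut{\proj{p}{G_{i_0}}}$ and $\ltToAut{\proj{q}{G_{i_0}}}$. I would then observe that, whichever of the three projection clauses for communications applies (sender, receiver, or third party with merge), the states and edges arising from the continuation subtype $\gtFmt{G_{i_0}}$ are preserved: the set of occurring subtypes $\occursLoc{T'}{\proj{p}{G_{i_0}}}$ is a subset of $\occursLoc{T'}{\proj{p}{G}}$, so the transitions generated by \cref{def:LT_to_RCFSM} from those subtypes lift verbatim (the merge case uses the fact that the merge is defined as equality, so every local type in the merged family, including $\proj{p}{G_{i_0}}$, equals the projection). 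The recursion case $\gtFmt{G} = \gtRec{t}{G'}$ is similar: by \cref{def:proj}, $\proj{p}{G} = \ltRec{t}{(\proj{p}{G'})}$ (or $\ltEnd$, but in that case no step would occur in $\gtFmt{G}$ involving $\participant{p}$), and by \cref{def:LT_to_RCFSM} the $\operatorname{strip}$ operation peels the $\ltRec{t}{}$, so the transitions generated from $\occursLoc{T'}{\proj{p}{G'}}$ (including the one obtained from IH) survive into $\ltToAut{\proj{p}{G}}$; unfolding a recursion variable $\ltRecVar{t}$ is handled by the dedicated clauses in \cref{def:LT_to_RCFSM} that redirect such continuations to $\operatorname{strip}(\lSessionType{T^\prime})$.

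The main obstacle I anticipate is the third-party communication case (where $\participant{p}$ and $\participant{q}$ are both distinct from the sender/receiver of the outer communication), because the projection uses the merge operator: I must argue that a transition present in every $\ltToAut{\proj{p}{G_i}}$ (via IH applied to the branch containing $z$, and the fact that the merge forces all branches to be syntactically equal) is present in $\ltToAut{\proj{p}{G}}$. A secondary subtlety is the recursion variable lookup in \cref{def:LT_to_RCFSM}: when the continuation after $z$ is a back-edge $\ltRecVar{t}$, the transition's target is $\operatorname{strip}(\lSessionType{T_\mu})$ for the binding $\ltRec{t}{T_\mu}$ in the enclosing $\lSessionType{T_0}$, which must be the full projected type; I would make this explicit by noting that the $\lSessionType{T_0}$ parameter in \cref{def:LT_to_RCFSM} refers to the whole projected type, so the lookup succeeds in $\ltToAut{\proj{p}{G}}$ as well as in $\ltToAut{\proj{p}{G_{i_0}}}$.
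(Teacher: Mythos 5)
Your proposal follows essentially the same route as the paper's proof: structural induction on $\gtFmt{G}$, handling the outermost-communication case directly via \cref{def:proj} and \cref{def:LT_to_RCFSM}, and dispatching the continuation and recursion cases by the induction hypothesis. You are in fact somewhat more careful than the paper, which simply declares the inductive cases ``direct'': your explicit treatment of the merge clause (all branches syntactically equal, so the projection coincides with that of the branch containing $z$) and of the $\ltRecVar{t}$ lookup being resolved against the whole projected type $\lSessionType{T_0}$ fills in exactly the details the paper leaves implicit.
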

\begin{proof}
	We prove the case for
	$\redact{s}{s^\prime}{\actionSend{p}{q}{\gtLbl{\ell},
			\tuple{\processVariable{x}, \val{\_}}}{r}}$.
	The case for
	$\redact{s}{s^\prime}{\actionReceive{p}{q}{\gtLbl{\ell},
			\tuple{\processVariable{x}, \val{\_}}}{\taut}}$ is similar.

	By structural induction on $\gtFmt{G}$:
	\begin{itemize}
		\item if $\gtFmt{G}=\gtComm[i][I]{r}{s}{\ell}[x]{S}[r]{G}$
		\begin{itemize}
			\item $\participant{r}=\participant{p}$,
			$\participant{s}=\participant{q}$, and there exist $i\in I$ such
			that $\gtLbl{\ell}=\gtLbl{\ell_i}$,
			$\processVariable{x}=\processVariable[i]{x}$,
			and $\refinement{r}=\refinement{r_i}$:
			then, by definition of projection (\cref{def:proj}),
			$\proj{p}{G} = \ltExtC[i][I]{q}{\ell}{x}{S}{r}{G}$.
			From \cref{def:LT_to_RCFSM}, we have that
			$\redact{\proj{p}{G}}{\_}{\actionSend{p}{q}{\gtLbl{\ell},
					\tuple{\processVariable{x}, \val{\_}}}{r}}$.
			(Notice that the final state is not necessarily $\proj{p}{G_i}$, as
			there are possibly recursions.)
			\item if there exist $i\in I$ such that $z$ occurs in
			$\gtFmt{G_i}$: direct, from the induction hypothesis.
		\end{itemize}
		\item if $\gtFmt{G} = \gtRec{t}{G^\prime}$ and $s$ occurs in
		$\gtFmt{G^\prime}$: direct, from the induction hypothesis.
	\end{itemize}
\end{proof}

\begin{lemma}[Converse of \cref{thm:step_implies_transition}]
	\label{thm:transition_implies_step}
	For all types $\gtFmt{G}$ and all participants $\participant{p}$ of
	$\gtFmt{G}$, if there exists a transition
	$\redact{s}{s^\prime}{\actionSend{p}{q}{\gtLbl{\ell},
			\tuple{\processVariable{x}, \val{\_}}}{r}}$ in
	$\ltToAut{\proj{p}{G}}$, then
	a step $\participant{p}\rightarrow\participant{q}\msgM[\tuple{\gtLbl{\ell},
		\processVariable{x}}]\models\refinement{r}$ occurs in $\gtFmt{G}$.
\end{lemma}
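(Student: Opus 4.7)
The plan is to mirror the structure of the proof of \cref{thm:step_implies_transition}, but in the reverse direction: starting from a transition in $\ltToAut{\proj{p}{G}}$, we trace it back to the clause of \cref{def:LT_to_RCFSM} that generated it, and from there to a subterm of $\proj{p}{G}$ that must exist by \cref{def:proj}, and finally to a subterm of $\gtFmt{G}$ witnessing the step. I would proceed by structural induction on $\gtFmt{G}$, using the fact that $\occursLoc{T'}{\proj{p}{G}}$ is an invariant preserved by the construction in \cref{def:LT_to_RCFSM}.

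First, in the base cases $\gtFmt{G} = \gtEnd$ and $\gtFmt{G} = \gtRecVar{t}$ the projection is $\ltEnd$ or $\ltRecVar{t}$, so $\ltToAut{\proj{p}{G}}$ has no transitions, and the statement holds vacuously. For the communication case $\gtFmt{G} = \gtComm[i][I]{r}{s}{\ell}[x]{S}[r]{G}$, I would split on the relative positions of $\participant{p}$, $\participant{r}$, $\participant{s}$. If $\participant{r} = \participant{p}$, then $\proj{p}{G} = \ltIntC[i\in I]{s}{\ell_i}{x_i}{S_i}[r_i]{\proj{p}{G_i}}$, and by \cref{def:LT_to_RCFSM} the transitions introduced at the state $\operatorname{strip}(\proj{p}{G})$ with sending label $\participant{p}!\participant{s}$ are exactly those with $\gtLbl{\ell_i}$, $\processVariable{x_i}$, $\refinement{r_i}$ for $i\in I$; hence the top-level step $\participant{p}\rightarrow\participant{s}\msgM[\tuple{\gtLbl{\ell_i},\processVariable{x_i}}]\models\refinement{r_i}$ occurs in $\gtFmt{G}$ by \cref{thm:step_in_type}. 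Transitions that originate at deeper states come from some continuation $\proj{p}{G_i}$, to which I apply the induction hypothesis, and the resulting step occurs in $\gtFmt{G_i}\subseteq\gtFmt{G}$. If $\participant{s} = \participant{p}$, the projection produces an external choice, and the same reasoning applies (using that an $\participant{p}!\participant{q}$ transition necessarily comes from a deeper continuation, not from the top level external choice). If $\participant{p}$ is neither $\participant{r}$ nor $\participant{s}$, the projection is $\merge{(\proj{p}{G_i})}{i\in I}$, which by definition forces all $\proj{p}{G_i}$ to be equal; the transition then lives in any of them and the induction hypothesis gives the step in the corresponding $\gtFmt{G_i}$.

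Finally, the recursion case $\gtFmt{G} = \gtRec{t}{G'}$: either $\proj{p}{G} = \ltEnd$ (no transitions, vacuous) or $\proj{p}{G} = \ltRec{t}{\proj{p}{G'}}$. The function $\operatorname{strip}$ in \cref{def:LT_to_RCFSM} removes the leading $\mu$, so the state set and transitions of $\ltToAut{\proj{p}{G}}$ coincide with those of $\ltToAut{\proj{p}{G'}}$ up to the rewriting of targets $\ltRecVar{t}$ into $\operatorname{strip}(\proj{p}{G'})$. Hence any transition in $\ltToAut{\proj{p}{G}}$ corresponds to a transition in $\ltToAut{\proj{p}{G'}}$ (possibly with a rewritten target that does not affect the source action), and by the induction hypothesis the required step occurs in $\gtFmt{G'}$, and therefore in $\gtFmt{G}$ by \cref{thm:step_in_type}.

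\paragraph*{Expected obstacle.}
The main subtlety, and the step that most deserves care, is the recursion case together with the rewriting of $\ltRecVar{t}$ targets performed by \cref{def:LT_to_RCFSM}. A transition can have its \emph{target} replaced by $\operatorname{strip}(\proj{p}{G'})$ coming from an outer $\mu$-binder, which means the naive syntactic matching between the transition and a subterm of $\gtFmt{G}$ has to be mediated by $\operatorname{strip}$ and by the convention that all type variables appearing in a type are distinct (stated in \cref{sec:RMPST_Syntax}). The action label $\actionSend{p}{q}{\gtLbl{\ell},\tuple{\processVariable{x},\val{\_}}}{r}$ itself is unaffected by this rewriting, so the argument still goes through, but one should state and use a small auxiliary invariant: every transition from a state $\ltFmt{T}\in Q$ carries exactly the action determined by the head constructor of $\ltFmt{T}$, which we read off directly from the projection clauses in \cref{def:proj}.
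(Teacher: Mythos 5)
Your proposal is correct and follows essentially the same route as the paper's proof: the paper inverts \cref{def:LT_to_RCFSM} to recover an internal-choice state occurring in $\proj{p}{G}$, then inverts \cref{def:proj} to recover the corresponding global communication occurring in $\gtFmt{G}$, and concludes via \cref{thm:step_in_type}. Your structural induction on $\gtFmt{G}$ is just the explicit unfolding of the paper's one-line ``only if'' claim about projection (and your remark that the $\ltRecVar{t}$-target rewriting never touches the action label is exactly the point the paper leaves implicit).
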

\begin{proof}
	From \cref{def:LT_to_RCFSM}, if
	$\redact{s}{s^\prime}{\actionSend{p}{q}{\gtLbl{\ell},
			\tuple{\processVariable{x}, \val{\_}}}{r}}$ is in the set of
			transitions of
	$\ltToAut{\proj{p}{G}}$, there exist some $\ltFmt{T_1}$ and $\ltFmt{T_2}$
	in $\ltToAut{\proj{p}{G}}$ such
	that $\ltFmt{T_1}=\ltIntC[i][I]{q}{\ell}{x}{S}[r]{T}$ and $\ltFmt{T_2} =
	\ltFmt{T_i}$ for some $i\in I$.
	From \cref{def:proj}, $\ltFmt{T_1}\in\gtFmt{\proj{p}{G}}$ only if there
	exist $\gtFmt{G_1}=\gtComm[i][I]{p}{q}{\ell}[x]{S}[r]{G}$ in $\gtFmt{G}$,
	where there is one $i\in I$ such that
	$\gtLbl{\ell}=\gtLbl{\ell_i}$,
	$\processVariable{x}=\processVariable[i]{x}$, and
	$\refinement{r}=\refinement{r_i}$.
	Therefore, from \cref{thm:step_in_type},
	$\participant{p}\rightarrow\participant{q}\msgM[\tuple{\gtLbl{\ell},
		\processVariable{x}}]\models\refinement{r}$ occurs in $\gtFmt{G}$.
\end{proof}

\begin{lemma}
	\label{thm:happens_before_correct}
	Let $\gtFmt{G_1} =
	\gtComm[i][I]{p}{q}{\ell}[x]{S}[r]{G}$.
	For any global type $\gtFmt{G_2}\in\gtFmt{G_i}$,
	such that $\gtFmt{G_1}<\gtFmt{G_2}$ and such that $\participant{r}$ is the
	sender of $\gtFmt{G_2}$.
	All paths $p$ to a state
	$\tuple{\tuple{\dots, s_\participant{r}, \dots}, \queue{w}, \map{M}}$,
	where
	$s_\participant{r} = \proj{r}{G_2}$
	contain a transition $t=\sigma
	\csred[\redact{\_}{\_}{\actionSend{p}{q}{\gtLbl{\ell_i},\tuple{\processVariable{x_i},\val{\_}}}{r_i}}]
	\sigma^\prime$.
\end{lemma}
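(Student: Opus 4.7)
The plan is to unfold the transitive closure $\gtFmt{G_1} < \gtFmt{G_2}$ into a witnessing chain $\gtFmt{G_1} = \gtFmt{H_0} <^\prime \gtFmt{H_1} <^\prime \cdots <^\prime \gtFmt{H_k} = \gtFmt{G_2}$ and proceed by induction on the chain length $k$, with the two workhorses being the label-uniqueness assumption stated in \cref{sec:static_elision:MPST} and \cref{thm:run_send_before_receive}.

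For the base case ($k = 1$), the definition of $<^\prime$ forces $\participant{r}$, the sender of $\gtFmt{G_2}$, to be either $\participant{p}$ or $\participant{q}$, and $\gtFmt{G_2}$ to lie in the specific continuation $\gtFmt{G_i}$ named in the statement. I would do a case analysis on $\participant{r}$. If $\participant{r}=\participant{p}$, then by \cref{def:proj} the projection $\proj{p}{G_1}$ is an internal choice whose $i$-th outgoing edge in $\ltToAut{\proj{p}{G}}$ (see \cref{def:LT_to_RCFSM}) is exactly the required transition $\redact{\_}{\_}{\actionSend{p}{q}{\gtLbl{\ell_i},\tuple{\processVariable{x_i},\val{\_}}}{r_i}}$; by label uniqueness, the state $\proj{p}{G_2}$ (modulo stripping) is reachable in $\ltToAut{\proj{p}{G}}$ only through that $i$-th branch, so every local path of $\participant{p}$ ending at $\proj{p}{G_2}$ traverses this transition, and hence so does every global run. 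If $\participant{r}=\participant{q}$, the symmetric argument puts the matching receive on $\participant{q}$'s local path; by \cref{thm:run_send_before_receive} the corresponding send of $\participant{p}$ must then also occur in the global path before that receive.

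For the inductive case ($k > 1$), let $\participant{r_1}$ be the sender of $\gtFmt{H_1}$ and let $j$ be the branch of $\gtFmt{H_1}$ whose continuation contains the remainder of the chain (so in particular $\gtFmt{H_2}\in\gtFmt{H_{1}}$ via that branch). Applying the induction hypothesis to the strictly shorter chain $\gtFmt{H_1} <^\prime \cdots <^\prime \gtFmt{H_k}$ yields that every path ending in $s_\participant{r} = \proj{r}{G_2}$ contains the send at $\gtFmt{H_1}$'s $j$-th branch. By construction of $\ltToAut{\proj{r_1}{G}}$ this send fires from a configuration where $s_\participant{r_1} = \proj{r_1}{H_1}$, so restricting to the prefix of the path up to that firing gives a shorter path ending with $\participant{r_1}$ at $\proj{r_1}{H_1}$. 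Applying the base case to $\gtFmt{G_1} <^\prime \gtFmt{H_1}$ plants the required send at $\gtFmt{G_1}$'s $i$-th branch inside that prefix, which is itself contained in the original path.

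The main obstacle will be making the label-uniqueness arguments of the base case fully rigorous. We must be careful that $\proj{r}{G_2}$ corresponds to a uniquely identifiable state of $\ltToAut{\proj{r}{G}}$ that is reachable only via the $i$-th branch of $\proj{r}{G_1}$; this demands that the translation in \cref{def:LT_to_RCFSM} does not collapse structurally distinct subtypes together, which is precisely what label uniqueness buys. A subsidiary technicality is that the RCFSM states are stripped subtypes, so we will need a small lemma equating reachability in $\ltToAut{\proj{r}{G}}$ of $\operatorname{strip}(\proj{r}{G_2})$ with having executed the $i$-th branch of $\proj{r}{G_1}$; once available, both the base case and the induction glue together cleanly.
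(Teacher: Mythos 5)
Your proposal is correct and follows essentially the same route as the paper's proof: reduce to the single-step relation $<^\prime$ and handle the transitive closure by induction, then case-split on whether the sender of $\gtFmt{G_2}$ is $\participant{p}$ (internal choice, the send is on that participant's only local path to $\proj{p}{G_2}$ by label uniqueness) or $\participant{q}$ (external choice, the local receive forces a prior send via \cref{thm:run_send_before_receive}). If anything, you are more explicit than the paper about the inductive gluing of the chain and about invoking the send-before-receive lemma, which the paper compresses into ``follows directly from the global reduction rules.''
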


\begin{proof}
	Since $<$ is the transitive closure of $<^\prime$, we prove the result for
	$<^\prime$, the general case is then direct by induction.

	In that case, $\gtFmt{G_1} =
	\gtComm[i][I]{p}{q}{\ell}[x]{S}[r]{G}$ and $\gtFmt{G_2} =
	\gtComm[j][J]{r}{s}{\ell}[x]{S}[r]{G}$ ($\participant{r}=\participant{p}$
	or $\participant{r}=\participant{q}$) and $\gtFmt{G_2}\in\gtFmt{G_i}$ for
	some $i$.
	\begin{description}
		\item [Case $\participant{r}=\participant{p}$:]
		$\proj{p}{G_1} = \ltIntC[i][I]{q}{\ell}{x}{S}[r]{L}$, where each
		$\ltFmt{L_i}=\proj{p}{G_i}$. Since $\proj{p}{G_2}$ only appears in one
		of $\ltFmt{L_i}$, then all paths to $\proj{p}{G_2}$ in
		$\ltToAut{\proj{p}{G}}$ contain a transition to the state
		$\proj{p}{L_i}$, which is only reachable with a transition
		$\redact{\proj{p}{G_1}}{\ltFmt{L_i}}{\actionReceive{p}{q}{\gtLbl{\ell_i},\tuple{\processVariable{x_i},\val{\_}}}{\taut}}$.
		The result then follows directly from the global reduction rules.
		\item [Case $\participant{r}=\participant{q}$:]
		Therefore, $\proj{q}{G_1} = \ltExtC[i][I]{p}{\ell}{x}{S}{r}{L}$
		(with $\ltFmt{L_i} = \proj{q}{G_i}$), and $\proj{q}{G_2}\in\ltFmt{L_i}$.
		Since labels are uniquely used, $\proj{q}{G_2}$ only appears in
		$\ltFmt{L_i}$. Therefore, all paths to $\proj{q}{G_2}$ in
		$\ltToAut{\proj{q}{G}}$
		contain a transition
		$\redact{\proj{q}{G_1}}{\ltFmt{L_i}}{\actionReceive{p}{q}{\gtLbl{\ell_i},\tuple{\processVariable{x_i},\val{\_}}}{\taut}}$.
		The result then follows directly from the global reduction rules.
	\end{description}
\end{proof}

\begin{lemma}[Well-defined steps imply well-defined transitions]
	\label{thm:well_def_step_well_def_trans}
	For all projectable types $\gtFmt{G}$, for all well-defined steps
	$z=
	\participant{p}\rightarrow\participant{q}\msgM[\tuple{\gtLbl{\ell},
		\processVariable{x}}]\models\refinement{r}$ that occur in $\gtFmt{G}$,
	the transition
	$\redact{s_\participant{p}}{s_\participant{p}^\prime}{\actionSend{p}{q}{\gtLbl{\ell},
			\tuple{\processVariable{x}, \val{\_}}}{r}}$ in
			$\ltToAut{\proj{p}{G}}$ is
	well-defined in $\RcsOfType{G}$.
\end{lemma}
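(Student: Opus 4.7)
The plan is to fix a reachable configuration $\sigma = \tuple{\tuple{\dots, s_\participant{p}, \dots}, \queue{w}, \map{M}}$ of $\RcsOfType{G}$ in which the source state $s_\participant{p}$ of the target transition
$\redact{s_\participant{p}}{s_\participant{p}^\prime}{\actionSend{p}{q}{\gtLbl{\ell},\tuple{\processVariable{x},\val{\_}}}{r}}$
occurs, and show $\fv{r}\subseteq\domMap{M}$. The key observation is that in the centralised semantics of \cref{def:semantics_RGA}, the map only grows: neither \grrec{} nor \grsnd{} removes entries, they only update or add. So it suffices to exhibit, for each $\processVariable{y}\in\fv{r}$, a preceding transition in any run leading to $\sigma$ that introduces $\processVariable{y}$ into the map.

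Unfold the well-definedness hypothesis on the step $z$: for each $\processVariable{y}\in\fv{r}$, there exists $\gtFmt{G}_{\processVariable{y}} = \gtComm[i\in I]{r}{s}{\_}[x_i]{\_}[\_]{G_i}$ with $\gtFmt{G}_{\processVariable{y}} < \gtFmt{G_s}$ (where $\gtFmt{G_s}$ is the subterm of $\gtFmt{G}$ containing the target step), such that for some $i\in I$, $\gtFmt{G_s}\in\gtFmt{G_i}$ and $\processVariable{y} = \processVariable{x_i}$. By \cref{thm:step_implies_transition} applied to the participant $\participant{p}$ (which is either the sender or the receiver of $\gtFmt{G}_{\processVariable{y}}$, since $\gtFmt{G}_{\processVariable{y}} < \gtFmt{G_s}$ is built from $<^\prime$ chains that keep sender/receiver overlap), the state $s_\participant{p}$ corresponds to $\proj{p}{G_s}$, which occurs strictly inside the continuation of the $i$-branch of $\gtFmt{G}_{\processVariable{y}}$.

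Now apply \cref{thm:happens_before_correct} (or its straightforward participant-level variant): any run of $\RcsOfType{G}$ whose final configuration has $\participant{p}$ in state $s_\participant{p}$ must, at some earlier point, have fired the local transition of $\participant{p}$ corresponding to the $i$-branch of $\gtFmt{G}_{\processVariable{y}}$. That transition is either an \grsnd{} or \grrec{} carrying the payload variable $\processVariable{y}$, and in both cases the reduction rule updates the global map with $\updateMap{M^\prime}{y}{c}$ for some $\val{c}$. Hence $\processVariable{y}\in\domMap{M}$ in the reached $\sigma$.

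The only delicate point is making \cref{thm:happens_before_correct} applicable when $\participant{p}$ is one of the two participants of $\gtFmt{G}_{\processVariable{y}}$ (rather than a third-party ``r''); this is precisely the setting built into the happens-before relation, which only links communications sharing a participant, so the required local transition of $\participant{p}$ does exist in $\ltToAut{\proj{p}{G}}$. Iterating over all $\processVariable{y}\in\fv{r}$ yields $\fv{r}\subseteq\domMap{M}$, which is the definition of the target transition being well-defined.
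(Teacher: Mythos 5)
Your proposal is correct and follows essentially the same route as the paper's proof: both unfold the well-definedness of the step to obtain a happens-before predecessor $\gtFmt{G}_{\processVariable{y}}$ for each free variable, invoke \cref{thm:step_implies_transition} and \cref{thm:happens_before_correct} to force a preceding transition carrying $\processVariable{y}$ on every path to $s_\participant{p}$, and conclude via the monotone growth of the map. The only (immaterial) differences are that the paper argues by contradiction and reduces WLOG to a single free variable, whereas you argue directly and iterate over $\fv{r}$; your ``delicate point'' is, as you note, already discharged by the definition of the happens-before relation.
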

\begin{proof}
	Given a well-defined step
	$z=
	\participant{p}\rightarrow\participant{q}\msgM[\tuple{\gtLbl{\ell},
		\processVariable{x}}]\models\refinement{r}$, without loss of generality,
	consider $\refinement{r}$ has a single free
	variable $\processVariable{y}$.

	Since $z$ is a well-defined step occurring in $\gtFmt{G}$, let
	$\gtFmt{G_s}\in\gtFmt{G}$ the type $z$ is the step of.
	From \cref{thm:step_implies_transition}, there exist
	$t=\redact{s_\participant{p}}{s_\participant{p}^\prime}{\actionSend{p}{q}{\gtLbl{\ell},
			\tuple{\processVariable{x}, \val{\_}}}{r}}$ in
	$\ltToAut{\proj{p}{G}}$.
	From
	\cref{def:se:well_defined_step}, there exists
	$\gtFmt{G}_{\processVariable{y}} =
	\gtComm[i\in I]{r}{s}{\_}[y_i]{\_}[\_]{G_i}\in\gtFmt{G}$
	such that, $\gtFmt{G}_{\processVariable{y}}<\gtFmt{G_s}$ and,
	for one $i\in I$, $\gtFmt{G_s}\in\gtFmt{G_i}$ and
	$\processVariable{y} = \processVariable{y_i}$.

	By contradiction,
	suppose transition $t$ is not well-defined in $\RcsOfType{G}$, i.e.\ there
	is a reachable state $\sigma_\participant{p} = \tuple{\tuple{\dots,
			s_\participant{p}, \dots},
		\_, \map{M_\participant{p}}}$ such that $\processVariable{y}$ is not in
		the
	map
	$\map{M_\participant{p}}$.

	Consider a path $p$ to $s_\participant{p}$. From
	\cref{thm:happens_before_correct}, there is a transition
	$\sigma
	\csred[\redact{\_}{\_}{\actionSend{r}{s}{\gtLbl{\_},\tuple{\processVariable{y},\val{\_}}}{\_}}]
	\sigma^\prime$ with
	$\sigma^\prime = \tuple{\_, \queue{\_}, \map{M_{\sigma^\prime}}}$ in the
	path $p$.
	From the premises of rule $\grsnd{}$,
	$\processVariable{y}\in\domMap{M_{\sigma^\prime}}$. Since variables cannot
	be removed from the map, $\map{M_\participant{p}}$ contains
	$\processVariable{y}$. Contradiction.
\end{proof}

\StaticElisionRMPST*

\begin{proof}
	We prove this by showing that \cref{thm:static_elision:correct} applies to
	$\RcsOfType{G}$ and $\RcsOfType{G^\prime}$.

	First, since $\gtFmt{G^\prime}$ is $\gtFmt{G}$ where $\gtFmt{G_s}$ is
	replaced with $\gtFmt{G_{s^\prime}}$, i.e.\ the only difference is that
	$\refinement{r_i}$ is replaced with $\taut$, then the RCFSM of
	$\participant{p}$ in $\RcsOfType{G}$ contains
	$\redact{s_\participant{p}}{s_\participant{p}^\prime}
	{\actionGeneric{p}{q}{m}{r}}$,
	and $\RcsOfType{G^\prime}$ contains
	$\redact{s_\participant{p}}{s_\participant{p}^\prime}
	{\actionGeneric{p}{q}{m}{\taut}}$.
	Also, from \cref{thm:well_def_step_well_def_trans},
	$\redact{s_\participant{p}}{s_\participant{p}^\prime}
	{\actionGeneric{p}{q}{m}{r}}$ is well-defined in $\RcsOfType{G}$, and since
	$\processVariable[i]{x}\not\in\fv{r_i}$, it is also self-independent.
	This is the only change and the RCFSM of all others participants are
	unchanged.

	Second, we have to show that for each transition
	$t_w = \redact{r}{s}{\actionSend{\_}{\_}{\gtLbl{\_},
			\processVariable{x_w}}{r_w}}$
	in
	$\bigcup_{\processVariable{x}\in\fv{r}}\set{T}_{\processVariable{x}}$,
	for all map $\map{M}$,
	$\map{M}\models \refinement{r_w}$ entails $\map{M}\models\refinement{r}$.
	By contradiction, suppose there exists such a transition such that
	for all map $\map{M}$,
	$\map{M}\models \refinement{r_w}$ does not entails
	$\map{M}\models\refinement{r}$.
	From \cref{thm:transition_implies_step}, if such transition exists,
	a step
	$\participant{r}\rightarrow\participant{s}\msgM[\tuple{\gtLbl{\_},
		\processVariable[w]{x}}]\models\refinement{r_w}$ occurs in $\gtFmt{G}$.
	By hypothesis, $\map{M}\models\refinement{r_w}$ entails
	$\map{M}\models\refinement{r}$. Contradiction.

	Therefore, from \cref{thm:static_elision:correct}, there is a bisimulation
	between the states of $\RcsOfType{G}$ and those of $\RcsOfType{G^\prime}$.
\end{proof}

\section{RCS from Choreography Automata\label{sec:refined_choreo_automata}}
Despite those differences, with our framework, we can adapt CA to accommodate
refinements, which we call \emph{Refined Choreography Automata} (RCA). In this
paragraph, in order to show the versatility of our framework and to show that
the source formalism is not bound to RMPST only, we informally present how to
obtain RCS from a \emph{refined} variant of asynchronous \emph{choreography
	automata} \cite{COORDINATION20ChoreographyAutomata}, obtaining correctness
result w.r.t.\ valid refined traces (\cref{def:valid_refined_trace}).

In order to introduce refinements into CA, we can simply adapt the transition
labels in the CA, to add payloads and predicates. Adapting RCA projection would
then return RCFSMs, and we would obtain RCS with our framework, as with RMPST.
To illustrate this, we expand on (1)~p.~87 in
\cite{COORDINATION20ChoreographyAutomata}. In this example, a client
\participant{C} requests an entry from a server \participant{S}. A logger
\participant{L} is used to log the information. After receiving the result,
\participant{C} can ask \participant{S} to refine the result ($\gtLbl{ref}$), to
restart the protocol ($\gtLbl{ok}$), or to quit ($\gtLbl{bye}$). In the original
protocol, \participant{C} performs the choice.
Our goal is to let \participant{S} decide
whether \participant{C} is allowed to continue, or if \participant{C} must
terminate the protocol (in state $3$). For that, \participant{S} can embed a
boolean variable $\processVariable{q}$ in its communications, which is
then tested in the outgoing transitions of state $3$ (notice that
$\participant{C}$ can always decide to quit, independently of
$\processVariable{q}$, so we only test it for the transitions $\gtLbl{ok}$ and
$\gtLbl{ref}$). For the sake of clarity, we only show attached variables and
predicates when they are needed.

\begin{center}
	\begin{tikzpicture}[node distance=0.18\textwidth]
		\node [draw, rounded rectangle] (0) {0};
		\node [draw, rounded rectangle, right=1.8cm of 0] (1) {1};
		\node [draw, rounded rectangle, right=2.9cm of 1] (2) {2};
		\node [draw, rounded rectangle, right=of 2] (3) {3};
		\node [draw, rounded rectangle, right=of 3] (4) {4};
		\node [draw, rounded rectangle, above=1.5cm of 3] (5) {5};
		\node [draw, rounded rectangle, right=of 5] (6) {6};

		\coordinate (mid24) at ($ (2)!.6!(4) $);
		\coordinate [below=.8cm of mid24] (mid24down);

		\draw[-{Stealth[scale=1.5]}] (0) to node[anchor=north]
		{{$\participant{C}\rightarrow\participant{S}:\gtLbl{req}$}}
		(1);
		\draw[-{Stealth[scale=1.5]}] (1) to node[anchor=south]
		{{$\participant{S}\rightarrow\participant{C}:\msgM[\tuple{\gtLbl{res},
					\tuple{\processVariable{q}, \val{\_}}}]$}} (2);
		\draw[-{Stealth[scale=1.5]}] (2) to node[pos=.4, anchor=south]
		{{$\participant{S}\rightarrow\participant{L}:\gtLbl{cnt}$}} (3);
		\draw[-{Stealth[scale=1.5]}, bend right=10] (3) to node[anchor=north]
		{{$\participant{C}\rightarrow\participant{S}:\gtLbl{ref}$ if
				$\refinement{\neg q}$}} (4);
		\draw[-{Stealth[scale=1.5]}, bend right=10] (4) to node[anchor=south]
		{{$\participant{S}\rightarrow\participant{C}:\gtLbl{noRef}$}} (3);
		\draw[-{Stealth[scale=1.5]}, bend left] (4) |- (mid24down)
		node[anchor=north]
		{{$\participant{S}\rightarrow\participant{C}\msgM[\tuple{\gtLbl{res},
					\tuple{\processVariable{q}, \val{\_}}}]$}} -| (2);
		\draw[-{Stealth[scale=1.5]}, bend right=22] (3) to node[anchor=south]
		{{$\participant{C}\rightarrow\participant{S}:\gtLbl{ok}$ if
				$\refinement{\neg q}$}} (0);
		\draw[-{Stealth[scale=1.5]}] (3) to node[pos=.7, anchor=east]
		{{$\participant{C}\rightarrow\participant{S}:\gtLbl{bye}$}} (5);
		\draw[-{Stealth[scale=1.5]}] (5) to node[anchor=north]
		{{$\participant{S}\rightarrow\participant{L}:\gtLbl{bye}$}} (6);
	\end{tikzpicture}
\end{center}

By preserving predicates, the projection onto CFSM is adapted to
accommodate newly added refinements. We therefore obtain RCFSMs, which we
compose into RCS. As an illustration, the RCFSM of participant \participant{C}
would be the following:
\begin{center}
	\begin{tikzpicture}[node distance=2.5cm]
		\node [draw, rounded rectangle] (0) {\{0\}};
		\node [draw, rounded rectangle, right=of 0] (1) {\{1\}};
		\node [draw, rounded rectangle, right=3cm of 1] (2) {\{2,3\}};
		\node [draw, rounded rectangle, right=of 2] (4) {\{4\}};
		\node [draw, rounded rectangle, above=1.5cm of 2] (5) {\{5,6\}};

		\coordinate (mid24) at ($ (2)!.5!(4) $);
		\coordinate [below=.8cm of mid24] (mid24down);

		\draw[-{Stealth[scale=1.5]}] (0) to node[anchor=north, sloped]
		{$\actionSend{C}{S}{\gtLbl{req}, \_}{\taut}$} (1);
		\draw[-{Stealth[scale=1.5]}] (1) to node[anchor=north, sloped]
		{$\actionReceive{S}{C}{\gtLbl{res},
				\tuple{\processVariable{q},\val{\_}}}{\taut}$} (2);
		\draw[-{Stealth[scale=1.5]}, bend right=15] (2) to node[anchor=south,
		sloped]
		{$\actionSend{C}{S}{\gtLbl{ok}, \_}{\neg q}$} (0);
		\draw[-{Stealth[scale=1.5]}] (2) to node[pos=.7, anchor=east]
		{$\actionSend{C}{S}{\gtLbl{bye}, \_}{\taut}$} (5);
		\draw[-{Stealth[scale=1.5]}] (2) to node[anchor=north, sloped]
		{$\actionSend{C}{S}{\gtLbl{ref}, \_}{\neg q}$} (4);
		\draw[-{Stealth[scale=1.5]}, bend right=15] (4) to node[anchor=south,
		sloped]
		{$\actionReceive{S}{C}{\gtLbl{noRef}, \_}{\taut}$} (2);
		\draw[-{Stealth[scale=1.5]}, bend left=40] (4) |- (mid24down)
		node[anchor=north, sloped] {$\actionReceive{S}{C}{\gtLbl{res},
				\tuple{\processVariable{q},\val{\_}}}{\taut}$} -| (2);
	\end{tikzpicture}
\end{center}

Similarly, we can obtain the RCFSM of \participant{S}
and \participant{L}. Thanks
to \cref{thm:traces_GRA_valid_refined}, we know that traces produced are valid
refined traces.

\section{Artifact}
The submitted artifact contains scripts to reproduce the benchmark results.
In order to run the benchmarks, the following is required (version numbers
indicate the version we tested our artifact on, we expect it to work on newer version, although not tested):
\begin{itemize}
	\item a linux system (tested on Debian)
	\item an internet connection
	\item an Ocaml compiler (4.11.2) release with Dune (3.6.1) and Opam (2.0.8).
	\item Regarding Opam, one needs to install:
		\begin{itemize}
			\item the ocamlgraph (2.0.0) library:\\
				\texttt{opam install ocamlgraph}
			\item a specific branch of \(\nu\)Scr:\\
				\texttt{git clone -b https://github.com/Bromind/nuscr/tree/develop-refinements}\\
				\texttt{cd nuscr}\\
				\texttt{opam pin add nuscr.dev-refinements-local .}
		\end{itemize}
	\item a Rust compiler (rustc 1.67.0-nightly (e9493d63c 2022-11-16)) with Cargo (rustc 1.67.0-nightly (e9493d63c 2022-11-16)).
	\item Python3 (3.9.2) with the libraries: argparse, json, unittest, numpy, statistics, matplotlib, and scipy.
\end{itemize}

The artifact is composed of 2 parts: one for \cref{tab:benchmark_localisation}
and one for \cref{tab:runtime_benchmarks}.
Each part is covered by a shell script, which manages the \emph{download}, the
\emph{preparation} and the \emph{cleanup} of the system.

\subsection{Refined Rumpsteak v.\ Vanilla Rumpsteak}
The script \texttt{compare.sh} performs the comparision between Rumpsteak with refinements and without refinements.
The scripts prints detailled results on \texttt{stdout}.
Each microbenchmark begins with:
\begin{center}
	\texttt{********* Analysis NAME\_OF\_BENCHMARK **********}
\end{center}
In \cref{tab:runtime_benchmarks}, we report the \(p\)-value printed on the
line \texttt{MannwhitneyuResult(statistic=..., pvalue=...)}
In addition, the scripts produces a folder in
\texttt{/tmp/Rumpsteak\_benchmarks.XXX} (where \texttt{XXX} is a random sequence
of three characters) which contains records of each run of each microbenchmark.
The runtimes reported in \cref{tab:runtime_benchmarks} are the median runtime
for both the \texttt{vanilla} and \texttt{refinements} measurements, which are
shown, for each benchmark after the lines:
\begin{center}
	***** 1st quartile, median, 3rd quartile (vanilla) *******
\end{center}
and
\begin{center}
	***** 1st quartile, median, 3rd quartile (refinements) *******
\end{center}

\subsection{Evaluation of the localisation algorithm}
The script \texttt{dynamic\_verify.sh} measures the runtime of the localisation tools.
After setting up the tools, the scripts runs the benchmarks and prints results
on \texttt{stdout}. Each microbenchmark begins with:
\begin{center}
	\texttt{********** NAME\_OF\_BENCHMARK ***********}
\end{center}

For each benchmark, we output a result in the form:
\begin{center}
	\texttt{Time (mean ± \(\sigma\)):      22.4 ms ±   0.8 ms    [User: 10.5 ms, System: 11.4 ms]}\\
\end{center}
Results reported in \cref{tab:benchmark_localisation} are the printed mean and
standard deviation as output by the script.

}{}

\end{document}